
\documentclass[a4paper,11pt,reqno]{amsart}



\usepackage{amsmath, amsfonts, amssymb, amsthm, amscd}
\usepackage{geometry}
\usepackage[T1]{fontenc}			
\usepackage[utf8]{inputenc}
\usepackage[english]{babel}
\usepackage[square,numbers]{natbib}
\usepackage[monochrome]{color}	
\usepackage{graphicx,float}
\usepackage[pdftitle=Risk measures and progressive enlargement of filtrations: a BSDE approach,%
pdfauthor={Alessandro Calvia, Emanuela Rosazza Gianin},%
pdfstartview={XYZ null null 1.5},%
pdftex,hidelinks]{hyperref}



\newcommand{\bbE}{{\ensuremath{\mathbb E}} }
\newcommand{\bbF}{{\ensuremath{\mathbb F}} }
\newcommand{\bbG}{{\ensuremath{\mathbb G}} }
\newcommand{\bbH}{{\ensuremath{\mathbb H}} }

\newcommand{\bbP}{{\ensuremath{\mathbb P}} }
\newcommand{\bbQ}{{\ensuremath{\mathbb Q}} }


\newcommand{\cA}{{\ensuremath{\mathcal A}} }
\newcommand{\cB}{{\ensuremath{\mathcal B}} }

\newcommand{\cD}{{\ensuremath{\mathcal D}} }
\newcommand{\cE}{{\ensuremath{\mathcal E}} }
\newcommand{\cF}{{\ensuremath{\mathcal F}} }
\newcommand{\cG}{{\ensuremath{\mathcal G}} }
\newcommand{\cH}{{\ensuremath{\mathcal H}} }

\newcommand{\cM}{{\ensuremath{\mathcal M}} }
\newcommand{\cN}{{\ensuremath{\mathcal N}} }
\newcommand{\cO}{{\ensuremath{\mathcal O}} }
\newcommand{\cP}{{\ensuremath{\mathcal P}} }
\newcommand{\cQ}{{\ensuremath{\mathcal Q}} }

\newcommand{\cS}{{\ensuremath{\mathcal S}} }





\newcommand{\dd}{{\ensuremath{\mathrm d}} }
\newcommand{\de}{{\ensuremath{\mathrm e}} }

\newcommand{\dB}{{\ensuremath{\mathrm B}} }

\newcommand{\dL}{{\ensuremath{\mathrm L}} }

\newcommand{\R}{\mathbb{R}}

\newcommand{\N}{\mathbb{N}}

\newcommand{\prog}{\ensuremath{\mathrm{Prog}}}


\newcommand{\norm}[1]{\lVert #1 \rVert}

\newcommand{\ind}{\ensuremath{\mathbf{1}}}

\DeclareMathOperator*{\esssup}{\mathrm{ess\,sup}}

\newcommand{\coloneqq}{\mathrel{\mathop:}=}
\newcommand{\eqqcolon}{=\mathrel{\mathop:}}


\renewcommand{\epsilon}{\varepsilon}

\newcommand{\thetavartheta}
{
\let\temp\theta
\let\theta\vartheta
\let\vartheta\temp
}

\newcommand{\phivarphi}
{
\let\temp\phi
\let\phi\varphi
\let\varphi\temp
}


\theoremstyle{plain}
\newtheorem{theorem}{Theorem}[section]
\newtheorem*{theorem*}{Theorem}
\newtheorem{lemma}[theorem]{Lemma}
\newtheorem*{lemma*}{Lemma}
\newtheorem{proposition}[theorem]{Proposition}
\newtheorem*{proposition*}{Proposition}

\theoremstyle{definition}
\newtheorem{definition}{Definition}[section]
\newtheorem{assumption}{Assumption}[section]
\newtheorem*{assumption*}{Assumption}

\newtheorem*{hypothesis*}{Hypothesis}
\newtheorem{example}{Example}[section]

\theoremstyle{remark}
\newtheorem{remark}{Remark}[section]
\newtheorem*{remark*}{Remark}


\frenchspacing

\numberwithin{equation}{section}

\newcommand{\mail}[1]{\href{mailto:#1}{\normalfont\texttt{#1}}}

\geometry{a4paper,scale={0.72,0.75},marginratio={1:1},footskip=7mm,headsep=10mm}


\makeatletter
\def\@setthanks{\vspace{-\baselineskip}\def\thanks##1{\@par##1\@addpunct.}\thankses}
\makeatother


\title[Risk measures, enlargement of filtration and BSDEs]{Risk measures and progressive enlargement of filtration: a BSDE approach}
\author[A.~Calvia]{Alessandro~Calvia}
\author[E.~Rosazza~Gianin]{Emanuela~Rosazza~Gianin}
\thanks{\noindent A.~Calvia
\\
LUISS "Guido Carli", Department of Economics and Finance, viale Romania 32, 00197, Roma (Italy).
\\
E-mail: \mail{acalvia@luiss.it}.
\medskip
\\
E.~Rosazza~Gianin
\\
University of Milano-Bicocca, Department of Statistics and Quantitative Methods, via Bicocca degli Arcimboldi 8, 20126, Milano (Italy).
\\
E-mail: \mail{emanuela.rosazza1@unimib.it}.
\medskip
\\
This work was done while A. Calvia was Postdoctoral Researcher at the University of Milano-Bicocca, Department of Statistics and Quantitative Methods.
\medskip
}

\begin{document}
	
\begin{abstract}
We consider dynamic risk measures induced by Backward Stochastic Differential Equations {\color{blue} (BSDEs)} in enlargement of filtration setting. On a fixed probability space, we are given a standard Brownian motion and a pair of random variables $(\tau, \zeta) \in (0,+\infty) \times E$, with $E \subset \R^m$, that enlarge the reference filtration, i.e., the one generated by the Brownian motion. These random variables can be interpreted financially as a default time and an associated mark. After introducing a BSDE driven by the Brownian motion and the random measure associated to $(\tau, \zeta)$, we define the dynamic risk measure $(\rho_t)_{t \in [0,T]}$, for a fixed time $T > 0$, induced by its solution. We prove that $(\rho_t)_{t \in [0,T]}$ can be decomposed in a  pair of risk measures, acting before and after $\tau$ and we characterize its properties giving suitable assumptions on the driver of the BSDE. Furthermore, we prove an inequality satisfied by the penalty term associated to the robust representation of $(\rho_t)_{t \in [0,T]}$ and we {\color{blue}discuss the dynamic entropic risk measure case, providing examples where it is possible to write explicitly its decomposition and simulate it numerically.}
\end{abstract}

%

%
%

\maketitle

\noindent \textbf{Keywords:} risk measures, $g$-expectations, BSDEs, enlargement of filtration.

\smallskip

\noindent \textbf{AMS 2010:} 60H30, 91G80, 60J75, 60G44.

\thetavartheta

\smallskip

\section{Introduction} \label{sec:intro}
Risk measures have been introduced in an axiomatic way in {\color{blue} the seminal papers of \citet{artzner1999:coherentriskmeas} and \citet{delbaen2002}} in order to quantify the riskiness of financial positions.
\color{blue}
More precisely, the aforementioned authors introduced the so called coherent risk measures that can be interpreted as capital margins to be deposited to cover the riskiness of a position or, better, as the minimal cash amounts to be added to a position to make it acceptable whenever considered unacceptable.

Several extensions of coherent risk measures have been provided in the literature, both generalizing the given axioms of coherence and moving from a static to a dynamic framework.

Starting from the generalization at the level of the axioms, motivated by liquidity reasons \citet{follmer2002convex} and \citet{frittelli-rg2002} introduced convex risk measures by replacing subadditivity and positive homogeneity with the weaker axiom of convexity. Further extensions of coherent and convex risk measures can be also found in \citep{elkaroui2009:subaddriskmeas,frittelli-maggis2011} and in the references therein.

Differently from the first works on risk measures where a one-period model was considered, dynamic risk measures were introduced to evaluate the riskiness of financial positions at any time before the time horizon, hence taking into account a dynamic framework.
An early work discussing dynamic risk measures as a collection of conditional risk measures indexed by time (the concept adopted also in this paper) is the work by \citet{wang1999:dynrisk}, where only finite probability spaces are considered. Papers by \citet{riedel2004:dynamic, roorda2005:coherent} (considering finite probability spaces) and \citet{artzner2007:cohermultiper, delbaen2006structure} (with general probability spaces) followed, studying coherent dynamic risk measures in a discrete-time setting. Convex dynamic risk measures in a discrete-time setting were analyzed in \citet{cheridito2006:dynamic, detlefsen-scandolo2005, follmer2006:convexrisk}. Among the first works discussing dynamic risk measures in a continuous-time setting we can cite \citet{barrieu2009:pricing,bion-nadal2008,rosazza:riskgexpect}.
A key issue on dynamic risk measures consists then in the inter-temporal relations or, better, in the so called time-consistency of the risk measure and many of the aforementioned works deal with this topic.

Various approaches can be used to study dynamic risk measures: one of them is based on their strong connection with \emph{Backward Stochastic Differential Equations} (\emph{BSDEs}, for short). These stochastic equations have been extensively studied starting from the seminal paper of \citet{pardouxpeng1990:BSDE}\footnote{{\color{blue}An earlier study of linear BSDEs was presented in \citet{bismut1973:conjconv}.}} and nowadays are a well established and powerful tool in Mathematical Finance, as shown for instance in \citet{elkaroui1997:BSDEs}.

The connection between dynamic risk measures and BSDEs arises from the theory of \emph{conditional $g$-expectations} (or \emph{nonlinear expectations}), introduced by \citet{peng1997:BSDEsandgexp} by means of the solution of a BSDE driven by a Wiener process with generator $g$, and this link was studied in \citet{barrieu2009:pricing} and \citet{rosazza:riskgexpect}. In these papers the authors study also the relationship between the properties of the driver $g$ of the BSDE and the properties of the induced dynamic risk measure, as translation invariance, subadditivity and convexity, to name a few. Moreover, strong time consistency of the induced dynamic risk measure is established \emph{via} the flow property of BSDEs. Conditions on the driver $g$ have been considerably weakened in \citet{jiang2008:gexpect}, preserving the main properties of the associated dynamic risk measure.

Another important paper on the theory of nonlinear expectations by \citet{coquet2002:gexpect} provides the groundwork to answer the following question: under which assumptions is it possible to find a driver such that the associated BSDE induces a given dynamic risk measure? The answer has been proved to be affirmative in \citet{rosazza:riskgexpect} in a Brownian filtration framework. In this context the theory of $g$-expectations has been adopted also to represent, under suitable assumptions, the penalty term associated to a dynamic risk measure (see \citet{delbaen2010:reprpenalty}) and to study cash-subadditive risk measures in \citet{elkaroui2009:subaddriskmeas}.

Nonlinear expectations in the framework of BSDEs driven by a Brownian motion and an independent Poisson random measure were studied first in \citet{royer2006:nonlinexp} and the connection between dynamic risk measures and this class of BSDEs has been analyzed in \citet{quenez:BSDEJ}. In this framework, too, the theory of $g$-expectations has been adopted also to represent, under suitable assumptions, the penalty term associated to a dynamic risk measure (see \citet{tangwei2012:representation}, further generalized by \citet{laevenstadje2014:robustportfolio} to the case where the sources of noise are a Brownian motion and an independent real-valued marked point process).

The aim of the paper is to further extend results on dynamic risk measures induced by BSDEs, defining them and studying their properties, including time-consistency, in the framework of enlargement of filtration and under very general assumptions on the driver of the BSDE.
\normalcolor
More precisely, we show how to induce a dynamic risk measure from a BSDE whose driving noise is given by a Brownian motion and a {\color{blue} random counting measure. More specifically, to simplify the exposition, we assume that this random measure is a Dirac delta concentrated at a pair given by a random time and a random mark associated to it. However, it is possible to generalize all the arguments provided in this work to the case where the random counting measure is associated to a marked point process with a finite number of jumps in a fixed time interval.} In terms of the underlying information flow, this situation corresponds to a progressive enlargement of a Brownian reference filtration with information brought by the occurrence of random events at random times. This may describe, for instance, the presence of defaults in a financial market and the different behavior of risk measures before, between and after consecutive defaults.
{\color{blue}The theory of enlargement of filtration is deeply studied in the literature, starting from the seminal works of J.~Jacod, T.~Jeulin and M.~Yor (see, e.g., \citep{jacod1985:grossissement,jeulin1980:grossissement,jeulinyor1985:grossissements}).} We also refer the reader to~\citet{aksamitjeanblanc2017:filtrenl} (and the references therein) for an up-to-date account of enlargement of filtration theory and its applications to finance.

The aforementioned class of BSDEs with jumps that we are interested in was introduced by Kharroubi and Lim in \citep{kharroubi:progrenl}. There the authors have proved existence and uniqueness results, connecting such BSDEs with jumps to systems of Brownian BSDEs. This method enables a decomposition of the processes giving a solution to the former BSDEs into corresponding processes that are solution of the latter ones between each pair of consecutive jump times.

Here we show that dynamic risk measures induced by these BSDEs with jumps admit a similar decomposition into two risk measures, one before and the other after the default time. Furthermore, we prove that standard properties of dynamic risk measures are guaranteed by similar properties of the driver of these BSDEs and, finally, that induced dynamic risk measures are time-consistent. From a financial point of view, the decomposition of the \emph{global} risk measure into different \emph{local} ones seems to be reasonable. Before and after a default time, indeed, the risk measure should be updated in order to take into account the new information. A priori, there is no reason to impose that the risk measure remains unchanged.

The paper is organized as follows. In Section \ref{sec: decomposition-risk-measures} we introduce the reference and the enlarged filtrations we will deal with, {\color{blue} we provide the main assumptions of this paper and we give a brief recapitulation of the main results on the class of BSDEs with jumps introduced in~\citet{kharroubi:progrenl}. We also prove a new result on \emph{a priori} estimates on these BSDEs.
Then, we introduce dynamic risk measures induced by this class of BSDEs with jumps on a progressively enlarged filtration and we provide the decomposition of these dynamic risk measures.} The properties of these risk measures, {\color{blue} including time-consistency}, are studied in Section \ref{sec: properties-rm}, while the impact of enlargement of filtration on their dual representation is investigated in Section \ref{sec: representation-rm}. {\color{blue} In Section~\ref{sec:example} we discuss in detail the dynamic entropic risk measure defined on the progressively enlarged filtration: we show how to construct explicitly this type of risk measure, pointing out that, under our framework, it is possible to update it, upon the arrival of the information provided by the random time and the random mark, by allowing the associated risk-tolerance parameter -- or, equivalently, the risk-aversion parameter -- to change. We also provide closed formulas and numerical simulations for the risk evaluation of two specific financial claims. In particular, as will be explained in Section~\ref{sec:example}, the first claim has a structure that can be considered only in the framework of progressive enlargement of filtration and this clarifies, once more, the features of the dynamic risk measures studied in this work that, to the best of our knowledge, have not yet been considered in the literature.}

\subsection{Notation}
We collect here for the readers' convenience all the relevant notations and conventions that will be used throughout the paper.

The symbol $\R^+$ is a shorthand to denote the set $(0,+\infty)$. We denote by $\int_a^b$ the integral on the interval $(a,b]$, for $a, b \geq 0$.

Given a non-empty topological space $\mathcal{X}$, $\cB(\mathcal{X})$ denotes the Borel $\sigma$-algebra on $\mathcal{X}$ and we define $\dB(\mathcal{X}) \coloneqq \{f: \mathcal{X} \to \R, \text{Borel-measurable}\}$; we equip this space with the pointwise convergence topology.
We denote by $\cM_1(\cA)$ the collection of probability measures on a measurable space $(\Omega, \cA)$.

Given a filtration $\bbF$ on some probability space, $\cP(\bbF)$ (resp. $\cO(\bbF)$, $\prog(\bbF)$) denotes the $\bbF$-predictable (resp. $\bbF$-optional, $\bbF$-progressive) $\sigma$-algebra on $\Omega \times [0,+\infty)$, i.e., the $\sigma$-algebra generated by left-continuous (resp. right-continuous, progressively measurable) $\bbF$-adapted processes. The class of $\bbF$-predictable (resp. $\bbF$-optional, $\bbF$-progressive) processes will be indicated by $\cP_\bbF$ (resp. $\cO_\bbF$, $\prog_\bbF$). Finally, the symbol $\cP_\bbF(\R^+,E)$ (resp. $\cO_\bbF(\R^+,E)$, $\prog_\bbF(\R^+,E)$) denotes the class of indexed processes $Y$ such that the map {\color{blue}$\Omega \times [0,+\infty) \times \R^+ \times E \ni (\omega, t, \theta, e) \mapsto Y_t(\omega, \theta, e)$} is $\cP(\bbF) \otimes \cB(\R^+) \otimes \cB(E)$-measurable (resp. $\cO(\bbF) \otimes \cB(\R^+) \otimes \cB(E)$-measurable, $\prog(\bbF) \otimes \cB(\R^+) \otimes \cB(E)$-measurable).

\section{Enlargement of filtration, BSDEs and related risk measures} \label{sec: decomposition-risk-measures}
\subsection{Setting of the problem}
Let $T > 0$ be a fixed time horizon and $(\Omega, \cA, \bbP)$ be a complete probability space, supporting a standard $d$--dimensional Brownian motion $W = (W_t)_{t \geq 0}$, a random variable $\tau$ with values in $\R^+$, describing a \emph{default time}, and an associated \emph{mark}, i.e. a random variable $\zeta$ with values in a measurable subset $E$ of $\R^m$. We denote by $\bbF = (\cF_t)_{t \geq 0}$ the completed natural filtration generated by $W$. To be more precise, we set $\cF_t \coloneqq \sigma(W_s \colon 0 \leq s \leq t) \lor \cN$ for all $t \geq 0$, where $\cN \subset \cA$ is the collection of all $\bbP$--null sets. $\bbF$ plays the role of a \emph{reference filtration}, that we are going to enlarge with the information coming from the default time and the associated random mark.

We will work throughout this paper under the following fundamental assumption.
\begin{assumption}[Density hypothesis]\label{hp:density}
	For any $t \geq 0$, the conditional distribution of the pair $(\tau, \zeta)$ given $\cF_t$ is {\color{blue} equivalent to} the Lebesgue measure on $\R^+ \times E$. In particular, there exists a strictly positive $\gamma \in \cP_\bbF(\R^+, E)$ such that
	\begin{equation*}
			\bbP\bigl((\tau, \zeta) \in C \mid \cF_t \bigr) = \int_C \gamma_t(\theta, e) \, \dd \theta \, \dd e, \quad C \in \cB(\R^+) \otimes \cB(E), \, t \geq 0.
	\end{equation*}
\end{assumption}

\begin{remark}
The importance of the density hypothesis will be underlined extensively in what follows. To start, we want to emphasize that, since we are assuming that $\gamma > 0$, our assumption is stronger than Jacod's one~\citep{jacod1985:grossissement} (also adopted in \citep{kharroubi:progrenl}), i.e., the existence of a deterministic measure on $\R^+ \times E$ that dominates all the conditional laws. It is also slightly more stringent than the $(\cE)$-hypothesis in~\citep{callegaro2013:carthaginian}, i.e., the equivalence of all the conditional laws to the law of $(\tau, \zeta)$.

In fact, our assumption implies the $(\cE)$-hypothesis, since, applying the Fubini-Tonelli theorem and recalling that $\bigl(\gamma_t(\theta,e)\bigr)$ is {\color{blue} an $\bbF$-martingale} for any $(\theta,e) \in \R^+\times E$ (see \citep[Lemme (1.8)]{jacod1985:grossissement}), we have
\begin{equation*}
	\bbP\bigl((\tau, \zeta) \in C\bigr) = \bbE\bigl[\bbE[\ind_C(\tau, \zeta) \mid \cF_T]\bigr] = \bbE\biggl[\int_C \gamma_T(\theta, e) \, \dd \theta \, \dd e\biggr] = \int_C \gamma_0(\theta,e) \, \dd \theta \, \dd e.
\end{equation*}
In particular, the law of $(\tau, \zeta)$ is equivalent to the Lebesgue measure on $\R^+ \times E$.
\end{remark}

Let us define for all $t \geq 0$ the following families of $\sigma$-algebras:
\begin{itemize}
	\item $\cD_{t} \coloneqq \sigma(\ind_{\tau \leq s}, \, 0 \leq s \leq t) \lor \sigma(\zeta \ind_{\tau \leq s}, \, 0 \leq s \leq t)$;
	\item $\cG_t \coloneqq \bigcap_{s > t} \bigl(\cF_s \lor \cD_s\bigr)$.
	\item $\cH_t \coloneqq \cF_t \lor \sigma(\tau,\zeta)$.
\end{itemize}
We call $\bbG \coloneqq (\cG_t)_{t \geq 0}$ the \emph{progressively enlarged filtration} and $\bbH \coloneqq (\cH_t)_{t \geq 0}$ the \emph{initially enlarged filtration}.

\begin{remark}\label{rem:filtrations}
It is known, (see, e.g., \citep{callegaro2013:carthaginian,jeanblanc2009:progrenl}), that $\bbF \subset \bbG \subset \bbH$ and that the filtration $\bbG$ coincides with $\bbF$ on the event $\{t < \tau\}$ and with $\bbH$ on the event $\{t \geq \tau\}$. This fact will be fundamental in the sequel.
\end{remark}

It is possible to give a decomposition of any $\cG_t$-measurable random variable and of $\bbG$-predictable processes into indexed $\cF_t$-measurable random variables and $\bbF$-predictable  processes, where the indexes are given by the jump time and the associated random mark. These results are given in \citep{callegaro2013:carthaginian,pham:progrenl} (or are easy generalizations of them) and we recall them here for the readers' convenience.

\begin{lemma}[{\citep[Prop. 2.8]{callegaro2013:carthaginian} and \citep[Lemma 2.1]{pham:progrenl}}]\label{lemma:procdecomp}
	It holds that:
	\begin{enumerate}
	\item For any $t \geq 0$, a random variable $\xi$ is $\cG_t$-measurable if and only if it is of the form
	\begin{equation*}
		\xi(\omega) = \xi^0(\omega) \ind_{t < \tau(\omega)} + \xi^1(\omega, \tau(\omega),\zeta(\omega)) \ind_{t \geq \tau(\omega)},
	\end{equation*}
	for some $\cF_t$-measurable random variable $\xi^0$ and a family of $\cF_t \otimes \cB(\R^+) \otimes \cB(E)$-measurable random variables $\xi^1(\cdot, \theta, e), \, \theta \in [0,t], \, e \in E$.
	\item A process $Y=(Y_t)_{t \geq 0}$ is $\bbG$-predictable if and only if it is of the form
	\begin{equation*}
		Y_t = Y_t^0 \ind_{t \leq \tau} + Y_t^1(\tau, \zeta) \ind_{t > \tau}, \quad t \geq 0,
	\end{equation*}
	where $Y^0 \in \cP_\bbF$ and $Y^1 \in \cP_\bbF(\R^+, E)$.
	\end{enumerate}	
\end{lemma}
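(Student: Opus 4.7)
The plan is to prove both items by the classical (functional) monotone class theorem, exploiting the structural fact recalled in Remark~\ref{rem:filtrations}: $\bbG$ coincides with $\bbF$ on $\{t<\tau\}$ and with $\bbH=\bbF\lor\sigma(\tau,\zeta)$ on $\{t\geq\tau\}$. This suggests looking for a decomposition whose first summand lives naturally on the $\bbF$-side of $\tau$ and whose second summand encodes the additional information from $(\tau,\zeta)$.

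For part (a), the sufficiency is immediate: $\{t<\tau\}$ is the complement in $\cD_t$ of $\{\tau\leq t\}$, hence lies in $\cD_t\subset\cG_t$, so $\xi^0\ind_{t<\tau}$ is $\cG_t$-measurable; and on $\{t\geq\tau\}$ one has $\cG_t=\cH_t$, so any $\cF_t\otimes\cB(\R^+)\otimes\cB(E)$-measurable $\xi^1$ composed with the $\cG_t$-measurable pair $(\tau,\zeta)$ gives a $\cG_t$-measurable random variable on that event. For the necessity, let $\cV$ be the collection of bounded $\cG_t$-measurable random variables admitting the claimed decomposition. It is clearly a vector space containing the constants and, because $\ind_{t<\tau}$ and $\ind_{t\geq\tau}$ are bounded and $\{t\geq\tau\}$ is fixed, it is stable under bounded monotone pointwise limits. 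A $\pi$-system generating $\cG_t=\bigcap_{s>t}(\cF_s\lor\cD_s)$ is given by the sets $F\cap\{\tau>s\}$ and $F\cap\{\tau\leq s,\,\zeta\in B\}$ with $s>t$, $F\in\cF_s$, $B\in\cB(E)$: the indicators of the first type fit $\cV$ with $\xi^1\equiv 0$ and an $\cF_s$-measurable $\xi^0$, while those of the second type fit $\cV$ with $\xi^0\equiv 0$ and $\xi^1(\omega,\theta,e)=\ind_F(\omega)\ind_{[0,s]}(\theta)\ind_B(e)$; taking the intersection $s\downarrow t$ along a countable sequence and using right-continuity of $\bbG$ produces $\cF_t$-measurable factors. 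The functional monotone class theorem extends the decomposition to all bounded $\cG_t$-measurable random variables, and truncation concludes the proof.

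Part (b) follows the same scheme applied to $\cP(\bbG)$. Sufficiency is immediate: $\ind_{[\![0,\tau]\!]}$ is $\bbG$-predictable because $\tau$ is a $\bbG$-stopping time, $\cP_\bbF\subset\cP_\bbG$, and on the predictable set $]\!]\tau,\infty[\![$ the pair $(\tau,\zeta)$ is $\bbG$-predictably indexing (since on $\{t>\tau\}$, $\bbG$ coincides with $\bbH$ and $(\tau,\zeta)$ is already known strictly before $t$). For necessity, recall that $\cP(\bbG)$ is generated by elementary processes of the form $Z\,\ind_{(u,v]}$ with $Z$ bounded and $\cG_u$-measurable; part (a) applied to such $Z$ produces the required decomposition on generators, with the indexed component $\cF_u\otimes\cB(\R^+)\otimes\cB(E)$-measurable. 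A standard monotone class argument on the collection of $\bbG$-predictable processes admitting such a decomposition, closed under bounded monotone limits, then extends it to all of $\cP_\bbG$, and truncation removes boundedness.

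The main technical obstacle I anticipate is the joint measurability of $\xi^1$ (resp.\ $Y^1$) in $(\omega,\theta,e)$ and the passage from the $\cF_s$-factor in the generating sets with $s>t$ to the required $\cF_t$-factor; this is where the right-continuity built into the definition of $\bbG$ plays its role and where the density hypothesis on $(\tau,\zeta)$ ensures that the indexed versions are uniquely defined up to $\bbP\otimes\dd\theta\dd e$-null sets. Once this point is handled on the $\pi$-system of generators, the rest of the argument is a routine monotone class extension.
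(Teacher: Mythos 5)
This lemma is not proved in the paper: it is recalled from \citep[Prop.\ 2.8]{callegaro2013:carthaginian} and \citep[Lemma 2.1]{pham:progrenl}, so the comparison is with the standard literature proof, whose overall strategy (monotone class on generators, then a limit along $s\downarrow t$) you correctly identify; the sufficiency direction and the reduction of (b) to (a) are fine in outline. However, the necessity step of (a), which is the heart of the proof, contains a genuine flaw. The family of sets $F\cap\{\tau>s\}$ and $F\cap\{\tau\le s,\,\zeta\in B\}$ with $s>t$ and $F\in\cF_s$ is not usable as a $\pi$-system for $\cG_t$. First, its members are in general not $\cG_t$-measurable (take $F\in\cF_s\setminus\cF_t$), so they do not belong to your class $\cV$ of bounded $\cG_t$-measurable variables admitting the decomposition, and the base case of the monotone class theorem already fails. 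Second, the $\sigma$-algebra it generates is not $\cG_t$: it overshoots, since choosing $B=E$ and uniting the two types recovers every $F\in\cF_s$ for every $s>t$, hence it contains $\bigvee_{s>t}\cF_s\supsetneq\cG_t$; and it undershoots, since two scenarios with the same Brownian path and the same mark but different default times $u_1<u_2<t$ are not separated by any set of the family, while they are separated by $\{\tau\le u_1\}\in\cD_t\subset\cG_t$. (The family is also not closed under intersections, though that is a lesser issue.) The patch you propose, ``take $s\downarrow t$ after the monotone class step,'' cannot be implemented: the decomposition delivered by the theorem for a general element of the (wrong) generated $\sigma$-algebra is not attached to any single level $s$, so there is nothing along which to pass to the limit.

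The missing idea is to decouple the two approximations. First fix $s>t$ and run the monotone class argument for $\cF_s\lor\cD_s$, using the genuine $\pi$-system $\{F\cap\{\tau>u\}\}\cup\{F\cap\{\tau\le u,\,\zeta\in B\}\}$ with $F\in\cF_s$, $u\le s$, $B\in\cB(E)$: this yields, for every bounded $\cF_s\lor\cD_s$-measurable $\xi$, a decomposition with $\xi^{0}$ $\cF_s$-measurable and $\xi^{1}$ $\cF_s\otimes\cB(\R^+)\otimes\cB(E)$-measurable. Then, since a $\cG_t$-measurable $\xi$ is $\cF_{t_n}\lor\cD_{t_n}$-measurable for $t_n=t+1/n$, take the level-$t_n$ decompositions and set $\xi^0\coloneqq\limsup_n\xi^{0,n}$, $\xi^1\coloneqq\limsup_n\xi^{1,n}$, using $\{\tau>t_n\}\uparrow\{\tau>t\}$ and the right-continuity $\cF_{t+}=\cF_t$ of the augmented Brownian filtration (it is this, not the right-continuity of $\bbG$, that yields $\cF_t$-measurable factors); the measurability of the indexed limsup in $(\omega,\theta,e)$ has to be checked at this point, which is precisely the delicate step you flag but do not resolve. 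Finally, note that the density hypothesis plays no role in this lemma (it is needed for the optional splitting of Lemma~\ref{lemma:optprocdecomp}), so invoking it for ``uniqueness of the indexed versions'' is beside the point: the statement only asserts the existence of a decomposition.
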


Under the density assumption it is also possible to give a similar decomposition for $\bbG$-optional processes.
\begin{lemma}[{\citep[Lemma 2.1]{pham:progrenl} and \citep[Th. 7.5]{song2014:optionalsplitting}}]\label{lemma:optprocdecomp}
	Any $\bbG$-optional process $Y=(Y_t)_{t \geq 0}$ can be decomposed as
	\begin{equation}\label{eq:optsplit}
		Y_t = Y_t^0 \ind_{t < \tau} + Y_t^1(\tau, \zeta) \ind_{t \geq \tau}, \quad t \geq 0,
	\end{equation}
	where $Y^0 \in \cO_\bbF$ and $Y^1 \in \cO_\bbF\bigl(\R^+,E\bigr)$.
\end{lemma}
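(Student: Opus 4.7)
The plan is to establish the decomposition by a functional monotone class argument applied to the $\bbG$-optional $\sigma$-algebra $\cO(\bbG)$, combined with an explicit construction of the $\bbF$-optional components that exploits the density hypothesis (Assumption~\ref{hp:density}).

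First I would reduce the problem to a generating class. The space of bounded $\bbG$-optional processes is the smallest vector space containing elementary processes of the form $Y_t = \xi \, \ind_{[s,+\infty)}(t)$, with $s \geq 0$ and $\xi$ a bounded $\cG_s$-measurable random variable, which is closed under bounded pointwise convergence. For such an elementary $Y$, Lemma~\ref{lemma:procdecomp}(a) supplies an explicit decomposition of $\xi$ into an $\cF_s$-measurable part and an $\cF_s \otimes \cB(\R^+) \otimes \cB(E)$-measurable part evaluated at $(\tau,\zeta)$. I would then split each of these two terms according to whether $\tau > t$ or $\tau \leq t$, and rearrange $Y_t$ into the target form $Y^0_t \ind_{t<\tau} + Y^1_t(\tau,\zeta) \ind_{t\geq\tau}$ with components built from constants, $\cF_s$-measurable random variables and deterministic Borel functions of $\theta$, all of which are clearly in the required $\bbF$-optional classes.

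For the stability under bounded pointwise limits, the crucial observation is that the density hypothesis produces canonical $\bbF$-optional choices. Setting $S^0_t \coloneqq \int_t^{+\infty}\!\!\int_E \gamma_t(\theta,e)\,\dd\theta\,\dd e$, which is strictly positive thanks to the positivity of $\gamma$, the process
\begin{equation*}
Y^0_t \coloneqq \frac{{}^o\bigl(Y \, \ind_{\cdot < \tau}\bigr)_t}{S^0_t},
\end{equation*}
where ${}^o(\cdot)$ denotes the $\bbF$-optional projection, is the unique (up to $\bbF$-indistinguishability) $\bbF$-optional process such that $Y^0_t \, \ind_{t<\tau} = Y_t \, \ind_{t<\tau}$, $\bbP$-a.s. for every $t$. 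For the post-default part, the disintegration identity
\begin{equation*}
\bbE\bigl[Y_t\,\varphi(\tau,\zeta)\,\ind_{t \geq \tau} \,\big|\, \cF_t\bigr] = \int_0^t \!\! \int_E \varphi(\theta,e) \, Y^1_t(\theta,e) \, \gamma_t(\theta,e) \, \dd \theta \, \dd e,
\end{equation*}
required to hold for every bounded Borel $\varphi$ on $\R^+ \times E$, together with the strict positivity of $\gamma_t$, characterizes $Y^1_t(\theta,e)$ up to $\dd\theta\,\dd e$-null sets. The uniqueness of both $Y^0$ and $Y^1$ ensures that bounded pointwise convergence of $Y$ transfers to bounded pointwise convergence of $Y^0$ and $Y^1$, which closes the monotone class argument.

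The main obstacle I anticipate is the joint measurability of $Y^1$ in $(\omega,t,\theta,e)$: for every fixed $(\theta,e)$ the above identity yields an $\bbF$-optional process, but assembling these into a genuinely $\cO(\bbF) \otimes \cB(\R^+) \otimes \cB(E)$-measurable version requires a delicate regularization. This is precisely the content of the optional splitting theorem~\citep[Th. 7.5]{song2014:optionalsplitting}, whose hypotheses are fulfilled under Assumption~\ref{hp:density}. I would therefore invoke it to obtain the required measurable version of $Y^1$, and then verify through the disintegration formula above that it is compatible with the pointwise decomposition produced on the generators, thus concluding the proof.
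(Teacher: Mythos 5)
Your opening reduction is where the proof breaks. You claim that the bounded $\bbG$-optional processes form the smallest vector space containing the elementary processes $\xi\,\ind_{[s,+\infty)}$, $\xi \in \dL^\infty(\cG_s)$, that is closed under bounded pointwise convergence. That generation statement is correct for the \emph{predictable} $\sigma$-algebra (with rectangles $A \times (s,+\infty)$), but not for the optional one: since $A\times[s,+\infty) = A\times\{s\} \cup A\times(s,+\infty)$ with $A \in \cG_s$, the $\sigma$-algebra generated by your elementary processes is contained in $\cP(\bbG) \lor \sigma\bigl(\{A\times\{s\}\colon A\in\cG_s,\ s\geq 0\}\bigr)$, and every set in the latter coincides with a $\bbG$-predictable set outside a countable set of \emph{deterministic} times. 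Under Assumption~\ref{hp:density} the conditional laws of $\tau$ are atomless and $\tau$ is a totally inaccessible $\bbG$-stopping time, so the graph $\{(\omega,t)\colon \tau(\omega)=t\}$ cannot be of that form (a predictable set with countable sections is a countable union of graphs of predictable stopping times, none of which can meet $\tau$ with positive probability, and $\tau$ charges no deterministic time). Equivalently, the process $\ind_{t \geq \tau}$ -- which is exactly where the optional decomposition with $\ind_{t<\tau}/\ind_{t\geq\tau}$ differs from the predictable one of Lemma~\ref{lemma:procdecomp}(b) -- is $\bbG$-optional but is not measurable with respect to the $\sigma$-algebra your monotone class argument can reach. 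Note also that if your generating claim were true, the splitting~\eqref{eq:optsplit} would follow from Lemma~\ref{lemma:procdecomp}(a) for an arbitrary progressive enlargement, contradicting Barlow's counterexample recalled in the remark right after the lemma; the whole difficulty of the optional splitting formula sits precisely in the part of $\cO(\bbG)$ that deterministic-time rectangles miss.

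The constructive part of your proposal is sound but does not save the argument: identifying $Y^0$ as the $\bbF$-optional projection of $Y\ind_{\cdot<\tau}$ divided by the Az\'ema supermartingale (strictly positive under the density hypothesis), and $Y^1_t(\theta,e)$ through the disintegration identity, does pin down the components for each fixed $t$ up to null sets. The genuinely hard step is producing a jointly $\cO(\bbF)\otimes\cB(\R^+)\otimes\cB(E)$-measurable version of $Y^1$ (and an $\cO(\bbF)$-version of $Y^0$) for which~\eqref{eq:optsplit} holds up to indistinguishability, and for this you invoke \citep[Th.~7.5]{song2014:optionalsplitting} -- but that theorem \emph{is} the optional splitting formula being asserted, and it is exactly what the paper cites in place of a proof (together with \citep[Lemma 2.1]{pham:progrenl}). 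So as written your argument is either circular (the decisive step is the quoted theorem) or incomplete (the monotone class scaffolding cannot supply it). A self-contained proof would have to engage the optional $\sigma$-algebra directly, e.g. through c\`adl\`ag adapted generators or the optional section theorem under the density hypothesis, rather than through deterministic-time rectangles.
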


\color{blue}
\begin{remark}
Equation~\eqref{eq:optsplit}, called \emph{optional splitting formula} in \citep{song2014:optionalsplitting}, is in general not valid, as a famous example by Barlow shows (see \citep{barlow1978:honest, song2014:optionalsplitting}). For a thorough study on the subject, we refer the reader to \citep{song2014:optionalsplitting}, where various conditions for~\eqref{eq:optsplit} to hold are analyzed.
\end{remark}
\normalcolor

\subsection{BSDEs with jumps}\label{subs:BSDEJ}
We introduce now BSDEs driven by the Wiener process $W$ and {\color{blue}the} following random measure $\mu$ on $\R^+ \times E$:
\begin{equation*}
	\mu\bigl(\omega; \, (0,t] \times B\bigr) \coloneqq \ind_{\tau(\omega) \leq t} \, \ind_{\zeta(\omega) \in B}, \quad t > 0, \, B \in \cB(E), \, \omega \in \Omega.
\end{equation*}
Such a stochastic differential equation will be called BSDEJ (short for BSDE with jumps). All the results contained in this Subsection, with the exception of Proposition~\ref{prop:BSDEJcont}, are proved in~\citep{kharroubi:progrenl}, hence we will state them without proof for the readers' convenience. We also point out that all of them are valid under the weaker version of the density assumption adopted in~\citep{kharroubi:progrenl}, i.e., assuming that the $\cF_t$-conditional laws of $(\tau, \zeta)$ are all absolutely continuous with respect to the Lebesgue measure on $\R^+ \times E$.

We need, first, to introduce the spaces in which we will look for solutions of the BSDEJ.
\begin{itemize}
	\item $\cS_\bbG^\infty[a,b]$ (resp. $\cS_\bbF^\infty[a,b]$) is the set of real-valued processes $Y \in \prog_\bbG$ (resp. $Y \in \prog_\bbF$), such that:
	\begin{equation*}
		\norm{Y}_{\cS^\infty[a,b]} \coloneqq \esssup_{\omega \in \Omega, \, t \in [a,b]} |Y_t(\omega)| < \infty.
	\end{equation*}
	\item $\dL^2_\bbG[a,b]$ (resp. $\dL^2_\bbF[a,b]$) is the set of $\R^d$-valued processes $Z \in \cP_\bbG$ (resp. $Z \in \cP_\bbF$) such that
	\begin{equation*}
		\norm{Z}_{\dL^2[a,b]} \coloneqq \biggl(\bbE\biggl[\int_a^b |Z_t|^2 \, \dd t\biggr]\biggr)^{\frac 12} < \infty.
	\end{equation*}
	\item $\dL^2(\mu)$ is the set of real-valued $E$-indexed processes $U(\cdot) \in \cP_\bbG$ such that
	\begin{equation*}
		\norm{U}_{\dL^2(\mu)} \coloneqq \biggl(\bbE\biggl[\int_0^T \int_E |U_s(e)|^2 \, \mu(\dd s \, \dd e)\biggr]\biggr)^{\frac 12} < \infty.
	\end{equation*}
\end{itemize}

Solutions to our BSDEJ will be triples $(Y,Z,U) \in S^\infty_\bbG[0,T] \times \dL^2_\bbG[0,T] \times \dL^2(\mu)$ satisfying
\begin{equation}\label{eq:BSDEJ}
	Y_t = \xi + \int_t^T g(s, Y_s, Z_s, U_s(\cdot)) \, \dd s - \int_t^T Z_s \, \dd W_s - \int_t^T \int_E U_s(e) \, \mu(\dd s \, \dd e), \quad t \in [0,T],
\end{equation}
where:
\begin{itemize}
	\item $\xi$ is a $\cG_T$-measurable random variable.
	\item $g \colon \Omega \times [0,T] \times \R \times \R^d \times \dB(E) \to \R$ is a $\cP(\bbG) \otimes \cB(\R) \otimes \cB(\R^d) \otimes \cB\bigl(\dB(E)\bigr)$-measurable map.
\end{itemize}
Notice that from Lemma~\ref{lemma:procdecomp} we get the following decompositions for $\xi$ and $g$:
\begin{gather}
	\xi = \xi^0 \ind_{T < \tau} + \xi^1(\tau,\zeta) \ind_{T \geq \tau}, \label{eq:xidecomp} \\
	g(t,y,z,u) = g^0(t,y,z,u) \ind_{t \leq \tau} + g^1(t,y,z,u,\tau,\zeta) \ind_{t > \tau}. \label{eq:gdecomp}
\end{gather}

\begin{theorem}[Existence theorem, {\citep[Th. 3.1]{kharroubi:progrenl}}]\label{th:BSDEJexist}
	Assume that for all $(\theta, e) \in \R^+ \times E$ the Brownian BSDE
	\begin{multline}\label{eq:BSDE1}
		Y_t^1(\theta,e) = \xi^1(\theta,e) + \int_t^T g^1(s,Y_s^1(\theta,e), Z_s^1(\theta,e), 0, \theta, e) \, \dd s \\
		 - \int_t^T Z_s^1(\theta,e) \, \dd W_s, \quad \theta \land T \leq t \leq T,
	\end{multline}
	admits a solution $\bigl(Y^1(\theta,e), Z^1(\theta,e)\bigr) \in \cS_\bbF^\infty[\theta \land T, T] \times \dL^2_\bbF[\theta \land T, T]$, and that the Brownian BSDE
	\begin{equation}\label{eq:BSDE0}
		Y_t^0 = \xi^0 + \int_t^T g^0(s, Y_s^0, Z_s^0,Y_s^1(s, \cdot) - Y_s^0) \, \dd s - \int_t^T Z_s^0 \, \dd W_s, \quad 0 \leq t \leq T,
	\end{equation}
	admits a solution $(Y^0, Z^0) \in \cS_\bbF^\infty[0, T] \times \dL^2_\bbF[0, T]$. Assume, moreover, that the pair $(Y^1, Z^1)$ is such that $Y^1 \in \prog_\bbF\bigl(\R^+ \times E \bigr)$ and $Z^1 \in \cP_\bbF\bigl(\R^+ \times E \bigr)$.
	
	If these solutions satisfy
	\begin{gather}
		\norm{Y^0}_{\cS^\infty[0,T]} < +\infty, \\
		\sup_{(\theta,e) \in \R^+ \times E} \norm{Y^1(\theta,e)}_{\cS^\infty[\theta \land T ,T]} < +\infty \\
		\bbE\biggl[\int_{\R^+ \times E} \biggl(\int_0^{\theta \land T} |Z_s^0|^2 \, \dd s + \int_{\theta \land T}^T |Z_s^1(\theta,e)|^2 \, \dd s\biggr) \, \gamma_T(\theta,e) \, \dd \theta \, \dd e\biggr] < +\infty,
	\end{gather}
	
	then the BSDEJ~\eqref{eq:BSDEJ} admits a solution $(Y,Z,U) \in \cS_\bbG^\infty[0,T] \times \dL^2_\bbG[0,T] \times \dL^2(\mu)$ given by
	\begin{equation}\label{eq:BSDEJsol}
		\left\{
		\begin{aligned}
			&Y_t = Y_t^0 \, \ind_{t < \tau} + Y_t^1(\tau,\zeta) \, \ind_{t \geq \tau}, \\
			&Z_t = Z_t^0 \, \ind_{t < \tau} + Z_t^1(\tau,\zeta) \, \ind_{t \geq \tau}, \\
			&U_t(\cdot) = U_t^0(\cdot) \, \ind_{t < \tau} = \Bigl[Y_t^1(t, \cdot) - Y_t^0\Bigr] \, \ind_{t < \tau}.
		\end{aligned}
		\right.
	\end{equation}
\end{theorem}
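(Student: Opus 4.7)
The plan is to verify in two stages that the triple $(Y,Z,U)$ defined by~\eqref{eq:BSDEJsol} is a solution of~\eqref{eq:BSDEJ}: first I check membership in the announced function spaces, then I plug the candidate into the equation and reduce to the two Brownian BSDEs~\eqref{eq:BSDE0} and~\eqref{eq:BSDE1}.

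\emph{Function spaces.} The $\bbG$-progressivity of $Y$ and the $\bbG$-predictability of $Z$ and $U$ follow from Lemmas~\ref{lemma:procdecomp}--\ref{lemma:optprocdecomp} applied to~\eqref{eq:BSDEJsol}, given the parametric measurability of $(Y^1,Z^1)$ assumed in the statement; under Assumption~\ref{hp:density} the law of $\tau$ is diffuse, so $\ind_{t<\tau}$ and $\ind_{t\leq\tau}$ coincide $\bbP\otimes\dd t$--a.e. The bound $\norm{Y}_{\cS^\infty[0,T]}<\infty$ is immediate from the two assumed sup-norm bounds. For $Z$, conditioning on $\cF_T$ and using the density yields
\begin{equation*}
\bbE\!\left[\int_0^T |Z_t|^2\,\dd t\right]=\bbE\!\left[\int_{\R^+\times E}\!\!\!\left(\int_0^{\theta\land T}\!|Z^0_s|^2\,\dd s+\int_{\theta\land T}^T\!|Z^1_s(\theta,e)|^2\,\dd s\right)\!\gamma_T(\theta,e)\,\dd\theta\,\dd e\right],
\end{equation*}
which is finite by the third assumption. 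For $U$, since $\mu$ has a single atom at $(\tau,\zeta)$ and $U$ is predictable, $\norm{U}_{\dL^2(\mu)}^2=\bbE[|U_{\tau-}(\zeta)|^2\ind_{\tau\leq T}]$ is controlled by the sup-norms of $Y^0$ and $Y^1$.

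\emph{Verification of the equation.} Substituting~\eqref{eq:BSDEJsol} together with the decompositions~\eqref{eq:xidecomp}--\eqref{eq:gdecomp} into the right-hand side of~\eqref{eq:BSDEJ}, I argue separately on the events $\{t\geq\tau\}$, $\{\tau>T\}$ and $\{t<\tau\leq T\}$. On $\{t\geq\tau\}$ the jump is past, the $\mu$-integral vanishes, and for every $s\in[t,T]$ one has $Y_s=Y^1_s(\tau,\zeta)$, $Z_s=Z^1_s(\tau,\zeta)$ and $g(s,Y_s,Z_s,U_s)=g^1(s,Y^1_s(\tau,\zeta),Z^1_s(\tau,\zeta),0,\tau,\zeta)$, so~\eqref{eq:BSDEJ} collapses to~\eqref{eq:BSDE1} evaluated at $(\theta,e)=(\tau,\zeta)$. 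On $\{\tau>T\}$ everything stays on the $\bbF$-side, the $\mu$-integral is zero, and~\eqref{eq:BSDEJ} reduces to~\eqref{eq:BSDE0}. On the remaining event $\{t<\tau\leq T\}$ I split each integral at $\tau$: the $\mu$-integral collapses to $Y^1_\tau(\tau,\zeta)-Y^0_\tau$ (the predictable $U$ evaluated at its left limit on the graph of $\tau$), and applying~\eqref{eq:BSDE1} at the stopping time $\tau$ to rewrite the contribution of $[\tau,T]$ leaves $Y^0_\tau+\int_t^\tau g^0(s,Y^0_s,Z^0_s,Y^1_s(s,\cdot)-Y^0_s)\,\dd s-\int_t^\tau Z^0_s\,\dd W_s$, which equals $Y^0_t=Y_t$ by~\eqref{eq:BSDE0}. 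This closes the verification.

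\emph{The hard part.} The most delicate technical point is the propagation of the joint parametric regularity of $(Y^1(\theta,e),Z^1(\theta,e))$ through the composition with $(\tau,\zeta)$ and through the nonlocal coupling that enters~\eqref{eq:BSDE0} via $U^0_s(\cdot)=Y^1_s(s,\cdot)-Y^0_s$, together with the identification of the jump of $Y$ at $\tau$ with the predictable left limit of $U$ read by the random measure $\mu$; both issues rely crucially on Assumption~\ref{hp:density}, which guarantees that $\tau$ is totally inaccessible with diffuse distribution and that the density $\gamma$ is well-behaved for the change-of-measure arguments.
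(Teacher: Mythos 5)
The paper itself contains no proof of this theorem: it is quoted from \citep[Th.~3.1]{kharroubi:progrenl}, and Subsection~2.2 states explicitly that those results are given without proof. Your strategy --- check membership in $\cS^\infty_\bbG[0,T]\times\dL^2_\bbG[0,T]\times\dL^2(\mu)$, then verify the equation by splitting on $\{T<\tau\}$, $\{t\ge\tau\}$ and $\{t<\tau\le T\}$ and reducing to the Brownian BSDEs \eqref{eq:BSDE0}--\eqref{eq:BSDE1} --- is exactly the verification argument of the cited source, and the space-membership computations (in particular the conditioning on $\cF_T$ against the density $\gamma_T$ for the $\dL^2$ bound on $Z$) are correct.

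Two points in your sketch are exactly where the work lies and are not settled as written. First, the observation that $\ind_{t<\tau}$ and $\ind_{t\le\tau}$ coincide $\bbP\otimes\dd t$-a.e.\ is fine for the Lebesgue and Brownian integrals but is irrelevant for the $\mu$-integral, which charges precisely the graph of $(\tau,\zeta)$: on $\{t<\tau\le T\}$ one has $\int_t^T\int_E U_s(e)\,\mu(\dd s\,\dd e)=U_\tau(\zeta)$, the value at $s=\tau$ itself, not a left limit. With the open indicator appearing in \eqref{eq:BSDEJsol} the candidate $U$ is the zero element of $\dL^2(\mu)$ and the equation would fail on $\{t<\tau\le T\}$; the verification requires the closed version $U_t(\cdot)=\bigl[Y^1_t(t,\cdot)-Y^0_t\bigr]\ind_{t\le\tau}$ (as in \citep{kharroubi:progrenl}), so that $U_\tau(\zeta)=Y^1_\tau(\tau,\zeta)-Y^0_\tau$ matches the jump of $Y$ (here continuity of $Y^0$ is used). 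Your phrase ``the predictable $U$ evaluated at its left limit'' glosses over precisely this. Second, on $\{s\ge\tau\}$ the integrand $Z^1_s(\tau,\zeta)$ is not $\bbF$-adapted, so ``applying \eqref{eq:BSDE1} at the stopping time $\tau$'' means substituting the random parameters $(\theta,e)=(\tau,\zeta)$ and the random time $\tau$ into identities that hold $\bbP$-a.s.\ only for each fixed $(\theta,e)$, and identifying the $\bbG$-stochastic integral $\int_t^T Z_s\,\dd W_s$ with the substituted family of parametrized $\bbF$-integrals. This passage from ``a.s.\ for every $(\theta,e)$'' to ``a.s.\ after substitution'' is the technical core of the cited proof; it rests on the density hypothesis through a measurable-version/null-set argument of the same kind as in the proof of Proposition~\ref{prop:BSDEJcont}, together with the joint measurability of $(Y^1,Z^1)$ assumed in the statement. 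You correctly single this out as the hard part, but you do not carry it out, so the proposal is a faithful outline of the known argument rather than a complete proof.
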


\color{blue}
This result heavily relies on the assumption that each Brownian BSDE admits a solution. In the literature several cases are studied to ensure this fact, among which we are going to present the \emph{quadratic case}. This situation was analyzed first in~\citep{kobylanski2000:BSDEs} and applied to the BSDEJ discussed here in~\citep{kharroubi:progrenl}.
\begin{proposition}[{\citep[Prop. 3.1]{kharroubi:progrenl}}]\label{prop:BSDEJquadexist}
Suppose that, together with the density assumption, we have that:
\begin{enumerate}
\item The \emph{compensator} $\lambda_t(e) \, \dd t \, \dd e$ of the random counting measure $\mu$ is such that the process $\Bigl(\int_E \lambda_t(e) \, \dd e\Bigr)_{t \geq 0}$ is bounded.
\item The terminal condition $\xi$ is bounded, i.e., there exists $C > 0$ such that $|\xi| \leq C$, $\bbP$-a.s.
\item The generator $g$ is quadratic in $z$, i.e., there exists $K > 0$ such that, for all $(t,y,z,u) \in [0,T] \times \R \times \R^d \times \dB(E)$,
\begin{equation*}
|g(t,y,z,u)| \leq K \biggl(1 + |y| + \norm{z}^2 + \int_E |u(e)| \lambda_t(e) \, \dd e \biggr).
\end{equation*}
\item For any $R>0$ there exists a function $\omega_R$ such that $\displaystyle \lim_{\epsilon \to 0} \omega_R(\epsilon) = 0$ and
\begin{equation*}
|g\bigl(t,y,z,(u(e) - y)_{e \in E}\bigr) - g\bigl(t,y',z',(u(e) - y)_{e \in E}\bigr)| \leq \omega_R(\epsilon),
\end{equation*}
for all $t \in [0,T]$, $y,y' \in \R$, $z,z'\in \R^d$, $u \in \dB(E)$ such that $|y|, |y'|, \norm{z}, \norm{z'} \leq R$ and $|y-y'| + \norm{z-z'} \leq \epsilon$.
\end{enumerate}
Then the BSDEJ~\eqref{eq:BSDEJ} admits a solution.
\end{proposition}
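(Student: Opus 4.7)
The plan is to reduce the BSDEJ to the two families of Brownian BSDEs \eqref{eq:BSDE1}--\eqref{eq:BSDE0} appearing in Theorem~\ref{th:BSDEJexist}, solve each one by the classical quadratic Brownian BSDE theory of \citet{kobylanski2000:BSDEs}, and then invoke Theorem~\ref{th:BSDEJexist} to assemble the solution of~\eqref{eq:BSDEJ} via the formula~\eqref{eq:BSDEJsol}. The decompositions \eqref{eq:xidecomp}--\eqref{eq:gdecomp} of $\xi$ and $g$ are the natural starting point.

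First I would tackle the indexed BSDE~\eqref{eq:BSDE1}. For fixed $(\theta, e)$ the terminal condition $\xi^1(\theta, e)$ is bounded uniformly in $(\theta, e)$ by assumption~(ii), while substituting $u = 0$ into the quadratic growth bound of~(iii) gives $|g^1(s, y, z, 0, \theta, e)| \leq K(1 + |y| + \norm{z}^2)$. Combined with assumption~(iv), this fits exactly the setting of \citet{kobylanski2000:BSDEs}, yielding a bounded solution $\bigl(Y^1(\theta, e), Z^1(\theta, e)\bigr)$ with a bound on $\norm{Y^1(\theta, e)}_{\cS^\infty[\theta\land T,T]}$ depending only on $C$, $K$, and $T$, hence uniform in $(\theta, e)$. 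To ensure that these fibrewise solutions assemble into processes $Y^1 \in \prog_\bbF(\R^+, E)$ and $Z^1 \in \cP_\bbF(\R^+, E)$, I would carry the parameter $(\theta, e)$ through every step of Kobylanski's exponential transform and monotone truncation scheme, exploiting the fact that each Picard-type iterate is jointly measurable in $(\omega, s, \theta, e)$ and that monotone limits preserve that measurability.

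Next I would substitute this $Y^1$ into~\eqref{eq:BSDE0}, viewing it as a Brownian BSDE with modified generator $\tilde g(s, y, z) \coloneqq g^0\bigl(s, y, z, Y^1_s(s, \cdot) - y\bigr)$. Since $Y^1$ is uniformly bounded and the compensator satisfies $\int_E \lambda_s(e) \, \dd e \leq M$ by~(i), the quadratic-growth bound in~(iii) yields
\begin{equation*}
	|\tilde g(s, y, z)| \leq K \biggl(1 + |y| + \norm{z}^2 + \int_E |Y^1_s(s, e) - y| \, \lambda_s(e) \, \dd e \biggr) \leq \tilde K \bigl(1 + |y| + \norm{z}^2\bigr),
\end{equation*}
so $\tilde g$ remains quadratic in $z$. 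Condition~(iv) transfers to $\tilde g$ in the $(y, z)$-variables and $\xi^0$ is bounded, hence a second application of \citet{kobylanski2000:BSDEs} produces a bounded $(Y^0, Z^0) \in \cS^\infty_\bbF[0, T] \times \dL^2_\bbF[0, T]$.

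The remaining integrability condition in Theorem~\ref{th:BSDEJexist} — the joint $L^2$ bound on the $Z$-processes weighted by $\gamma_T(\theta, e)$ — follows from the standard energy estimate for bounded quadratic BSDEs (obtained by applying Itô's formula to $e^{\alpha Y}$ for $\alpha$ large enough): the $L^2$-norm of $Z$ is controlled by a constant depending only on the corresponding $\cS^\infty$-bound of $Y$, $K$, and $T$, and this bound is uniform in $(\theta, e)$, so integration against the finite measure $\gamma_T(\theta, e) \, \dd \theta \, \dd e$ remains finite. I expect the main obstacle to be the parametric measurability step: each fixed-parameter BSDE is classical, but propagating the exponential transform and the monotone approximation coherently in $(\theta, e)$ so that the limiting $Y^1$, $Z^1$ actually lie in $\prog_\bbF(\R^+, E)$ and $\cP_\bbF(\R^+, E)$ requires a careful parametrized rerun of Kobylanski's construction rather than a blackbox application.
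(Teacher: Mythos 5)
Your proposal is correct and follows essentially the same route as the source: the paper states this result without proof, citing \citep{kharroubi:progrenl}, and the argument there is exactly your reduction — solve the $(\theta,e)$-indexed after-default BSDEs~\eqref{eq:BSDE1} by Kobylanski's quadratic theory with bounds uniform in $(\theta,e)$ and parametric measurability, plug $Y^1$ into~\eqref{eq:BSDE0} whose generator stays quadratic thanks to assumptions (i) and (iii), and then assemble the solution via Theorem~\ref{th:BSDEJexist}. The only minor point to make explicit is that $\xi^1(\theta,e)$ is bounded by $C$ only $\dd\theta\,\dd e\,\dd\bbP$-a.e.\ (or after truncating at $C$, which leaves $\xi$ unchanged), but this is standard and does not affect the argument.
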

\normalcolor

Before stating the comparison theorem, let us introduce what follows.
\begin{itemize}
\item $\underline\xi, \, \overline\xi \in \dL^\infty(\cG_T)$.
\item $\underline g, \, \overline g \colon \Omega \times [0,T] \times \R \times \R^d \times \dB(E) \to \R$, two $\cP(\bbG) \otimes \cB(\R) \otimes \cB(\R^d) \otimes \cB\bigl(\dB(E)\bigr)$-measurable maps.
\item $(\underline g^0, \underline g^1)$ and $(\overline g^0, \overline g^1)$, the decompositions of $\underline g$ and $\overline g$, as appearing in~\eqref{eq:gdecomp}.
\item $(\underline Y, \underline Z, \underline U)$ and $(\overline Y, \overline Z, \overline U)$ the solutions of the BSDEJ~\eqref{eq:BSDEJ} with pair driver/terminal condition given by $(\underline g, \underline \xi)$ and $(\overline g, \overline \xi)$ respectively.
\item $(\underline Y^0, \underline Y^1)$ (resp. $(\underline Z^0, \underline Z^1)$, $(\underline U^0, \underline U^1)$, $(\overline Y^0, \overline Y^1)$, $(\overline Z^0, \overline Z^1)$, $(\overline U^0, \overline U^1)$), the decompositions of $\underline Y$ (resp. $\underline Z$, $\underline U$, $\overline Y$, $\overline Z$, $\overline U$).
\item $\underline G^0(t,y,z) \coloneqq \underline g^0(t,y,z,\underline Y^1_t(t, \cdot) - y)$, $\overline G^0(t,y,z) \coloneqq \overline g^0(t,y,z,\overline Y^1_t(t, \cdot) - y)$.
\item $\underline G^1(t,y,z) \coloneqq \underline g^1(t,y,z,0)$, $\overline G^1(t,y,z) \coloneqq \overline g^1(t,y,z,0)$.
\end{itemize}

We need the following definitions:
\begin{definition}[{\citep[Def. 4.1]{kharroubi:progrenl}}]
We say that a driver $f \colon \Omega \times [0,T] \times \R \times \R^d \to \R$ satisfies a \emph{comparison theorem for Brownian BSDEs} if for any bounded $\bbG$-stopping times $\nu_2 \geq \nu_1$, any driver $f' \colon \Omega \times [0,T] \times \R \times \R^d \to \R$ and any $\cG_{\nu_2}$-measurable random variables $X, \, X'$ such that $f \leq f'$ and $X \leq X'$ (resp. $f \geq f'$ and $X \geq X'$), we have $Y \leq Y'$ (resp. $Y \geq Y'$) on $[\nu_1, \nu_2]$, where $(Y,Z)$, $(Y',Z')$ are solutions in $\cS_\bbG^\infty[0,T] \times \dL^2_\bbG[0,T]$ to the BSDEs:
\begin{gather*}
	Y_t = X + \int_t^{\nu_2} f(s, Y_s, Z_s) \, \dd s - \int_t^{\nu_2} Z_s \, \dd W_s, \quad t \in [\nu_1, \nu_2],
	\\
	Y'_t = X' + \int_t^{\nu_2} f'(s, Y'_s, Z'_s) \, \dd s - \int_t^{\nu_2} Z'_s \, \dd W_s, \quad t \in [\nu_1, \nu_2].
\end{gather*}
\end{definition}


To state the comparison theorem and all the remaining results of this paper, we need the so called \emph{immersion hypothesis}, also known as Hypothesis (H), to be in force.

\begin{assumption}[Immersion hypothesis]\label{hp:martingale}
	 The filtration $\bbF$ is immersed in the filtration $\bbG$ under $\bbP${\color{blue}, i.e., any $(\bbP, \bbF)$-martingale is a $(\bbP, \bbG)$-martingale.}
\end{assumption}

A characterization of the immersion property is given, e.g., in \citep[Proposition 3.6 (b)]{aksamitjeanblanc2017:filtrenl}.

\begin{theorem}[Comparison theorem, {\citep[Th. 4.1]{kharroubi:progrenl}}]\label{th:BSDEJcomp}
Suppose that $\underline\xi \leq \overline\xi, \, \bbP$-a.s. \, Suppose, moreover, that for all $(t,y,z) \in [0,T] \times \R \times \R^d$
\begin{gather*}
	\underline G^0(t,y,z) \leq \overline G^0(t,y,z), \quad \bbP\text{-a.s.},
	\\
	\underline G^1(t,y,z) \leq \overline G^1(t,y,z), \quad \bbP\text{-a.s.},
\end{gather*}
and that the drivers $\underline G^0, \, \underline G^1$ or $\overline G^0, \, \overline G^1$ satisfy a comparison theorem for Brownian BSDEs.

If, in addition, $\underline U_t = \overline U_t = 0$ on $\{t > \tau \}$, then we have that
\begin{equation*}
	\underline Y_t \leq \overline Y_t, \quad t \in [0,T], \, \bbP\text{-a.s.}
\end{equation*}
\end{theorem}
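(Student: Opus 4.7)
The plan is to reduce the comparison for the BSDEJ~\eqref{eq:BSDEJ} to two successive applications of the Brownian comparison theorem, exploiting the decomposition of solutions provided by Theorem~\ref{th:BSDEJexist}. I would write
\[
\underline Y_t = \underline Y^0_t \, \ind_{t < \tau} + \underline Y^1_t(\tau,\zeta) \, \ind_{t \geq \tau}, \qquad \overline Y_t = \overline Y^0_t \, \ind_{t < \tau} + \overline Y^1_t(\tau,\zeta) \, \ind_{t \geq \tau},
\]
so that it is enough to establish (i) $\underline Y^1_t(\theta,e) \leq \overline Y^1_t(\theta,e)$ on $[\theta \land T, T]$ for Lebesgue-a.e.\ $(\theta,e)$, and (ii) $\underline Y^0_t \leq \overline Y^0_t$ on $[0,T]$. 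The hypothesis $\underline U_t = \overline U_t = 0$ on $\{t > \tau\}$ (which is in fact automatic from~\eqref{eq:BSDEJsol}) is what makes this reduction clean, since it removes any jump contribution from the BSDEs solved by $\underline Y^1$ and $\overline Y^1$ after $\tau$.

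For step (i), I would first translate the inequality on terminal data. From~\eqref{eq:xidecomp} and $\underline\xi \leq \overline\xi$ $\bbP$-a.s., projecting onto $\{T < \tau\}$ gives $\underline\xi^0 \leq \overline\xi^0$ $\bbP$-a.s., while projecting onto $\{T \geq \tau\}$ gives $\underline\xi^1(\cdot,\tau,\zeta) \leq \overline\xi^1(\cdot,\tau,\zeta)$ $\bbP$-a.s.\ on this set. Assumption~\ref{hp:density} forces the law of $(\tau,\zeta)$ restricted to $[0,T] \times E$ to be equivalent to Lebesgue measure there, so this lifts to $\underline\xi^1(\cdot,\theta,e) \leq \overline\xi^1(\cdot,\theta,e)$ $\bbP$-a.s.\ for Lebesgue-a.e.\ $(\theta,e)$. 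Fixing such $(\theta,e)$, the pairs $(\underline Y^1(\theta,e),\underline Z^1(\theta,e))$ and $(\overline Y^1(\theta,e),\overline Z^1(\theta,e))$ solve Brownian BSDEs~\eqref{eq:BSDE1} with drivers $\underline G^1, \overline G^1$ (evaluated at the parameter $(\theta,e)$, with the jump coordinate set to zero), and the assumption $\underline G^1 \leq \overline G^1$ together with the hypothesised Brownian comparison property yields $\underline Y^1_t(\theta,e) \leq \overline Y^1_t(\theta,e)$. Here Assumption~\ref{hp:martingale} is essential so that $W$ remains a Brownian motion in $\bbG$ and the comparison theorem is applicable with respect to the $\bbG$-stopping times appearing in its statement.

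For step (ii), I would apply the Brownian comparison theorem to the BSDE~\eqref{eq:BSDE0}. The two equations have terminal conditions $\underline\xi^0 \leq \overline\xi^0$ and drivers
\[
\underline G^0(t,y,z) = \underline g^0(t,y,z,\underline Y^1_t(t,\cdot) - y), \qquad \overline G^0(t,y,z) = \overline g^0(t,y,z,\overline Y^1_t(t,\cdot) - y),
\]
which by assumption satisfy $\underline G^0 \leq \overline G^0$ pointwise in $(t,y,z)$. Importantly, the distinct processes $\underline Y^1$ and $\overline Y^1$ entering these drivers have already been absorbed into this inequality as part of the hypothesis, so step (i) is not directly needed here; the Brownian comparison theorem therefore gives $\underline Y^0 \leq \overline Y^0$ on $[0,T]$. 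Assembling (i) and (ii) through the decomposition yields $\underline Y_t \leq \overline Y_t$ for all $t \in [0,T]$, $\bbP$-a.s. The main technical point to get right is the Lebesgue-a.e.\ $(\theta,e)$-comparison in step (i) and its transfer to an $\bbP$-a.s.\ statement when evaluated at $(\tau,\zeta)$; this is precisely where the density hypothesis is used in an essential way.
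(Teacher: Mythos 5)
You should first note that the paper itself does not prove this statement: as announced at the start of Subsection~2.2, Theorem~\ref{th:BSDEJcomp} is quoted verbatim from \citep[Th.~4.1]{kharroubi:progrenl}, so there is no in-paper proof to match and your proposal has to stand on its own against the cited argument. Its overall architecture (split before/after $\tau$, apply a Brownian comparison twice, reassemble through the density hypothesis) is reasonable, and the part you flag as delicate is indeed handled correctly: transferring $\underline\xi\leq\overline\xi$ into $\underline\xi^0\leq\overline\xi^0$ $\bbP$-a.s.\ and $\underline\xi^1(\theta,e)\leq\overline\xi^1(\theta,e)$ for Lebesgue-a.e.\ $(\theta,e)$, and then transferring the indexed comparison back to $(\tau,\zeta)$, works under Assumption~\ref{hp:density} (it uses the strict positivity $\gamma>0$, i.e.\ $\bbP(\tau>T\mid\cF_T)>0$, exactly as in the full-measure-set argument in the proof of Proposition~\ref{prop:BSDEJcont}).

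The genuine gap is at the very first step: you treat the decomposition components $(\underline Y^0,\underline Z^0)$, $(\underline Y^1(\theta,e),\underline Z^1(\theta,e))$ (and the barred analogues) of the \emph{given} solutions as solutions of the Brownian BSDEs~\eqref{eq:BSDE0} and~\eqref{eq:BSDE1}. The objects listed before the theorem are only the optional splittings of $\underline Y,\overline Y$ (Lemma~\ref{lemma:optprocdecomp}); Theorem~\ref{th:BSDEJexist} goes in the opposite direction (from Brownian solutions to one particular BSDEJ solution), and you cannot identify an arbitrary solution of~\eqref{eq:BSDEJ} with the constructed one by invoking uniqueness, since uniqueness is precisely what this comparison theorem is later used to establish (Theorem~\ref{th:BSDEJquaduniq}); that would be circular. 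For the same reason your parenthetical claim that $\underline U_t=\overline U_t=0$ on $\{t>\tau\}$ is ``automatic from~\eqref{eq:BSDEJsol}'' is incorrect for arbitrary solutions -- it holds for the constructed solution only, which is exactly why it appears as a hypothesis. What is missing is the argument that a solution with $U=0$ after $\tau$ satisfies, on $\{t\geq\tau\}$, a Brownian BSDE with driver $\underline G^1$ and, on $[0,\tau\wedge T)$, a Brownian BSDE whose driver can be written through $\underline G^0$; the latter requires relating $\underline U_s(\cdot)$ to $\underline Y^1_s(s,\cdot)-\underline Y_s$ before $\tau$, and note that $\underline U_s(e)$ is not pinned down off the atom $(\tau,\zeta)$ by the $\dL^2(\mu)$ norm although it enters the driver. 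The cited proof avoids the indexed $\bbF$-BSDEs altogether: it works directly in $\bbG$ on the random intervals $[\tau\wedge T,T]$ and $[0,\tau\wedge T]$, applying the assumed comparison property at the $\bbG$-stopping times $\nu_1=\tau\wedge T$, $\nu_2=T$ (and then on the pre-default interval, with the pre-jump terminal value $Y_{\tau\wedge T}-U_\tau(\zeta)\ind_{\tau\leq T}$); this is exactly why the comparison property is formulated with $\bbG$-stopping times and why the immersion hypothesis (Assumption~\ref{hp:martingale}) is imposed, whereas your route uses only deterministic times and invokes immersion only vaguely -- a signal that the reduction you perform is not free.
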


\color{blue}
Also this result, as the existence one, is not directly usable, since the condition on the drivers of the Brownian BSDEs of type~\eqref{eq:BSDE0} depends on the solutions of the Brownian BSDEs of type~\eqref{eq:BSDE1}. It is possible, however, to provide an example where uniqueness of the solution to the BSDEJ~\eqref{eq:BSDEJ} is shown through Theorem~\ref{th:BSDEJcomp}. It is developed in~\citep{kharroubi:progrenl} and based on a comparison theorem for quadratic BSDEs with convex generators, given in~\citep{briandhu2008:quadbsde}.

\begin{theorem}[{\citep[Th. 4.2]{kharroubi:progrenl}}]\label{th:BSDEJquaduniq}
Suppose that, together with the density assumption and the immersion hypothesis, we have that:
\begin{enumerate}
\item The \emph{compensator} $\lambda_t(e) \, \dd t \, \dd e$ of the random counting measure $\mu$ is such that the process $\Bigl(\int_E \lambda_t(e) \, \dd e\Bigr)_{t \geq 0}$ is bounded.
\item For all $(t,y,u) \in [0,T] \times \R \times \dB(E)$, the map $z \mapsto g(t,y,z,u)$ is convex.
\item The generator $g$ is Lipschitz with respect to $y$, i.e., there exists $L > 0$ such that
\begin{equation*}
|g\bigl(t,y,z,(u(e) - y)_{e \in E}\bigr) - g\bigl(t,y',z,(u(e) - y')_{e \in E}\bigr)| \leq L |y-y'|,
\end{equation*}
for all $t \in [0,T]$, $y,y' \in \R$, $z \in \R^d$, $u \in \dB(E)$.
\item The generator $g$ is quadratic in $z$, i.e., there exists $K > 0$ such that, for all $(t,y,z,u) \in [0,T] \times \R \times \R^d \times \dB(E)$,
\begin{equation*}
|g(t,y,z,u)| \leq K \biggl(1 + |y| + \norm{z}^2 + \int_E |u(e)| \lambda_t(e) \, \dd e \biggr).
\end{equation*}
\item $g(t, \cdot, \cdot, u) = g(t, \cdot, \cdot, 0)$ for all $u \in \dB(E)$ and all $t \in (\tau_n \land T, T]$.
\end{enumerate}
Then the BSDEJ~\eqref{eq:BSDEJ} admits at most one solution.
\end{theorem}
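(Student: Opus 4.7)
The plan is to establish uniqueness in two stages. First, using that condition~(5) eliminates the $u$-dependence of the generator on $\{t > \tau\}$, the Brownian BSDEs~\eqref{eq:BSDE1} for $(Y^1(\theta,e), Z^1(\theta,e))$ associated to any two solutions of the BSDEJ share the common driver $g^1(\cdot,\cdot,\cdot,0)$, which is convex in $z$ (condition~(2)), quadratic in $z$ (condition~(4)), and Lipschitz in $y$ (condition~(3) with $u \equiv 0$). Since $\xi$ is bounded the terminal data $\xi^1(\theta,e)$ is bounded as well, so the uniqueness result of~\citet{briandhu2008:quadbsde} for quadratic Brownian BSDEs with convex generators and bounded terminal condition yields $\underline Y^1 = \overline Y^1$ and $\underline Z^1 = \overline Z^1$ for any two solutions $(\underline Y, \underline Z, \underline U)$ and $(\overline Y, \overline Z, \overline U)$ of~\eqref{eq:BSDEJ}.

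Second, with $\underline Y^1 = \overline Y^1$ now identified, the drivers $\underline G^0(t,y,z) = g^0(t,y,z, \underline Y^1_t(t,\cdot) - y)$ and $\overline G^0(t,y,z) = g^0(t,y,z, \overline Y^1_t(t,\cdot) - y)$ of the Brownian BSDE~\eqref{eq:BSDE0} coincide, so the hypotheses $\underline G^0 \leq \overline G^0$ and $\underline G^1 \leq \overline G^1$ of Theorem~\ref{th:BSDEJcomp} hold trivially as equalities. Moreover, $G^0$ inherits from $g$ the convex-quadratic structure in $z$ and the Lipschitz property in $y$: the substitution $u(e) = Y^1_t(t,e) - y$ is independent of $z$ and affine in $y$, so convexity in $z$ is preserved, condition~(3) with $u(e) := Y^1_t(t,e)$ gives Lipschitz continuity in $y$, and the integral term $\int_E |Y^1_t(t,e)-y|\,\lambda_t(e)\,\dd e$ appearing in~(4) is uniformly bounded thanks to condition~(1) and the $\cS^\infty$-boundedness of $Y^1$. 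Hence~\citet{briandhu2008:quadbsde} again supplies the comparison theorem for Brownian BSDEs with drivers $G^0$ and $G^1$ required by Theorem~\ref{th:BSDEJcomp}. The remaining side condition $\underline U_t = \overline U_t = 0$ on $\{t>\tau\}$ is automatic from the decomposition~\eqref{eq:BSDEJsol}, since any solution of the BSDEJ produced by Theorem~\ref{th:BSDEJexist} satisfies $U_t(\cdot) = U^0_t(\cdot)\,\ind_{t<\tau}$, and more generally $\mu$ places no mass on $(\tau, T]$, so $U$ after $\tau$ is $\mu$-irrelevant and may be set to zero without altering~\eqref{eq:BSDEJ}. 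Applying Theorem~\ref{th:BSDEJcomp} with $(\underline g, \underline \xi) = (\overline g, \overline \xi) = (g, \xi)$ then gives $\underline Y \leq \overline Y$; swapping the roles of the two solutions yields the reverse inequality, and hence $\underline Y = \overline Y$. Uniqueness of $Z$ and $U$ follows directly from the decomposition~\eqref{eq:BSDEJsol}: the Brownian components $Z^0$ and $Z^1$ are uniquely pinned down by Brownian martingale representation once $Y^0$ and $Y^1$ are, and $U^0_t(\cdot) = Y^1_t(t,\cdot) - Y^0_t$ on $\{t < \tau\}$ together with $U \equiv 0$ on $\{t > \tau\}$ closes the argument.

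The main obstacle is the careful verification that the composite Brownian drivers $G^0$ and $G^1$ belong to the class of quadratic BSDEs with convex generators treated in~\citet{briandhu2008:quadbsde}; in particular, the preservation of convexity in $z$ and of the quadratic growth estimate after the affine-in-$y$ substitution $u = Y^1(t,\cdot) - y$ in $G^0$ is the delicate point, and crucially uses both the compensator boundedness~(1) and the a priori boundedness of $Y^1$ granted by the hypothesis that solutions live in $\cS^\infty_\bbG[0,T]$. Once this verification is carried out, the remainder of the argument reduces to a straightforward combination of Briand--Hu uniqueness, the comparison Theorem~\ref{th:BSDEJcomp}, and the structural decomposition~\eqref{eq:BSDEJsol}.
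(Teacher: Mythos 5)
The paper does not actually prove this statement: it is quoted, with proof deferred, from \citet[Th.~4.2]{kharroubi:progrenl}, so there is no internal proof to compare against; I can only assess your argument on its own terms. Its broad strategy (first identify the post-default component via quadratic--convex uniqueness in the spirit of \citet{briandhu2008:quadbsde}, then apply the comparison Theorem~\ref{th:BSDEJcomp} twice with identical data) is the right one, but there is a genuine gap at the first and decisive step. You apply Briand--Hu to ``the Brownian BSDEs~\eqref{eq:BSDE1} associated to any two solutions of the BSDEJ''. Nothing available in the paper says that an \emph{arbitrary} solution $(Y,Z,U)$ of~\eqref{eq:BSDEJ} has splitting components $\bigl(Y^1(\theta,e),Z^1(\theta,e)\bigr)$ solving~\eqref{eq:BSDE1} and $(Y^0,Z^0)$ solving~\eqref{eq:BSDE0}: Theorem~\ref{th:BSDEJexist} goes only in the opposite direction, and uniqueness must be proved among all solutions, not only those built through~\eqref{eq:BSDEJsol}. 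What the equation gives you directly is a BSDE for $(Y,Z)$ on the stochastic interval $[\tau\wedge T,T]$, in the enlarged filtration $\bbG$, with driver $g(s,\cdot,\cdot,0)$ (here the no-$u$-dependence condition after $\tau$ is used) and bounded $\cG_T$-measurable terminal value. To pass from this either to the indexed family of $\bbF$-BSDEs for a.e.\ $(\theta,e)$, or to a direct uniqueness/comparison argument for that $\bbG$-BSDE on a stochastic interval, you indispensably need the immersion hypothesis, which guarantees that $W$ remains a Brownian motion in $\bbG$ (and the density hypothesis, to transfer a statement holding at the random point $(\tau,\zeta)$ on $\{t\geq\tau\}$ to the indexed components). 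Your proposal never invokes Assumption~\ref{hp:martingale} at all, although it is an explicit hypothesis of the theorem and this is precisely where it works; until $\underline Y^1=\overline Y^1$ is justified along these lines, the claimed equality $\underline G^0=\overline G^0$ needed to feed Theorem~\ref{th:BSDEJcomp} with identical data is unsupported.

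Two smaller points. First, the transfer of the convexity, Lipschitz and quadratic-growth conditions from $g$ to $g^0$ and $g^1$ is a priori guaranteed only on the events $\{t\leq\tau\}$ and $\{t>\tau\}$ respectively (the paper itself stresses this issue in the remark following Proposition~\ref{prop:conv} and in the proof of Proposition~\ref{prop:01}, where the decomposition is suitably modified), so the assertion that $G^0$ and $G^1$ ``inherit'' the Briand--Hu structure for \emph{all} $t\in[0,T]$ requires a modification/measurable-selection argument, not just the substitution $u=Y^1_t(t,\cdot)-y$. Second, when you set $U=0$ on $\{t>\tau\}$ ``without altering~\eqref{eq:BSDEJ}'', the $\mu$-irrelevance of $U$ after $\tau$ is not enough by itself: it is the condition $g(t,\cdot,\cdot,u)=g(t,\cdot,\cdot,0)$ for $t>\tau\wedge T$ that keeps the driver term unchanged, and this should be said explicitly since it is exactly the role of that hypothesis in the uniqueness statement.
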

\normalcolor

We conclude this Subsection giving a result connecting \emph{a priori} estimates for the first components of solutions to Brownian BSDEs~\eqref{eq:BSDE1} and~\eqref{eq:BSDE0} with an estimate for the first component of the solution to the BSDEJ~\eqref{eq:BSDEJ}. This result was not given in~\citep{kharroubi:progrenl} and will be important to get continuity of the dynamic risk measure we aim to study (see Section~\ref{sec: properties-rm}).

\begin{proposition}\label{prop:BSDEJcont}
Let the assumptions of Theorems~\ref{th:BSDEJexist} and~\ref{th:BSDEJcomp} hold. Let $\bar \xi, \, \hat \xi \in \dL^\infty(\cG_T)$, with associated decompositions:
\begin{gather*}
\bar \xi = \bar \xi^0 \, \ind_{T < \tau} + \bar \xi^1(\tau,\zeta) \, \ind_{T \geq \tau}, \\
\hat \xi = \hat \xi^0 \, \ind_{T < \tau} + \hat \xi^1(\tau,\zeta) \, \ind_{T \geq \tau}.
\end{gather*}
Denote by $(\bar Y, \bar Z, \bar U)$ (resp. $(\hat Y, \hat Z, \hat U)$) the solution to the BSDEJ with driver $g$ and terminal condition $\bar \xi$ (resp. $\hat \xi$), defined by the pairs $(\bar Y^0, \bar Z^0)$ and $(\bar Y^1, \bar Z^1)$ (resp. $(\hat Y^0, \hat Z^0)$ and $(\hat Y^1, \hat Z^1)$) as in~\eqref{eq:BSDEJsol}.

Suppose, moreover, that for each $(\theta,e) \in [0,T] \times E$:
\begin{equation}\label{eq:BSDE01apriori}
\begin{gathered}
	\norm{\bar Y^0 - \hat Y^0}_{\cS^\infty[0,T]} \leq K^0 \norm{\bar \xi^0 - \hat \xi^0}_{\dL^\infty}, \\	
	\norm{\bar Y^1(\theta,e) - \hat Y^1(\theta,e)}_{\cS^\infty[\theta,T]} \leq K^1(\theta,e) \norm{\bar \xi^1(\theta,e) - \hat \xi^1(\theta,e)}_{\dL^\infty},
\end{gathered}
\end{equation}
for some constants $K^0, \, K^1(\theta,e) > 0$.

If we have that $M \coloneqq \max\{M_0, M_1\} < +\infty$, where
\begin{gather*}
M_0 \coloneqq K^0\norm{\bar \xi^0 - \hat \xi^0}_{\dL^\infty}, \\
M_1 \coloneqq \sup_{(\theta,e) \in [0,T] \times E} K^1(\theta,e) \norm{\bar \xi^1(\theta,e) - \hat \xi^1(\theta,e)}_{\dL^\infty},
\end{gather*}
then the following estimate holds:
\begin{equation}\label{eq:deltaYestimate}
\norm{\bar Y - \hat Y}_{\cS^\infty[0,T]} \leq 2M.
\end{equation}
\end{proposition}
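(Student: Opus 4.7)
The plan is to leverage the explicit decomposition~\eqref{eq:BSDEJsol}. Since both $(\bar Y, \bar Z, \bar U)$ and $(\hat Y, \hat Z, \hat U)$ are produced through Theorem~\ref{th:BSDEJexist} from their respective Brownian systems, subtracting the first components termwise yields, for every $t \in [0,T]$ and $\bbP$-a.e.\ $\omega$,
\[
	\bar Y_t - \hat Y_t = \bigl(\bar Y_t^0 - \hat Y_t^0\bigr) \ind_{t < \tau} + \bigl(\bar Y_t^1(\tau,\zeta) - \hat Y_t^1(\tau,\zeta)\bigr) \ind_{t \geq \tau}.
\]
Because the events $\{t < \tau\}$ and $\{t \geq \tau\}$ partition $\Omega$, it suffices to bound each summand separately and combine.

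The pre-default summand is controlled directly by the first line of~\eqref{eq:BSDE01apriori}: from the definition of the $\cS^\infty[0,T]$-norm,
\[
	\bigl|\bar Y_t^0(\omega) - \hat Y_t^0(\omega)\bigr| \leq \norm{\bar Y^0 - \hat Y^0}_{\cS^\infty[0,T]} \leq K^0 \norm{\bar \xi^0 - \hat \xi^0}_{\dL^\infty} = M_0,
\]
for $\bbP$-a.e.\ $\omega$ and every $t \in [0,T]$. For the post-default summand, I would fix $\omega$ on the event $\{t \geq \tau\}$, where $(\tau(\omega), \zeta(\omega)) \in [0,T] \times E$ is a concrete deterministic pair, and apply the second line of~\eqref{eq:BSDE01apriori} with $\theta = \tau(\omega)$, $e = \zeta(\omega)$ to obtain
\[
	\bigl|\bar Y_t^1(\tau,\zeta)(\omega) - \hat Y_t^1(\tau,\zeta)(\omega)\bigr| \leq K^1\bigl(\tau(\omega),\zeta(\omega)\bigr) \, \bigl\| \bar \xi^1\bigl(\tau(\omega),\zeta(\omega)\bigr) - \hat \xi^1\bigl(\tau(\omega),\zeta(\omega)\bigr) \bigr\|_{\dL^\infty}.
\]
The right-hand side is dominated, uniformly in $\omega$, by $M_1 = \sup_{(\theta,e) \in [0,T]\times E} K^1(\theta,e)\norm{\bar\xi^1(\theta,e)-\hat\xi^1(\theta,e)}_{\dL^\infty}$.

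Combining the two estimates and using $\ind_{t<\tau} + \ind_{t \geq \tau} = 1$ gives $|\bar Y_t - \hat Y_t| \leq M_0 + M_1 \leq 2M$ pointwise outside a $\bbP$-null set, and taking the essential supremum over $(\omega,t)\in \Omega \times [0,T]$ yields~\eqref{eq:deltaYestimate}. The only technical subtlety, and the main obstacle I anticipate, is justifying the substitution of the random pair $(\tau,\zeta)$ into the indexed family $\bar Y^1 - \hat Y^1$: one needs to know that~\eqref{eq:BSDE01apriori} admits a version valid outside a single $\bbP$-null set for \emph{all} $(\theta,e)$ simultaneously, so that plugging in measurable random indices remains legitimate. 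This is ensured by the joint measurability built into $\prog_\bbF(\R^+\times E)$ and by the finiteness hypothesis $M_1 < \infty$, which precludes pathological behavior of the constants $K^1(\theta,e)$ on the range of $(\tau,\zeta)$.
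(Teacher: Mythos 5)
Your overall structure (split $\bar Y - \hat Y$ along $\{t<\tau\}$ and $\{t\geq\tau\}$, bound each piece, combine) is the same as the paper's, and the pre-default bound by $M_0$ is fine. The decisive step, however, is exactly the one you flag and then dismiss: substituting the random pair $(\tau,\zeta)$ into the family of estimates in~\eqref{eq:BSDE01apriori}. Each of those estimates holds only outside a $\bbP$-null set $N_{\theta,e}$ that \emph{depends on} $(\theta,e)$, and $[0,T]\times E$ is uncountable, so "fixing $\omega$ and plugging in $\theta=\tau(\omega)$, $e=\zeta(\omega)$" is not legitimate as stated: the given $\omega$ may lie in the exceptional set $N_{\tau(\omega),\zeta(\omega)}$ attached to its own indices. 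Your proposed justification does not close this gap. Joint measurability (membership in $\prog_\bbF(\R^+\times E)$) only guarantees that the bad set
\begin{equation*}
A^c \coloneqq \Bigl\{(\omega,\theta,e) \colon \max\bigl\{\sup_{t\in[0,T]}|\bar Y^0_t-\hat Y^0_t|,\ \sup_{t\in[\theta,T]}|\bar Y^1_t(\theta,e)-\hat Y^1_t(\theta,e)|\bigr\} > M\Bigr\}
\end{equation*}
is product-measurable; it says nothing about whether the law of $(\omega,\tau(\omega),\zeta(\omega))$ charges it. Likewise $M_1<\infty$ controls the size of the constants $K^1(\theta,e)$, not the exceptional null sets. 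If, say, $\tau$ were a functional of the Brownian path or had an atomic conditional law, the substitution could genuinely fail.

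What actually rescues the argument — and what the paper uses — is Assumption~\ref{hp:density}. Since for each fixed $(\theta,e)$ the section $A^c(\cdot,\theta,e)$ is $\bbP$-null by~\eqref{eq:BSDE01apriori}, one computes, using the conditional density $\gamma_T$ and Fubini--Tonelli,
\begin{equation*}
\bbP\bigl(\{\omega\colon (\omega,\tau(\omega),\zeta(\omega))\in A^c\}\bigr) = \bbE\Bigl[\bbE\bigl[\ind_{A^c}(\cdot,\tau,\zeta)\mid\cF_T\bigr]\Bigr] = \int_{[0,T]\times E} \bbE\bigl[\ind_{A^c}(\cdot,\theta,e)\,\gamma_T(\theta,e)\bigr]\,\dd\theta\,\dd e = 0,
\end{equation*}
so outside a single null set both summands are bounded by $M$ simultaneously, and~\eqref{eq:deltaYestimate} follows. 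To make your proof correct you must insert this Fubini/density argument (or an equivalent one) in place of the appeal to measurability and finiteness of $M_1$; as written, the key step is asserted with an insufficient justification.
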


\begin{proof}
Fix $\bar \xi, \, \hat \xi \in \dL^\infty(\cG_T)$. Let us define the set
\begin{multline*}
A \coloneqq \{(\omega, \theta, e) \in \Omega \times [0,T] \times E \colon
\\
\max\{\sup_{t \in [0,T]} |\bar Y^0_t - \hat Y^0_t|, \sup_{t \in [\theta,T]} |\bar Y^1_t(\theta,e) - \hat Y^1_t(\theta,e)| \} \leq M \}.
\end{multline*}
Notice that $A \in \cF_T \otimes \cB(\R^+) \otimes \cB(E)$. Let us denote by $\tilde \Omega \coloneqq \{\omega \in \Omega \colon \bigl(\omega, \tau(\omega), \zeta(\omega)\bigr) \in A\} \in \cA$. We have that $\bbP(\tilde \Omega) = 1$, since, thanks to Assumption~\ref{hp:density}
\begin{align*}
\bbP(\tilde \Omega^c)
&= \bbE\bigl[\ind_{A^c}\bigl(\omega, \tau(\omega), \zeta(\omega)\bigr)\bigr] = \bbE\bigl[\bbE\bigl[\ind_{A^c}\bigl(\omega, \tau(\omega), \zeta(\omega)\bigr)\bigr] \mid \cF_T \bigr]
\\
&= \int_{[0,T] \times E} \bbE\bigl[\ind_{A^c}\bigl(\omega, \theta, e\bigr) \gamma_T(\theta,e)\bigr] \, \dd \theta \, \dd e
\end{align*}
and, given the definition of $A$ and~\eqref{eq:BSDE01apriori}, we have that $\ind_{A^c}\bigl(\cdot, \theta, e\bigr) \gamma_T(\theta,e) = 0, \, \bbP$-a.s., for all $(\theta,e) \in [0,T] \times E$, hence $\bbP(\tilde \Omega^c) = 0$.

Thanks to the decomposition of the processes $\bar Y$ and $\hat Y$ we have that for all $t \in [0,T]$
\begin{equation*}
|\bar Y_t - \hat Y_t| \leq |\bar Y^0_t - \hat Y^1_t| \ind_{t < \tau} + |\bar Y^1_t(\tau,\zeta) - \hat Y^1_t(\tau, \zeta)| \ind_{t \geq \tau}, \quad \bbP\text{-a.s.}
\end{equation*}
Since $\tilde \Omega$ is a set of full $\bbP$-measure, we also get:
\begin{equation*}
|\bar Y^0_t - \hat Y^0_t| \ind_{t < \tau} \leq M,
\quad
|\bar Y^1_t(\tau,\zeta) - \hat Y^1_t(\tau, \zeta)| \ind_{t \geq \tau} \leq M,
\quad
\bbP\text{-a.s.},
\end{equation*}
whence $|\bar Y_t - \hat Y_t| \leq 2M$, $\bbP$-a.s., for all $t \in [0,T]$, and the claim follows immediately.
\end{proof}

\subsection{Risk measure induced by the BSDEJ}
The aim of this Subsection is to introduce a \emph{dynamic risk measure} on the filtration $\bbG$, defined through the BSDEJ~\eqref{eq:BSDEJ}, and to decompose it in the same spirit of Lemma~\ref{lemma:procdecomp}. In this Subsection, we will work under the assumptions of Theorems~\ref{th:BSDEJexist} and~\ref{th:BSDEJcomp}, to have existence and uniqueness of the solution of the BSDEJ~\eqref{eq:BSDEJ}.

In our paper, a \emph{$\bbG$-dynamic risk measure} is a family of \emph{$\bbG$-conditional risk measures} $\rho_t \colon \dL^\infty(\cG_T) \to \dL^\infty(\cG_t)$, with $t \in [0,T]$. Following~\citep{rosazza:riskgexpect} and~\citep{detlefsen-scandolo2005}, we call a $\bbG$-conditional risk measure any map $\rho_t$ such that:
\begin{enumerate}
\item $\rho_t \colon \dL^\infty(\cG_T) \to \dL^\infty(\cG_t)$, for all $t \in [0,T]$;
\item $\rho_0$ is a \emph{static risk measure}, i.e., a functional $\rho_0 \colon \dL^\infty(\cG_T) \to \R$;
\item $\rho_T(\xi) = -\xi$, for all $\xi \in \dL^\infty(\cG_T)$.
\end{enumerate}
Clearly, we have similar definitions for $\bbF$- and $\bbH$-dynamic and conditional risk measures.

It is well known (see, e.g., \citep{barrieu2009:pricing, elkaroui2009:subaddriskmeas, quenez:BSDEJ, rosazza:riskgexpect}) that an important class of dynamic risk measures is the one induced by BSDEs. With this expression we mean that the following definition, as it is immediate to verify, gives birth to a family of conditional risk measures $\rho_t$ as defined above:
\begin{equation}\label{eq:dynriskmeas}
	\rho_t(\xi) \coloneqq Y_t^{-\xi}, \quad t \in [0,T], \, \xi \in \dL^\infty(\cG_T),
\end{equation}
where $Y^{-\xi} \coloneqq (Y_t^{-\xi})_{t \geq 0}$ is the first component of the solution of the BSDEJ~\eqref{eq:BSDEJ} with terminal condition $-\xi$. {\color{blue} The driver $g$ of the BSDEJ is fixed throughout this Section, therefore, in order to avoid an unnecessary heavy notation, we omit the dependence of $\rho$ on it.}

It is rather easy to see that, thanks to~\eqref{eq:BSDEJsol}, we can provide the following decomposition of $\rho$ into two dynamic risk measures, one acting on the event $\{t < \tau\}$ and the other on $\{t \geq \tau\}$.

\begin{proposition}\label{prop:rhodecomp}
	Under the assumptions of Theorems~\ref{th:BSDEJexist} and \ref{th:BSDEJcomp}, there exist an $\bbF$-dynamic risk measure $\rho^0 \coloneqq (\rho^0_t)_{t \in [0,T]}$ and an $\bbH$-dynamic risk measure $\rho^1 \coloneqq (\rho^1_t)_{t \in [0, T]}$,
	such that the dynamic risk measure $\rho$ defined in~\eqref{eq:dynriskmeas} admits the following decomposition:
	\begin{equation}\label{eq:rhodecomp}
		\rho_t(\xi) = \rho^0_t(\xi^0) \, \ind_{t < \tau} + \rho^1_t\bigl(\xi^1(\tau,\zeta)\bigr) \, \ind_{t \geq \tau}, \quad t \in [0,T], \, \xi \in \dL^\infty(\cG_T).
	\end{equation}
\end{proposition}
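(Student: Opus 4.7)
The plan is to exploit the decomposition~\eqref{eq:BSDEJsol} of the first component of the BSDEJ solution supplied by Theorem~\ref{th:BSDEJexist}. Fix $\xi \in \dL^\infty(\cG_T)$. By Lemma~\ref{lemma:procdecomp}(a) one has the decomposition
\[
	\xi = \xi^0 \, \ind_{T<\tau} + \xi^1(\tau,\zeta) \, \ind_{T\geq\tau},
\]
with $\xi^0 \in \dL^\infty(\cF_T)$ and $\xi^1(\cdot,\theta,e)$ an $\cF_T$-measurable bounded random variable for each $(\theta,e) \in [0,T] \times E$. Applying Theorem~\ref{th:BSDEJexist} to the BSDEJ~\eqref{eq:BSDEJ} with terminal condition $-\xi$ yields
\[
	Y^{-\xi}_t = Y^{0,-\xi}_t \, \ind_{t<\tau} + Y^{1,-\xi}_t(\tau,\zeta) \, \ind_{t\geq\tau}, \qquad t \in [0,T],
\]
where $Y^{1,-\xi}(\theta,e)$ and $Y^{0,-\xi}$ are the first components of~\eqref{eq:BSDE1} and~\eqref{eq:BSDE0} with terminal conditions $-\xi^1(\theta,e)$ and $-\xi^0$ respectively.

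The next step is to produce the two candidate conditional risk measures. Since $\cH_T = \cF_T \lor \sigma(\tau,\zeta)$, a standard monotone-class argument shows that any $\eta \in \dL^\infty(\cH_T)$ admits a representative $\tilde\eta$, $\cF_T \otimes \cB(\R^+) \otimes \cB(E)$-measurable and bounded, with $\eta = \tilde\eta(\tau,\zeta)$. I then set
\[
	\rho^1_t(\eta) \coloneqq \widehat Y^{1,-\tilde\eta}_t(\tau,\zeta), \qquad \rho^0_t(\eta') \coloneqq \widehat Y^{0,-\eta'}_t, \qquad t \in [0,T],
\]
for $\eta \in \dL^\infty(\cH_T)$ and $\eta' \in \dL^\infty(\cF_T)$, where $\widehat Y^{1,-\tilde\eta}(\theta,e)$ is the first component of~\eqref{eq:BSDE1} with terminal condition $-\tilde\eta(\theta,e)$ and $\widehat Y^{0,-\eta'}$ is the first component of~\eqref{eq:BSDE0} with terminal condition $-\eta'$. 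Specializing $\eta = \xi^1(\tau,\zeta)$ and $\eta' = \xi^0$ and comparing with the decomposition of $Y^{-\xi}$ above will immediately give~\eqref{eq:rhodecomp}.

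What remains is to check that $\rho^0$ and $\rho^1$ are genuinely $\bbF$- and $\bbH$-dynamic risk measures in the sense defined before the proposition. For $\rho^1$, the $\bbF$-progressive measurability of $\widehat Y^{1,-\tilde\eta}(\theta,e)$ in $(\omega,t)$, joined with its Borel measurability in $(\theta,e)$, yields $\cH_t$-measurability of $\rho^1_t(\eta)$, while the terminal condition of~\eqref{eq:BSDE1} gives $\rho^1_T(\eta) = -\tilde\eta(\tau,\zeta) = -\eta$; the analogous and simpler check for $\rho^0$ uses $\widehat Y^{0,-\eta'} \in \cS^\infty_\bbF[0,T]$. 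The one delicate point I expect to be the main obstacle is that the driver of~\eqref{eq:BSDE0} contains the coupling term $\widehat Y^{1,-\xi^1}_s(s,\cdot) - \widehat Y^{0,-\eta'}_s$, so strictly speaking $\rho^0$ depends implicitly on $\xi^1$; the statement of the proposition is to be understood with this dependence in mind, namely that for each fixed $\xi$ one can build $\rho^0$ and $\rho^1$ as above to obtain~\eqref{eq:rhodecomp}.
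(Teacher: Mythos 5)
Your proposal is essentially the paper's own proof: both read the decomposition of $Y^{-\xi}$ off~\eqref{eq:BSDEJsol} from Theorem~\ref{th:BSDEJexist} and define $\rho^0_t$ and $\rho^1_t$ as the first components of the Brownian BSDEs~\eqref{eq:BSDE0} and~\eqref{eq:BSDE1} with terminal conditions $-\xi^0$ and $-\xi^1(\theta,e)$, and the coupling-induced dependence of $\rho^0$ on $\xi^1$ that you flag is exactly the feature the paper accepts as well (cf.\ Remark~\ref{rem:gexpect}). The one adjustment needed is on $\{t < \tau\}$: there $\widehat Y^{1,-\tilde\eta}_t(\tau,\zeta)$ is not defined, since~\eqref{eq:BSDE1} with index $(\tau,\zeta)$ is posed only on $[\tau \land T, T]$, so, as in the paper, you should simply declare $\rho^1_t$ on that event to be an arbitrary family of $\bbH$-conditional risk measures — a choice that is irrelevant for~\eqref{eq:rhodecomp}, which uses $\rho^1_t$ only on $\{t \geq \tau\}$.
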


\begin{proof}
	Let $t \in [0,T]$ and $-\xi \in \dL^\infty(\cG_T)$ be fixed, and let $-\xi^0$ and $-\xi^1$ as in~\eqref{eq:xidecomp}. Under the assumptions of Theorems~\ref{th:BSDEJexist} and \ref{th:BSDEJcomp}, there exists a unique solution to the BSDEJ~\eqref{eq:BSDEJ} given by the triple $(Y,Z,U)$ defined in~\eqref{eq:BSDEJsol}. In particular:
	\begin{equation*}
		Y_t = Y_t^0 \, \ind_{t < \tau} + Y_t^1(\tau,\zeta) \, \ind_{t \geq \tau}, \quad t \in [0,T].
	\end{equation*}
	Since $Y^0$ and $Y^1(\tau,\zeta)$ are solutions to their {\color{blue} respective Brownian BSDEs~\eqref{eq:BSDE0} and~\eqref{eq:BSDE1}} (with terminal conditions $-\xi^0$ and $-\xi^1(\tau,\zeta)$ respectively), we can define for any $t \in [0,T]$ the maps $\rho^0_t$ and $\rho^1_t$ and easily verify that they are, respectively, $\bbF$- and $\bbH$-conditional risk measures:
	\begin{align*}
	\rho^0_t(\xi^0) &\coloneqq Y_t^0,
	\\
	\rho^1_t(\xi^1(\tau,\zeta)) &\coloneqq \bar\rho_t(\xi^1(\tau,\zeta)) \ind_{t < \tau} + Y_t^1(\tau,\zeta) \ind_{t \geq \tau}.
	\end{align*}
	Here, $\bar \rho_t \colon \dL^\infty(\cH_T) \to \dL^\infty(\cH_t)$, $t \in [0,T]$, is a family of conditional risk measures that is irrelevant to specify.
	The decomposition provided in~\eqref{eq:rhodecomp} follows immediately.
\end{proof}

\begin{remark}\label{rem:gexpect}
The construction of the dynamic risk measures $\rho^0$ and $\rho^1$ provided in the proof of the previous Proposition is reminiscent of $g$-conditional expectations, introduced by Peng in~\citep{peng1997:BSDEsandgexp}. Here, the peculiar structure of the Brownian BSDEs~\eqref{eq:BSDE1} and~\eqref{eq:BSDE0} does not allow to identify these dynamic risk measures as true $g^0$- and $g^1$-conditional expectations respectively. For instance, consider that the process $Y^1(\tau, \zeta)$ solves the BSDE~\eqref{eq:BSDE1} (with $(\theta,e) = (\tau, \zeta)$) only on the random time interval $[\tau \land T, T]$, hence we do not obtain a $g^1$-conditional expectation defined for all $t \in [0,T]$. Also, think about the fact that the solution $(Y^0, Z^0)$ of the BSDE~\eqref{eq:BSDE0} depends on the $\dB(E)$-valued process $\bigl(Y^1_s(s, \cdot)\bigr)_{s \geq 0}$, i.e., it is intertwined with the family of indexed processes that provide solutions to each of the BSDEs~\eqref{eq:BSDE1}, as $(\theta,e)$ varies.
\end{remark}

\begin{remark}\label{rem:rhoconstruction}
The dynamic risk measures $\rho^0$ and $\rho^1$ obtained thanks to~\eqref{eq:rhodecomp} have a very intuitive financial interpretation. On the one hand, the $\bbF$-conditional risk measure $\rho^0_t$ quantifies the risk associated to some financial position at time $t \in [0,T]$ in a world where the only available information is given by $\bbF$, i.e., no default has occurred up to time $t$. On the other hand, the conditional risk measure $\rho^1_t$ expresses the risk in a context where the default has occurred prior to (or at) time $t$, hence the information available is given by $\bbH$. The risk measure is updated to take into account the new information. This fact can be clearly seen in the example provided in Section~\ref{sec:example}, where we exhibit a dynamic entropic risk measure possessing this feature.
\end{remark}

Before concluding this Section, we want to make a small digression to explain how this reasoning can be reversed, i.e., how it is possible to construct a $\bbG$-dynamic risk measure starting from an appropriate pair of dynamic risk measures $\rho^0$ and $\rho^1$. We provide two possible examples of this construction. Notice that we are not assuming in what follows that the dynamic risk measures involved are induced by BSDEs.

\begin{example}
Let us be given an $\bbF$-dynamic risk measure $\rho^0$ and an $\bbH$-dynamic risk measure $\rho^1$. Then, we obtain a $\bbG$-dynamic risk measure $\rho$ defining:
\begin{equation*}
	\rho_t(\xi) \coloneqq \rho^0_t(\xi^0) \ind_{t < \tau} + \rho^1_t(\xi^1(\tau,\zeta)) \ind_{t \geq \tau},	\quad t \in [0,T], \, \xi \in \dL^\infty(\cG_T).
\end{equation*}
The only delicate property to check in order to ensure that $\rho$ is a true $\bbG$-dynamic risk measure is that $\rho_t(\xi)$ is a $\cG_t$-measurable random variable for any $t \in [0,T]$ and any $\xi \in \dL^\infty(\cG_T)$ (it is immediate to verify that it is essentially bounded).

Notice that, since $\rho^1$ is an $\bbH$-dynamic risk measure by assumption, $\rho_t^1(\xi^1(\tau,\zeta))$ is an $\cH_t$-measurable random variable. Hence, by (i) of~\citep[Prop. 2.7]{callegaro2013:carthaginian}, there exists an $\cF_t \otimes \cB(\R^+) \otimes \cB(E)$-measurable function $x_t \colon \Omega \times \R^+ \times E \to \R$ such that $\rho_t^1(\xi^1(\tau,\zeta))(\omega) = x_t(\omega, \tau(\omega), \zeta(\omega))$. Therefore, the random variable $\rho_t(\xi)$ is of the form:
\begin{equation*}
	\rho_t(\xi) = \rho^0_t(\xi^0) \ind_{t < \tau} + x_t(\tau,\zeta) \ind_{t \geq \tau}
\end{equation*}
and by Lemma~\ref{lemma:procdecomp}(a) it is $\cG_t$-measurable.

This example is rather artificial, because it requires a financial agent to specify \emph{a priori} an $\bbH$-dynamic risk measure, i.e., a risk measure defined on the initially enlarged filtration. However, it serves as a basis for the next example, which is in our opinion more relevant for applications.
\end{example}

\begin{example}
Let us be given an $\bbF$-dynamic risk measure $\rho^0$ and an indexed family of $\bbF$-dynamic risk measures, i.e., for each $(\theta,e) \in \R^+ \times E$, we have an $\bbF$-dynamic risk measure $\rho^{\theta,e}$. Suppose that for each fixed $t \in [0,T]$ and $x \in \dL^\infty(\cF_T)$ the map $f_t^x \colon \R^+ \times E \to \dL^\infty(\cF_t)$ defined as $f_t^x(\theta, e) \coloneqq \rho_t^{\theta,e}(x)$ is measurable. Then, recalling again (i) of~\citep[Prop. 2.7]{callegaro2013:carthaginian}, we obtain an $\bbH$-dynamic risk measure $\rho^1$ defining:
\begin{equation*}
\rho^1_t(X) \coloneqq \rho^1_t(x(\tau,\zeta)) \coloneqq f_t^{x(\theta,e)}(\theta,e)_{\vert (\theta,e) = (\tau,\zeta)}, \quad t \in [0,T], \, X \in \dL^\infty(\cH_T),
\end{equation*}
where $x$ is an $\cF_T \otimes \cB(\R^+) \otimes \cB(E)$-measurable function, such that $X = x(\tau,\zeta)$.
In fact, thanks to the measurability properties involved, we have that $\rho^1_t(X) \in \dL^\infty(\cH_t)$; moreover, $\rho^1_0(X) \in \R$, since $f_0^x(\theta,e) \in \R$ for any $(\theta,e) \in \R^+ \times E$, given the fact that $\rho^{\theta,e}$ is an $\bbF$-dynamic risk measure; finally, $\rho^1_T(X) = -x(\theta,e)_{\vert (\theta,e) = (\tau,\zeta)} = -x(\tau, \zeta) = -X$.

Finally, we get a $\bbG$-dynamic risk measure proceeding exactly as in the previous example.
This shows that it is possible to define the desired dynamic risk measure by specifying, in addition to a pre-default $\bbF$-dynamic risk measure, a collection of $\bbF$-dynamic risk measures for any possible default time and default mark. This is a feasible and reasonable requirement: in any possible scenario, a financial agent has to specify how she/he intends to update her/his risk measure to capture the riskiness in a financial environment affected by some default event.
\end{example}

\section{\texorpdfstring{Properties of $\bbG$-dynamic risk measures}{Properties of G-dynamic risk measures}} \label{sec: properties-rm}
Let $\rho$ be the dynamic risk measure defined in~\eqref{eq:dynriskmeas}. The aim of this Section is to investigate its properties, that we collect hereunder (see \citep{barrieu2009:pricing,bion-nadal2008,delbaen2006structure,detlefsen-scandolo2005,rosazza:riskgexpect} for more details). These properties can be easily reformulated for $\bbF$- and $\bbH$-dynamic risk measures, for instance whenever referring to the dynamic risk measures $\rho^0$ and $\rho^1$ appearing in the decomposition~\eqref{eq:rhodecomp}. Throughout this Section we assume that the assumptions of Theorems~\ref{th:BSDEJexist} and \ref{th:BSDEJcomp} are in force.

\begin{enumerate}
\item\label{property:01} \emph{Zero-one law:} For all $t \in [0,T]$ and all $A \in \cG_t$:
\begin{equation*}
	\rho_t(\xi \ind_A) = \ind_A \, \rho_t(\xi), \quad \xi \in \dL^\infty(\cG_T).
\end{equation*}
\item\label{property:cashadd} \emph{Translation invariance:} For all $t \in [0,T]$ and all $\eta \in \dL^\infty(\cG_t)$:
\begin{equation*}
	\rho_t(\xi + \eta) = \rho_t(\xi) - \eta, \quad \xi \in \dL^\infty(\cG_T).
\end{equation*}
\item\label{property:hom} \emph{Positive homogeneity:} For all $t \in [0,T]$ and all $\lambda \in \dL^\infty(\cG_t)$, $\lambda \geq 0$:
\begin{equation*}
	\rho_t(\lambda \xi ) = \lambda \, \rho_t(\xi), \quad \xi \in \dL^\infty(\cG_T).
\end{equation*}
\item\label{property:mon} \emph{Monotonicity:} For all $\xi, \eta \in \dL^\infty(\cG_T)$, with $\xi \leq \eta$:
\begin{equation*}
	\rho_t(\xi) \geq \rho_t(\eta), \quad t \in [0,T].
\end{equation*}
\item\label{property:conv} \emph{Convexity:} For all $\xi, \eta \in \dL^\infty(\cG_T)$ and all $\alpha \in [0,1]$:
\begin{equation*}
	\rho_t\bigl(\alpha \xi + (1-\alpha\bigr) \eta) \leq \alpha \rho_t(\xi) + (1-\alpha) \rho_t(\eta), \quad t \in [0,T].
\end{equation*}
\item\label{property:Fatou} \emph{Fatou property:} For any sequence $(\xi_n)_{n \in \N} \subset \dL^\infty(\cG_T)$ and $\xi \in \dL^\infty(\cG_T)$ such that $\xi_n \to \xi$:
\begin{equation*}
	\rho_t(\xi) \leq \liminf_{n \to \infty} \rho_t(\xi_n), \quad t \in [0,T].
\end{equation*}
\item\label{property:timecons} \emph{Time-consistency:} For any $\bbG$-stopping time $\sigma \leq T$, and $\xi \in \dL^\infty(\cG_T)$:
	\begin{equation*}
		\rho_t(\xi) = \rho_t\bigl(-\rho_\sigma(\xi)\bigr), \quad t \leq \sigma.
	\end{equation*}
\end{enumerate}

\begin{proposition}\label{prop:01}
	The dynamic risk measure $\rho$ defined in~\eqref{eq:dynriskmeas} satisfies the following properties:
	\begin{enumerate}
	\item \emph{Zero-one law} if either\footnote{Zero as last argument of the process $g$ stands for the identically null function on $\dB(E)$.} $g(t,0,0,0) = 0, \, \bbP$-a.s., for all $t \in [0,T]$, or both $\rho^0$ and $\rho^1$ satisfy this property.
	\item \emph{Translation invariance} if either $g$ does not depend on $y$ or both $\rho^0$ and $\rho^1$ satisfy this property.
	\item \emph{Positive homogeneity} if either $g$ is positively homogeneous with respect to $(y,z,u)$, $\bbP$-a.s., for all $t \in [0,T]$, or both $\rho^0$ and $\rho^1$ satisfy this property.
	\end{enumerate}
\end{proposition}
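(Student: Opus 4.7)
The plan is to prove each of the three properties by treating separately the two alternative hypotheses stated in each item: the condition on the driver $g$, and the corresponding property holding for both $\rho^0$ and $\rho^1$. In both cases the key tool is uniqueness of the solution $(Y,Z,U)$ of the BSDEJ~\eqref{eq:BSDEJ}, which is guaranteed in our setting by the standing assumptions of Theorems~\ref{th:BSDEJexist} and~\ref{th:BSDEJcomp}.

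Under the condition on $g$, I would work directly at the BSDEJ level. Fix $\xi \in \dL^\infty(\cG_T)$ and let $(Y,Z,U)$ be the solution with terminal condition $-\xi$. For (a), given $A \in \cG_t$, the candidate $(\ind_A Y, \ind_A Z, \ind_A U)$ is $\bbG$-adapted with $U$-component vanishing on $\{s > \tau\}$ by the structure~\eqref{eq:BSDEJsol}, and since $A \in \cG_t \subset \cG_s$ for $s \geq t$, the factor $\ind_A$ may be pulled inside every integral on $[t,T]$; the hypothesis $g(u,0,0,0)=0$ then yields $\ind_A \, g(u,Y_u,Z_u,U_u) = g(u, \ind_A Y_u, \ind_A Z_u, \ind_A U_u)$ pointwise, so the candidate solves the BSDEJ on $[t,T]$ with terminal $-\ind_A \xi$. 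For (b), if $g$ is independent of $y$, then $(Y+\eta, Z, U)$ with $\eta \in \dL^\infty(\cG_t)$ solves the BSDEJ on $[t,T]$ with terminal $-\xi + \eta$. For (c), if $g$ is positively homogeneous in $(y,z,u)$, then $(\lambda Y, \lambda Z, \lambda U)$ with $\lambda \in \dL^\infty(\cG_t)$, $\lambda \geq 0$, solves the BSDEJ on $[t,T]$ with terminal $-\lambda \xi$ (pushing $\lambda$ through the driver and inside the stochastic integrals by $\cG_s$-measurability). In each case, uniqueness of the BSDEJ solution on $[t,T]$ gives the corresponding pointwise identity for $\rho_t$.

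Under the condition on $\rho^0$ and $\rho^1$, I would use the decomposition~\eqref{eq:rhodecomp} together with Lemma~\ref{lemma:procdecomp}(a), which splits every $\cG_t$-measurable factor ($\ind_A$, $\eta$, or $\lambda$) into an $\cF_t$-part supported on $\{t < \tau\}$ and an $\cF_t \otimes \cB(\R^+) \otimes \cB(E)$-indexed part supported on $\{t \geq \tau\}$. For each of (a), (b), (c) I would identify the $\cG_T$-decomposition of $\xi \ind_A$, $\xi + \eta$, $\lambda \xi$ by a direct event-by-event analysis, then apply~\eqref{eq:rhodecomp} together with the hypothesized property of $\rho^0$ (acting on $\cF_t$-measurable inputs) and of $\rho^1$ (acting on $\cH_t$-measurable inputs) to recover the claimed identity for $\rho_t$.

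The main point to watch in this second route is the reconciliation of decompositions taken at different times: the $\cG_T$-decomposition of $\xi \ind_A$, $\xi + \eta$, $\lambda \xi$ splits the event $\{T \geq \tau\}$ into the two regimes $\{\tau \leq t\}$ and $\{t < \tau \leq T\}$, and only the first one is consistent with the $\cG_t$-decomposition of the auxiliary factor. Fortunately, the $\rho^1$-summand in~\eqref{eq:rhodecomp} is multiplied by $\ind_{t \geq \tau}$, which forces $\tau \leq t$ and hence eliminates the inconsistent regime, so the identity collapses to the expected form. Once this bookkeeping is carried out, each item follows immediately from the properties assumed on $\rho^0$ and $\rho^1$.
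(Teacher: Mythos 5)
Your proposal is correct in substance and its second half (both $\rho^0$ and $\rho^1$ satisfy the property) is essentially the paper's argument: decompose the auxiliary factor via Lemma~\ref{lemma:procdecomp}(a), use $\{T<\tau\}\subset\{t<\tau\}$ to identify the $\cG_T$-decomposition of $\xi\ind_A$, $\xi+\eta$, $\lambda\xi$, and let the indicators $\ind_{t<\tau}$, $\ind_{t\geq\tau}$ in~\eqref{eq:rhodecomp} absorb the regime $\{t<\tau\leq T\}$ --- exactly the bookkeeping point you flag. Where you genuinely diverge is the driver-condition case: you build the candidate $(\ind_A Y,\ind_A Z,\ind_A U)$ (resp.\ the shifted or scaled triple) directly at the level of the BSDEJ~\eqref{eq:BSDEJ} and invoke uniqueness of the BSDEJ, whereas the paper descends to the decomposed Brownian BSDEs~\eqref{eq:BSDE0}--\eqref{eq:BSDE1}, constructs $\hat Y^0=\ind_{A^0}\bar Y^0$, $\hat Y^1=\ind_A\bar Y^1$, etc., and concludes via uniqueness of those Brownian equations. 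Your route is cleaner in one respect: it avoids the subtlety, which the paper must address explicitly, that a hypothesis such as $g(t,0,0,0)=0$ only constrains $g^0$ on $\{t\leq\tau\}$ and $g^1$ on $\{t>\tau\}$, forcing a redefinition of $g^0,g^1$ off those events before the decomposed argument can run. The price is that you appeal to uniqueness of the BSDEJ on the sub-interval $[t,T]$, while in this framework existence, uniqueness and comparison (Theorems~\ref{th:BSDEJexist}, \ref{th:BSDEJcomp}, \ref{th:BSDEJquaduniq}) are stated for the equation on $[0,T]$ and are themselves obtained through the decomposed Brownian system; so either you should justify the restriction to $[t,T]$ (e.g., via the comparison theorem applied twice with equal data, noting that your candidate inherits $U=0$ on $\{s>\tau\}$ from~\eqref{eq:BSDEJsol}, or via the flow-property machinery used in Proposition~\ref{prop:timeconsist}), or you end up decomposing the candidate anyway, which collapses your argument into the paper's. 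This is a gap of justification rather than of substance, and with that one sentence added your proof is complete; note also that the paper only writes out item (a) in detail and declares (b) and (c) analogous, so your uniform treatment of all three items is, if anything, slightly more explicit.
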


\begin{proof}
	We will prove only property~(\ref{property:01}), since the others can be shown with similar arguments.
	
	Let us fix $t \in [0,T], \, A \in \cG_t$ and $\xi \in \dL^\infty(\cG_T)$. We make, first, some preliminary remarks. Since $X \coloneqq \xi \ind_A$ is $\cG_T$-measurable and $\ind_A$ is $\cG_t$-measurable, from Lemma~\ref{lemma:procdecomp} we immediately get:
	\begin{gather*}
		X = X^0 \ind_{T < \tau} + X^1(\tau,\zeta) \ind_{T \geq \tau}, \\
		\ind_A = I^0 \ind_{t < \tau} + I^1(\tau,\zeta) \ind_{t \geq \tau},
	\end{gather*}
	where $X^0, X^1, I^0, I^1$ are appropriately defined as in~\eqref{eq:xidecomp}.
	
	Since $\ind_A \ind_{t < \tau} = I^0\ind_{t < \tau}$, on the event $\{t < \tau\}$ the random variable $I^0$ must take values $0$ or $1$, hence if we define:
	\begin{equation*}
		A^0 \coloneqq \{\omega \in \Omega \colon I^0(\omega) = 1\} \in \cF_t,
	\end{equation*}
	we get $\ind_A \ind_{t < \tau} = \ind_{A^0} \ind_{t < \tau}$. Similarly, defining:
	\begin{equation*}
		A^1 \coloneqq \{\omega \in \Omega \colon I^1(\omega, \tau(\omega), \zeta(\omega)) = 1\} \in \cH_t,
	\end{equation*}
	we get $\ind_A \ind_{t \geq \tau} = \ind_{A^1} \ind_{t \geq \tau}$. Therefore, since $\{T < \tau\} \subset \{t < \tau\}$,
	\begin{equation*}
		\ind_A = \ind_{A^0} \ind_{t < \tau} + \ind_{A^1} \ind_{t \geq \tau} \quad \Longrightarrow \quad \xi \ind_A = \overbrace{\xi^0 \ind_{A^0}}^{X^0} \ind_{T < \tau} + \underbrace{\xi^1(\tau,\zeta)\ind_A}_{X^1(\tau,\zeta)} \ind_{T \geq \tau}.
	\end{equation*}
	
	 If $\rho^0$ and $\rho^1$ both satisfy the zero-one law, then:
	 \begin{gather*}
	 \begin{aligned}
	 	\rho^0_t(X^0) = \rho^0_t(\xi^0 \ind_{A^0}) &= \ind_{A^0} \rho^0_t(\xi^0),
	 	&
	 	\rho^1_t(X^1(\tau,\zeta)) = \rho^1_t(\xi^1(\tau,\zeta) \ind_A) &= \ind_A \rho^1_t(\xi^1(\tau,\zeta)),
	 \end{aligned}
	 \\
	 \Downarrow
	 \\
	 \begin{split}
	 \rho_t(\xi \ind_A) = \rho_t(X) = \rho^0_t(X^0) \ind_{t < \tau} + \rho^1_t(X^1(\tau,\zeta) \ind_{t \geq \tau} = \ind_{A^0} \rho^0_t(\xi^0) \ind_{t < \tau} + \ind_A \rho^1_t(\xi^1(\tau,\zeta))\ind_{t \geq \tau}
	 \\
	 = \ind_A \rho^0_t(\xi^0) \ind_{t < \tau} + \ind_A \rho^1_t(\xi^1(\tau,\zeta))\ind_{t \geq \tau} = \ind_A \rho_t(\xi).
	 \end{split}
	 \end{gather*}
	
	 {\color{blue}I}t remains to prove that the zero-one law holds also if $g(t,0,0,0) = 0, \, \bbP$-a.s., for all $t \in [0,T]$. Now, consider the pairs $(\bar Y^0, \bar Z^0)$ and $(\bar Y^1(\theta,e), \bar Z^1(\theta,e))$, for any $(\theta,e) \in \R^+ \times E$, that are solutions to the Brownian BSDEs~\eqref{eq:BSDE0} and~\eqref{eq:BSDE1}, with terminal conditions $-\xi^0$ and $-\xi^1(\theta,e)$ respectively.
	 Let us define:
	 \begin{align*}
	 	\hat Y^0 &\coloneqq \ind_{A^0} \bar Y^0
	 	&
	 	\hat Z^0 &\coloneqq \ind_{A^0} \bar Z^0
	 	\\
	 	\hat Y^1(\theta,e) &\coloneqq \ind_A \bar Y^1(\theta,e)
	 	&
	 	\hat Z^1(\theta,e) &\coloneqq \ind_A \bar Z^1(\theta,e)
	 \end{align*}
	
	 If $g(s,0,0,0) = 0, \, \bbP$-a.s. for any $s \in [0,T]$, then:
	 \begin{align*}
	 	&\text{On } \{s < \tau\} \quad g^0(s,0,0,0)=0, \\
	 	&\text{On } \{s \geq \tau\} \quad g^1(s,0,0,0,\tau,\zeta)=0.
	 \end{align*}
	 However, we can choose the functions $g^0$ and $g^1$ so that, for any $s \in [0,T]$, $g^0(s,0,0,0) = 0$ and $g^1(s,0,0,0,\tau,\zeta)=0, \, \bbP$-a.s. Indeed, $g$ satisfies the same decomposition as in~\eqref{eq:gdecomp} with the maps $\tilde g^0$ and $\tilde g^1$ defined, for any $s \in [0,T], \, (\theta,e) \in \R^+\times E$, as:
	 \begin{align*}
	 	\tilde g^0(s,y,z,u) &=
	 	\left\{
	 	\begin{aligned}
	 		&g^0(s,y,z,u), & &\text{for } (y,z,u) \neq (0,0,0),\\
	 		&0,	& &\text{otherwise},
	 	\end{aligned}
	 	\right.
	 	\\
	 	\tilde g^1(s,y,z,u,\theta,e) &=
	 	\left\{
	 	\begin{aligned}
	 		&g^1(s,y,z,u,\theta,e), & &\text{for } (y,z,u) \neq (0,0,0),\\
	 		&0,	& &\text{otherwise},
	 	\end{aligned}	 	
	 	\right.
	 \end{align*}
	
	 Hence, there are two possible cases:
	 \begin{itemize}
	 \item On $\{t < \tau\}$ we have that $\rho_t(\xi) = \rho^0_t(\xi^0)$ and $\rho_t(\xi \ind_A) = \rho^0_t(\xi^0 \ind_{A^0})$ from the discussion above. If we compare on this event the terms $\ind_{A^0} \rho^0_t(\xi^0)$ and $\rho^0_t(\xi^0 \ind_{A^0})$ as solutions of their respective BSDEs, we get:
	 \begin{align*}
	 	\ind_{A^0} \rho^0_t(\xi^0) &= -\xi^0\ind_{A^0} + \int_t^T \ind_{A^0} g^0(s, \bar Y^0_s, \bar Z^0_s, \bar Y^1_s(s, \cdot) - \bar Y^0_s) \, \dd s - \int_t^T \ind_{A^0} \bar Z^0_s \, \dd W_s\\
	 	\rho^0_t(\xi^0 \ind_{A^0}) &= -\xi^0\ind_{A^0} + \int_t^T g^0(s, Y^0_s, Z^0_s, Y^1_s(s, \cdot) - Y^0_s) \, \dd s - \int_t^T Z^0_s \, \dd W_s.
	 \end{align*}
	 If we take $Y^0 = \hat Y^0, \, Y^1(\cdot,\cdot) = \hat Y^1(\cdot,\cdot)$ and $Z^0 = \hat Z^0$, then we have that the two right hand sides are a.s. equal. In fact, on the event $A^0$ we have the same equation on both r.h.s., while on $\Omega \setminus A^0$ we have (recall that $\ind_A \ind_{t < \tau} = \ind_{A^0} \ind_{t < \tau}$):
	 \begin{equation*}
	 	0 = \int_t^T g^0(s, Y^0_s, Z^0_s, Y^1_s(s, \cdot) - Y^0_s) \, \dd s - \int_t^T Z^0_s \, \dd W_s = \int_t^T g^0(s, 0, 0, 0) \, \dd s = 0.
	 \end{equation*}
	 So the pair $(\hat Y^0, \hat Z^0)$ is solution to the Brownian BSDE~\eqref{eq:BSDE0} with terminal condition $-\xi^0\ind_{A^0}$, hence $\rho_t^0(\xi^0 \ind_{A^0}) = \ind_{A^0} \rho_t^0(\xi^0)$ and, finally, $\rho_t(\xi \ind_A) = \ind_A \rho_t(\xi)$.
	\item On $\{t \geq \tau\}$ we have that $\rho_t(\xi) = \rho^1_t(\xi^1(\tau,\zeta))$ and $\rho_t(\xi \ind_A) = \rho^1_t(\xi^1(\tau,\zeta) \ind_A)$ from the discussion above. If we compare on this event the terms $\ind_A \rho^1_t(\xi^1(\tau,\zeta))$ and $\rho^1_t(\xi^1(\tau,\zeta) \ind_A)$ as solutions of their respective BSDEs, we get:
	 \begin{align*}
	 	\ind_A \rho^1_t(\xi^1(\tau,\zeta)) &= -\xi^1(\tau,\zeta)\ind_A + \int_t^T \ind_A g^1(s, \bar Y^1_s, \bar Z^1_s, 0, \tau, \zeta) \, \dd s - \int_t^T \ind_A \bar Z^1_s(\tau,\zeta) \, \dd W_s\\
	 	\rho^1_t(\xi^1(\tau,\zeta) \ind_A) &= -\xi^1(\tau,\zeta)\ind_A + \int_t^T g^1(s, Y^1_s, Z^1_s, 0, \tau, \zeta) \, \dd s - \int_t^T Z^1_s(\tau,\zeta) \, \dd W_s.
	 \end{align*}
	 If we take $Y^1(\cdot,\cdot) = \hat Y^1(\cdot,\cdot)$ (as before) and $Z^1(\cdot,\cdot) = \hat Z^1(\cdot,\cdot)$, then we have that the two right hand sides are a.s. equal. In fact, on the event $A^1$ we have the same equation on both r.h.s., while on $\Omega \setminus A^1$ we have:
	 \begin{equation*}
	 	0 = \int_t^T g^1(s, Y^1_s, Z^1_s, 0, \tau, \zeta) \, \dd s - \int_t^T Z^1_s(\tau,\zeta) \, \dd W_s = \int_t^T g^1(s, 0, 0, 0, \tau, \zeta) \, \dd s = 0.
	 \end{equation*}
	 So the pair $(\hat Y^1, \hat Z^1)$ is solution to the Brownian BSDE~\eqref{eq:BSDE1} with terminal condition $-\xi^1(\tau,\zeta)\ind_A$, hence $\rho_t^1(\xi^1(\tau,\zeta) \ind_A) = \ind_A \rho_t^1(\xi^1(\tau,\zeta))$ and, finally, $\rho_t(\xi \ind_A) = \ind_A \rho_t(\xi)$.
	 \end{itemize}
	 Putting together the facts proved on the two disjoint events and recalling that $\ind_{A^0} \ind_{t < \tau} = \ind_A \ind_{t < \tau}$, we get the desired property.
\end{proof}

\begin{remark}
The zero-one law implies that the dynamic risk measure $\rho$ is normalized, i.e., $\rho_t(0) = 0$ for all $t \in [0,T]$. Moreover, if $g$ does not depend on $y$, i.e., $g(t,y,z,u) = g(t,z,u)$ and $g(t,0,0) = 0, \, \bbP$-a.s., for all $t \in [0,T]$, then not only $\rho$ satisfies the translation invariance and zero-one law properties, but is also a \emph{filtration consistent conditional $g$-expectation}, in the sense of~\citep{coquet2002:gexpect,peng1997:BSDEsandgexp}. Recall, however, that in general $\rho^0$ and $\rho^1$ are not induced by $g^0$- and $g^1$-conditional expectations respectively (cfr. Remark~\ref{rem:gexpect}).
\end{remark}

\begin{proposition}\label{prop:conv}
	The dynamic risk measure $\rho$ defined in~\eqref{eq:dynriskmeas} satisfies the following properties:
	\begin{enumerate}\setcounter{enumi}{3}
	\item \emph{Monotonicity}.
	\item \emph{Convexity} if either $g$ is convex with respect to $(y,z,u)$, $\bbP$-a.s., for all $t \in [0,T]$, or both $\rho^0$ and $\rho^1$ satisfy this property.
	\end{enumerate}
\end{proposition}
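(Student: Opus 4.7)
The plan is to prove monotonicity by a direct application of Theorem~\ref{th:BSDEJcomp} and to derive convexity either from the decomposition~\eqref{eq:rhodecomp} when $\rho^0$ and $\rho^1$ are convex, or by first propagating convexity from $g$ to the effective Brownian drivers $G^0$ and $G^1$ of~\eqref{eq:BSDE0} and~\eqref{eq:BSDE1} and then reducing to the decomposition case.

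For monotonicity, take $\xi \leq \eta$ in $\dL^\infty(\cG_T)$ and apply Theorem~\ref{th:BSDEJcomp} with common driver $g$ and terminal conditions $\underline\xi \coloneqq -\eta \leq \overline\xi \coloneqq -\xi$. The comparison hypotheses on the Brownian drivers, $\underline G^0 \leq \overline G^0$ and $\underline G^1 \leq \overline G^1$, hold trivially with equality, and the requirement $\underline U_t = \overline U_t = 0$ on $\{t > \tau\}$ is automatic from $U_t(\cdot) = U^0_t(\cdot)\,\ind_{t < \tau}$ in~\eqref{eq:BSDEJsol}. Hence $\underline Y \leq \overline Y$, i.e.\ $\rho_t(\eta) \leq \rho_t(\xi)$ for every $t \in [0,T]$.

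For convexity, fix $\alpha \in [0,1]$ and $\xi, \eta \in \dL^\infty(\cG_T)$ with decompositions $(\xi^0, \xi^1)$ and $(\eta^0, \eta^1)$ as in~\eqref{eq:xidecomp}. By Assumption~\ref{hp:density} and Lemma~\ref{lemma:procdecomp} the decomposition is essentially unique, hence linear, so $\alpha\xi + (1-\alpha)\eta$ decomposes as $(\alpha\xi^0 + (1-\alpha)\eta^0,\, \alpha\xi^1 + (1-\alpha)\eta^1)$. When $\rho^0$ and $\rho^1$ are convex, substituting into~\eqref{eq:rhodecomp} and applying their convexities separately on $\{t < \tau\}$ and $\{t \geq \tau\}$ produces the desired inequality for $\rho$. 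When instead $g$ is convex in $(y,z,u)$, so are its pieces $g^0$ and $g^1$; then $G^1(t,y,z) = g^1(t,y,z,0,\theta,e)$ is immediately convex in $(y,z)$, and for $G^0(t,y,z) = g^0(t,y,z,Y^1_t(t,\cdot) - y)$ the third argument is affine in $y$, so a convex combination in $(y,z)$ lifts to a convex combination of the full triple $(y,z,Y^1_t(t,\cdot)-y)$ and convexity of $G^0$ follows from that of $g^0$. Standard comparison-based arguments for Brownian BSDEs then yield convexity of $\rho^0$ and $\rho^1$, reducing to the previous case.

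The main subtlety lies in this last reduction: the driver $G^0$ of~\eqref{eq:BSDE0} depends on $Y^1$, which itself depends on the terminal condition $\xi^1$ used in~\eqref{eq:BSDE1}. Convexity of $\rho^0$ as a function of $\xi^0$ alone with $Y^1$ frozen is standard, but~\eqref{eq:rhodecomp} requires joint convexity in $(\xi^0, \xi^1)$; one therefore needs to combine convexity of $Y^1$ in $\xi^1$ (inherited from convexity of $G^1$ via~\eqref{eq:BSDE1}) with the convex dependence of $G^0$ on its third argument, and then invoke the Brownian-BSDE comparison principle underlying Theorem~\ref{th:BSDEJcomp}. This is the core technical step, and no machinery beyond the tools already introduced in Section~\ref{sec: decomposition-risk-measures} is required.
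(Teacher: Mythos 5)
Your monotonicity argument and the branch of convexity in which $\rho^0$ and $\rho^1$ are themselves assumed convex coincide with the paper's proof (both simply invoke Theorem~\ref{th:BSDEJcomp}, resp.\ substitute into~\eqref{eq:rhodecomp}). One caveat even there: with a common driver $g$ the induced Brownian drivers $\underline G^0$ and $\overline G^0$ are \emph{not} ``trivially equal'', because each is evaluated along its own post-default component $Y^1$, which changes with the terminal condition; only the $G^1$-drivers coincide. The genuine gap, however, is in the branch where convexity is to be deduced from convexity of $g$.

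Two concrete problems arise there. First, convexity of $g$ only pins down convexity of $g^0(s,\cdot)$ on the event $\{s \le \tau\}$ and of $g^1(s,\cdot)$ on $\{s > \tau\}$; the decomposition~\eqref{eq:gdecomp} says nothing off these events and, as the paper explicitly stresses (in contrast with the zero-one-law argument of Proposition~\ref{prop:01}), one cannot in general replace $g^0, g^1$ by versions that are convex for all $(\omega,s)$ while keeping the assumed solvability of~\eqref{eq:BSDE0}--\eqref{eq:BSDE1}. Your plan is to prove that $\rho^0$ and $\rho^1$ are convex as such and then substitute into~\eqref{eq:rhodecomp}, but proving convexity of $\xi^0 \mapsto Y^0_t$ through~\eqref{eq:BSDE0} uses convexity of $g^0(s,\cdot)$ for a.e.\ $s \in [t,T]$, including on $\{s > \tau\}$, where it is not available. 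Second, the step you yourself identify as the core one (joint convexity in $(\xi^0,\xi^1)$, needed because the driver of~\eqref{eq:BSDE0} contains $Y^1$) is only sketched, and the sketch does not close: convexity of the solution map of~\eqref{eq:BSDE1} gives $\bar Y^1 \le \alpha Y^{1,-\xi} + (1-\alpha) Y^{1,-\eta} \eqqcolon \hat Y^1$ (with $\bar Y^1$ the solution for the combined terminal datum), but to convert this into the ordering of effective drivers $g^0\bigl(s,y,z,\bar Y^1_s(s,\cdot)-y\bigr) \le g^0\bigl(s,y,z,\hat Y^1_s(s,\cdot)-y\bigr) + \lambda^0_s$, with $\lambda^0 \ge 0$ the convexity gap, that any comparison argument requires, you would need $g^0$ to be nondecreasing in its $u$-argument, which is nowhere assumed. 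The paper's proof never claims convexity of $\rho^0, \rho^1$: it proves the convexity inequality for $\rho$ directly, distinguishing the events $\{T < \tau\}$, $\{t \ge \tau\}$ and $\{t < \tau \le T\}$, rewriting $\alpha\rho_t(\xi) + (1-\alpha)\rho_t(\eta)$ as the value of a BSDE whose driver is $g^0$ (resp.\ $g^1$) evaluated at the convex combination of the solutions plus a nonnegative process, and in the mixed case concatenating the pre- and post-default equations on $[t,\tau]$ and $(\tau,T]$ so that each of $g^0, g^1$ is only used on the event where its convexity is guaranteed, before applying Theorem~\ref{th:BSDEJcomp}. You need either to reproduce that event-wise, concatenated computation or to add hypotheses (e.g.\ monotonicity of $g$ in $u$) for your two-stage reduction to go through.
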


\begin{proof}
	Property~(\ref{property:mon}) easily follows from Theorem~\ref{th:BSDEJcomp}. To prove property~(\ref{property:conv}), i.e., convexity, let us fix $\xi, \eta \in \dL^\infty(\cG_T), \, \alpha \in [0,1]$ and $t \in [0,T]$. Thanks to Lemma~\ref{lemma:procdecomp} we have that:
	\begin{equation*}
		\alpha \xi + (1-\alpha) \eta = \bigl[\alpha \xi^0 + (1-\alpha)\eta^0\bigr]\ind_{T < \tau} + \bigl[\alpha \xi^1(\tau,\zeta) + (1-\alpha)\eta^1(\tau,\zeta)\bigr]\ind_{T \geq \tau},
	\end{equation*}
	where $\xi^0, \xi^1, \eta^0, \eta^1$ are appropriately defined as in~\eqref{eq:xidecomp}.
	
	If both $\rho^0$ and $\rho^1$ are dynamic convex risk measures then it follows immediately that also $\rho$ is.
	
	Now, consider the pairs $(\bar Y^0, \bar Z^0)$, $(Y^{0,-\xi}, Z^{0,-\xi})$ and $(Y^{0,-\eta}, Z^{0,-\eta})$, that are solutions to the Brownian BSDE~\eqref{eq:BSDE0} with terminal conditions $-[\alpha \xi^0 + (1-\alpha)\eta^0]$, $-\xi^0$ and $-\eta^0$ respectively. Similarly, denote with $(\bar Y^1(\theta,e), \bar Z^1(\theta,e))$, $(Y^{1,-\xi}(\theta,e), Z^{1,-\xi}(\theta,e))$ and $(Y^{1,-\eta}(\theta,e), Z^{1,-\eta}(\theta,e))$ the solutions to the Brownian BSDE~\eqref{eq:BSDE1} with terminal conditions $-[\alpha \xi^1(\theta,e) + (1-\alpha)\eta^1(\theta,e)]$, $-\xi^1(\theta,e)$ and $-\eta^1(\theta,e)$ respectively.
	Let us define:
	\begin{align*}
	 	\hat Y^0 &\coloneqq \alpha Y^{0,-\xi} + (1-\alpha)Y^{0,-\eta}
	 	&
	 	\hat Z^0 &\coloneqq \alpha Z^{0,-\xi} + (1-\alpha)Z^{0,-\eta}
	 	\\
	 	\hat Y^1 &\coloneqq \alpha Y^{1,-\xi} + (1-\alpha)Y^{1,-\eta}
	 	&
	 	\hat Z^1 &\coloneqq\alpha Z^{1,-\xi} + (1-\alpha)Z^{1,-\eta}
	\end{align*}	
	If $g$ is convex with respect to $(y,z,u), \, \bbP$-a.s., for any $s \in [0,T]$, then $g^0$ and $g^1$ are so on $\{s < \tau\}$ and $\{s \geq \tau\}$ respectively. However, differently from the proof of Proposition~\ref{prop:01}, we cannot guarantee that we are able to choose $g^0$ and $g^1$ to be convex $\bbP$-a.s. for any $s \in [0,T]$. Therefore, we have to distinguish three possible cases.
	
	\begin{itemize}
	\item On $\{T < \tau\}$ (hence $t < \tau$), $\rho_t = \rho^0_t$. We have:
	\begin{align*}
	&\alpha \rho^0_t(\xi^0) + (1-\alpha) \rho^0_t(\eta^0)
	\\
	&\quad = -[\alpha \xi^0 + (1-\alpha)\eta^0]
	+\int_t^T \bigl[\alpha g^0(s, Y^{0,-\xi}_s, Z^{0,-\xi}_s, Y^{1,-\xi}_s(s, \cdot) - Y^{0,-\xi}_s)
	\\
	&\qquad \qquad + (1-\alpha) g^0(s, Y^{0,-\eta}_s, Z^{0,-\eta}_s, Y^{1,-\eta}_s(s, \cdot) - Y^{0,-\eta}_s)\bigr] \, \dd s
	\\
	&\qquad -\int_t^T \bigl[\alpha Z_s^{0,-\xi} + (1-\alpha)Z_s^{0,-\eta}\bigr] \, \dd W_s
	\\
	&\quad = -[\alpha \xi^0 + (1-\alpha)\eta^0]
	\\
	&\qquad +\int_t^T \bigl[\lambda^0(s, Y^{0,-\xi}_s, Y^{0,-\eta}_s, Z^{0,-\xi}_s, Z^{0,-\eta}_s, Y^{1,-\xi}_s(s, \cdot) - Y^{0,-\xi}_s, Y^{1,-\eta}_s(s, \cdot) - Y^{0,-\eta}_s, \alpha)
	\\
	&\qquad \qquad + g^0(s, \hat Y^0_s, \hat Z^0_s, \hat Y^1_s(s, \cdot) - \hat Y^0_s)\bigr] \, \dd s
	-\int_t^T \hat Z^0_s \, \dd W_s
	\end{align*}
	where $\lambda^0$ is an almost surely nonnegative process. Hence, since the driver of the last BSDE is greater or equal than $g^0$ almost surely, using Theorem~\ref{th:BSDEJcomp} we get that: $\alpha \rho^0_t(\xi^0) + (1-\alpha) \rho^0_t(\eta^0) \geq \rho^0_t(\alpha \xi^0 + (1-\alpha)\eta^0)$. Finally: $\rho_t(\alpha \xi + (1-\alpha)\eta) \leq \rho_t(\xi) + (1-\alpha) \rho_t(\eta)$.
	\item The same conclusion holds also on $\{t \geq \tau\}$ (hence $T \geq \tau)$ by a similar argument.
	\item On $\{t < \tau \leq T\}, \, \rho_t = \rho^0_t$. We have:
	\begin{align*}
	&\alpha \rho^0_t(\xi^0) + (1-\alpha) \rho^0_t(\eta^0) = \alpha \rho_t(\xi) + (1-\alpha) \rho_t(\eta)
	\\
	&\quad = -[\alpha \xi^0 + (1-\alpha)\eta^0]
	+\int_t^\tau \bigl[\alpha g^0(s, Y^{0,-\xi}_s, Z^{0,-\xi}_s, Y^{1,-\xi}_s(s, \cdot) - Y^{0,-\xi}_s)
	\\
	&\qquad \qquad + (1-\alpha) g^0(s, Y^{0,-\eta}_s, Z^{0,-\eta}_s, Y^{1,-\eta}_s(s, \cdot) - Y^{0,-\eta}_s)\bigr] \, \dd s
	\\
	&\qquad -\int_t^\tau \bigl[\alpha Z_s^{0,-\xi} + (1-\alpha)Z_s^{0,-\eta}\bigr] \, \dd W_s
	\\
	&\qquad
	- \bigr[\alpha \bigl(Y^{1,-\xi}_\tau(\tau, \zeta) - Y^{0,-\xi}_\tau\bigr) + (1-\alpha) \bigl(Y^{1,-\eta}_\tau(\tau, \zeta) - Y^{0,-\eta}_\tau\bigr)\bigr]
	\\
	&\qquad +\int_\tau^T \bigl[\alpha g^1(s, Y^{1,-\xi}_s, Z^{1,-\xi}_s, 0, \tau, \zeta)
	+ (1-\alpha) g^1(s, Y^{1,-\eta}_s, Z^{1,-\eta}_s, 0, \tau, \zeta)\bigr] \, \dd s
	\\
	&\qquad + \int_\tau^T \bigl[\alpha Z_s^{1,-\xi}(\tau, \zeta) + (1-\alpha)Z_s^{1,-\eta}(\tau, \zeta)\bigr] \, \dd W_s
	\\
	&\quad = -[\alpha \xi^0 + (1-\alpha)\eta^0]
	\\
	&\qquad +\int_t^\tau \bigl[\lambda^0(s, Y^{0,-\xi}_s, Y^{0,-\eta}_s, Z^{0,-\xi}_s, Z^{0,-\eta}_s, Y^{1,-\xi}_s(s, \cdot) - Y^{0,-\xi}_s, Y^{1,-\eta}_s(s, \cdot) - Y^{0,-\eta}_s, \alpha)
	\\
	&\qquad \qquad + g^0(s, \hat Y^0_s, \hat Z^0_s, \hat Y^1_s(s, \cdot) - \hat Y^0_s)\bigr] \, \dd s
	-\int_t^\tau \hat Z^0_s \, \dd W_s - (\hat Y^1_\tau(\tau,\zeta) - \hat Y^0_\tau)
	\\
	&\qquad +\int_\tau^T \!\!\bigl[\lambda^1(s, Y^{1,-\xi}_s, Y^{1,-\eta}_s, Z^{1,-\xi}_s, Z^{1,-\eta}_s, \alpha)
	+ g^1(s, \hat Y^1_s, \hat Z^1_s, 0, \tau, \zeta)\bigr] \dd s - \int_\tau^T \hat Z^1_s(\tau, \zeta) \dd W_s
	\\
	&\quad= -[\alpha \xi^0 + (1-\alpha)\eta^0]
	+\int_t^T \bigl[\bar \lambda_s + g^0(s, \hat Y^0_s, \hat Z^0_s, \hat Y^1_s(s, \cdot) - \hat Y^0_s)\bigr] \, \dd s
	-\int_t^T \hat Z^0_s \, \dd W_s
	\end{align*}
	where $\bar \lambda_s$ is an almost surely nonnegative process. Hence, since the driver of the last BSDE is greater or equal than $g^0$ almost surely, using Theorem~\ref{th:BSDEJcomp} we get that: $\alpha \rho^0_t(\xi^0) + (1-\alpha) \rho^0_t(\eta^0) \geq \rho^0_t(\alpha \xi^0 + (1-\alpha)\eta^0)$. Finally:
	\begin{equation*}
		\rho_t(\alpha \xi + (1-\alpha)\eta) \leq \rho_t(\xi) + (1-\alpha) \rho_t(\eta).
	\end{equation*}
	\end{itemize}
	Putting together all the cases analyzed so far, we get that $\rho$ is a dynamic \emph{convex} risk measure.
\end{proof}

\begin{remark}
In general, properties analogous to~(\ref{property:01})-(\ref{property:conv}) for $\rho^0$ and $\rho^1$ cannot be deduced by making assumptions on $g$ alone. This is due to the construction of these dynamic risk measures provided in the proof of Proposition~\ref{prop:rhodecomp} (see also Remark~\ref{rem:gexpect}) and to the fact that any assumption on $g$ is reflected on $g^0$ and $g^1$ only on the events $\{t < \tau\}$ and $\{t \geq \tau\}$ respectively, and not on the whole processes $g^0$ and $g^1$. This should be evident by carefully inspecting the proofs of Propositions~\ref{prop:01} and~\ref{prop:conv}. Clearly this is not an issue, since the risk measures $\rho^0$ and $\rho^1$ appear in the decomposition~\eqref{eq:rhodecomp} on the appropriate events.
\end{remark}

In the following Proposition we prove that the dynamic risk measure $\rho$ satisfies a \emph{continuity property}, that implies the Fatou property.
\begin{proposition}\label{prop:Fatou}
Let the hypotheses of Proposition~\ref{prop:BSDEJcont} hold. Consider the dynamic risk measure $\rho$ defined in~\eqref{eq:dynriskmeas} and $\xi_n, \, \xi \in \dL^\infty(\cG_T), \, n \in \N$. For any $t \in [0,T]$, we have that the following \emph{continuity property} holds:
\begin{equation*}
	\xi^n \to \xi, \quad \text{in } \dL^\infty(\cG_T) \quad \Longrightarrow \quad \rho_t(\xi^n) \to \rho_t(\xi), \quad \text{in } \dL^\infty(\cG_t).
\end{equation*}
Consequently, \eqref{property:Fatou}, i.e., the \emph{Fatou property}, is satisfied as well.
\end{proposition}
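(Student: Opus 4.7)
The plan is to leverage Proposition~\ref{prop:BSDEJcont}, whose a priori bound~\eqref{eq:deltaYestimate} is essentially tailor-made for this continuity statement. Fix $\xi^n, \xi \in \dL^\infty(\cG_T)$ with $\epsilon_n \coloneqq \norm{\xi^n - \xi}_{\dL^\infty(\cG_T)} \to 0$ and decompose both via Lemma~\ref{lemma:procdecomp}(a) as $\xi = \xi^0 \ind_{T < \tau} + \xi^1(\tau,\zeta) \ind_{T \geq \tau}$ and analogously for each $\xi^n$. On the event $\{T < \tau\}$ one has $\xi^n - \xi = (\xi^n)^0 - \xi^0$, so by a suitable choice of the $\cF_T$-measurable representative $(\xi^n)^0$ (which can be modified off $\{T < \tau\}$ without altering the decomposition of $\xi^n$), one may assume $\norm{(\xi^n)^0 - \xi^0}_{\dL^\infty} \leq \epsilon_n$.

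The more delicate step, and in my view the main technical point, is to control the indexed component $(\xi^n)^1$ uniformly in $(\theta,e)$. From $|\xi^n - \xi| \leq \epsilon_n$ almost surely, we obtain on $\{T \geq \tau\}$ that $|(\xi^n)^1(\tau,\zeta) - \xi^1(\tau,\zeta)| \leq \epsilon_n$ almost surely. Setting $B_n \coloneqq \{(\omega,\theta,e) \colon |(\xi^n)^1(\omega,\theta,e) - \xi^1(\omega,\theta,e)| > \epsilon_n\}$, Assumption~\ref{hp:density} combined with Fubini--Tonelli (the same computation as in the remark following that assumption) yields
\begin{equation*}
0 = \bbE\bigl[\ind_{B_n}(\cdot,\tau,\zeta)\,\ind_{T \geq \tau}\bigr] = \int_{[0,T] \times E} \bbE\bigl[\ind_{B_n}(\cdot,\theta,e)\,\gamma_T(\theta,e)\bigr] \, \dd \theta \, \dd e.
\end{equation*}
Since $\gamma_T > 0$, this forces $\norm{(\xi^n)^1(\theta,e) - \xi^1(\theta,e)}_{\dL^\infty} \leq \epsilon_n$ for $\dd \theta \otimes \dd e$-almost every $(\theta,e) \in [0,T] \times E$. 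Modifying $(\xi^n)^1$ on a $\dd \theta \otimes \dd e$-null set, which does not alter $(\xi^n)^1(\tau,\zeta)$ by the density assumption, one may assume the bound holds uniformly over $(\theta,e) \in [0,T] \times E$.

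With these uniform bounds in place, I would apply Proposition~\ref{prop:BSDEJcont} to $\bar\xi = -\xi^n$ and $\hat\xi = -\xi$. The last point to settle is that the constants $K^0$ and $K^1(\theta,e)$ appearing in~\eqref{eq:BSDE01apriori} can be taken uniformly in $n$ and essentially uniformly in $(\theta,e)$; under the Lipschitz-type assumptions that typically underlie existence for the Brownian BSDEs~\eqref{eq:BSDE0}--\eqref{eq:BSDE1}, these constants depend only on $T$ and on the drivers $g^0, g^1$, not on the specific terminal condition, hence one can set $C \coloneqq K^0 \vee \sup_{(\theta,e) \in [0,T] \times E} K^1(\theta,e) < +\infty$. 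Then $M_n \leq C\epsilon_n$ and~\eqref{eq:deltaYestimate} gives $\norm{Y^{-\xi^n} - Y^{-\xi}}_{\cS^\infty[0,T]} \leq 2C\epsilon_n \to 0$, so $\rho_t(\xi^n) = Y^{-\xi^n}_t \to Y^{-\xi}_t = \rho_t(\xi)$ in $\dL^\infty(\cG_t)$. The Fatou property~(\ref{property:Fatou}) then follows trivially, since convergence in $\dL^\infty(\cG_t)$ yields $\liminf_n \rho_t(\xi^n) = \rho_t(\xi)$.
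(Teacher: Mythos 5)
Your proposal is correct and takes essentially the same route as the paper's proof: transfer the $\dL^\infty$ bound on $\xi^n-\xi$ to the two components of the decomposition by adjusting the measurable representatives on sets that do not affect $\xi^n$ (legitimate, since the bad sets meet $\{T<\tau\}$, resp.\ the graph of $(\tau,\zeta)$, only in null sets), and then feed the resulting bounds into the a priori estimate~\eqref{eq:deltaYestimate} of Proposition~\ref{prop:BSDEJcont}. The only differences are cosmetic: the paper handles the indexed component by replacing both representatives with a common constant on the bad $\omega$-sets, so the bound holds for every $(\theta,e)$ without invoking Assumption~\ref{hp:density} at this step, whereas you obtain an a.e.-$(\theta,e)$ bound via the density hypothesis and Fubini--Tonelli and then modify on a Lebesgue-null set of indices; you also spell out the uniformity in $n$ of the constants $K^0$, $K^1(\theta,e)$ in~\eqref{eq:BSDE01apriori}, which the paper uses implicitly.
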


%

\begin{proof}
Let us fix $t \in [0,T]$, $\epsilon > 0$ and a sequence $(\xi^n)_{n \in \N} \subset \dL^\infty(\cG_t)$ such that $\xi^n \to \xi \in \dL^\infty(\cG_T)$. We know that there exists a number $n_\epsilon \in \N$ such that $\norm{\xi^n - \xi}_{\dL^\infty} < \epsilon$ for all $n \geq n_\epsilon$, i.e., $|\xi^n - \xi| < \epsilon, \, \bbP$-a.s. Thanks to the decomposition of each of the $\xi^n$ and $\xi$, we can show, first, that there exist $\xi^{0,n}, \, \xi^0$ and $\xi^{1,n}, \, \xi^1$ such that $|\xi^{0,n} - \xi^0| < \epsilon$ and $|\xi^{1,n}(\theta,e) - \xi^1(\theta,e)| < \epsilon$ for all $(\theta,e) \in [0,T] \times E$. Let us write this decomposition as
\begin{equation*}
\begin{gathered}
\xi^n = \tilde\xi^{0,n} \ind_{T < \tau} + \tilde\xi^{1,n}(\tau, \zeta) \ind_{T \geq \tau}, \\
\xi = \tilde\xi^0 \ind_{T < \tau} + \tilde\xi^1(\tau, \zeta) \ind_{T \geq \tau},
\end{gathered}
\end{equation*}
whence
\begin{equation*}
\xi^n - \xi = \bigl[\tilde\xi^{0,n} - \tilde\xi^0\bigr] \ind_{T < \tau} +  \bigl[\tilde\xi^{1,n}(\tau, \zeta) - \tilde\xi^1(\tau,\zeta)\bigr] \ind_{T \geq \tau} \eqqcolon \tilde \delta^{0,n} \ind_{T < \tau} + \tilde \delta^{1,n}(\tau,\zeta) \ind_{T \geq \tau}.
\end{equation*}
Let $K > 0$ be an arbitrary positive constant and define $\xi^{0,n}, \, \xi^0$ and $\xi^{1,n}, \, \xi^1$ as follows:
\begin{equation*}
\begin{gathered}
\xi^{0,n} \coloneqq \tilde \xi^{0,n} \ind_{|\tilde \delta^{0,n}| < \epsilon} + K \ind_{|\tilde \delta^{0,n}| \geq \epsilon},
\\
\xi^0 \coloneqq \tilde \xi^0 \ind_{|\tilde \delta^{0,n}| < \epsilon} + K \ind_{|\tilde \delta^{0,n}| \geq \epsilon},
\\
\xi^{1,n}(\theta,e) \coloneqq \tilde \xi^{1,n}(\theta,e) \ind_{|\tilde \delta^{1,n}(\theta,e)| < \epsilon} + K \ind_{|\tilde \delta^{1,n}(\theta,e)| \geq \epsilon},
\\
\xi^1(\theta,e) \coloneqq \tilde \xi^1(\theta,e) \ind_{|\tilde \delta^{1,n}(\theta,e)| < \epsilon} + K \ind_{|\tilde \delta^{1,n}(\theta,e)| \geq \epsilon},
\end{gathered}
\end{equation*}
for each $(\theta,e) \in [0,T] \times E$. It is clear that $|\xi^{0,n} - \xi^0| < \epsilon$ and $|\xi^{1,n}(\theta,e) - \xi^1(\theta,e)| < \epsilon$ for all $(\theta,e) \in [0,T] \times E$. Moreover, on the event $\{T < \tau\}$ we have that $|\tilde \delta^0| < \epsilon$, hence $\xi^{0,n} = \tilde \xi^{0,n}$ and $\xi^0 = \tilde \xi^0$. Reasoning similarly on the event $\{T \geq \tau\}$ we obtain:
\begin{equation*}
\begin{gathered}
\xi^n = \xi^{0,n} \ind_{T < \tau} + \xi^{1,n}(\tau, \zeta) \ind_{T \geq \tau}, \\
\xi = \xi^0 \ind_{T < \tau} + \xi^1(\tau, \zeta) \ind_{T \geq \tau}.
\end{gathered}
\end{equation*}

This implies that for all $n \geq n_\epsilon$
\begin{equation*}
M < \epsilon \max\bigl\{K^0, \sup_{(\theta,e) \in [0,T]\times E} K^1(\theta,e)\bigr\},
\end{equation*}
where $M$ is the constant defined in Proposition~\ref{prop:BSDEJcont}. From the estimate~\eqref{eq:deltaYestimate} it follows that
\begin{equation*}
|\rho_t(\xi^n) - \rho_t(\xi)| \leq 2M < 2\epsilon \max\bigl\{K^0, \sup_{(\theta,e) \in [0,T]\times E} K^1(\theta,e)\bigr\}
\end{equation*}
and continuity follows immediately. The Fatou property is implied by the continuity property just proved.
\end{proof}

We finally turn our attention to the last property listed at the beginning of this Section, i.e., the time-consistency.

\begin{proposition}\label{prop:timeconsist}
	The dynamic risk measure $\rho$ defined in~\eqref{eq:dynriskmeas} satisfies~(\ref{property:timecons}), i.e., is \emph{time-consistent}.
\end{proposition}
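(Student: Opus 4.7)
The plan is to deduce time-consistency directly from the flow property of the BSDEJ~\eqref{eq:BSDEJ}, which is itself a consequence of the uniqueness of solutions (Theorem~\ref{th:BSDEJquaduniq}). The key observation is that, once one identifies $\rho_t(\xi)$ with $Y_t^{-\xi}$, the desired identity becomes a statement about concatenation of BSDE solutions rather than about the risk measure per se.

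Fix $\xi \in \dL^\infty(\cG_T)$ and a $\bbG$-stopping time $\sigma \leq T$; let $(Y, Z, U)$ be the unique solution of the BSDEJ with terminal condition $-\xi$, so by definition $\rho_s(\xi) = Y_s$ for every $s \in [0,T]$, and in particular $\rho_\sigma(\xi) = Y_\sigma$. Subtracting the version of~\eqref{eq:BSDEJ} written at time $\sigma$ from the one written at time $t \leq \sigma$, I obtain the flow identity
\begin{equation*}
Y_t = Y_\sigma + \int_t^\sigma g(s, Y_s, Z_s, U_s(\cdot)) \, \dd s - \int_t^\sigma Z_s \, \dd W_s - \int_t^\sigma \int_E U_s(e) \, \mu(\dd s \, \dd e),
\end{equation*}
which exhibits the triple $(Y, Z, U)$ restricted to $[0,\sigma]$ as a solution of the BSDEJ on the stochastic interval $[0,\sigma]$ with terminal data $Y_\sigma = \rho_\sigma(\xi)$. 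Adopting the standard convention for BSDE-induced dynamic risk measures, namely that $\rho_t(-\rho_\sigma(\xi))$ for $t \leq \sigma$ denotes the first component at time $t$ of the unique solution of the BSDEJ on $[0,\sigma]$ with terminal condition $-(-\rho_\sigma(\xi)) = \rho_\sigma(\xi)$ at time $\sigma$, the identity $\rho_t(\xi) = \rho_t\bigl(-\rho_\sigma(\xi)\bigr)$ reduces to a comparison between two BSDEJs with the same terminal data, and uniqueness closes the argument.

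The main technical point is to secure uniqueness of the stopped BSDEJ on $[0,\sigma]$, since Theorem~\ref{th:BSDEJquaduniq} is stated on the fixed horizon $[0,T]$. This can be handled by a concatenation argument: given any candidate solution on $[0,\sigma]$ with terminal data $\rho_\sigma(\xi)$, I paste it with the restriction of $(Y,Z,U)$ to $[\sigma, T]$ to obtain a solution of the full BSDEJ on $[0,T]$ with terminal condition $-\xi$, to which Theorem~\ref{th:BSDEJquaduniq} applies. A further subtlety is that the $\bbG$-stopping time $\sigma$ may lie on either side of the default time $\tau$; this is transparently accommodated by the decomposition~\eqref{eq:BSDEJsol}, so that on $\{s < \tau\}$ the dynamics reduce to the Brownian BSDE~\eqref{eq:BSDE0} while on $\{s \geq \tau\}$ they reduce to~\eqref{eq:BSDE1}, both enjoying the classical flow property in the Brownian setting. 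Equivalently, I could run the argument through the decomposition~\eqref{eq:rhodecomp}, verifying time-consistency for $\rho^0$ and $\rho^1$ separately and then recombining, but the direct flow-property approach is cleaner and avoids having to track the relative position of $\sigma$ and $\tau$ term by term.
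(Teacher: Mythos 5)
Your argument is correct and follows essentially the same route as the paper: the paper also deduces time-consistency of $\rho$ directly from the flow property of the BSDEJ~\eqref{eq:BSDEJ} (with the same convention of viewing $\rho_t(-\rho_\sigma(\xi))$ through the equation stopped at $\sigma$), merely citing the Brownian and Brownian-plus-Poisson references for that flow property instead of spelling out the concatenation-plus-uniqueness argument you sketch. Your only deviation is invoking Theorem~\ref{th:BSDEJquaduniq} for uniqueness, whereas the standing assumptions of the section obtain uniqueness from Theorems~\ref{th:BSDEJexist} and~\ref{th:BSDEJcomp}; this is immaterial to the structure of the proof.
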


\begin{proof}
	The proof of time-consistency relies on the \emph{flow property} associated to solutions of BSDEs. It can be proved, in an almost identical manner as in the Brownian or in the Brownian plus Poisson cases (see, e.g., \citep{elkaroui1997:BSDEs} and \citep{quenez:BSDEJ}), that the BSDEJ~\eqref{eq:BSDEJ} satisfies it. More specifically, we have that
	\begin{equation*}
		Y_t^{\xi, T} = Y_t^{Y_\sigma^{\xi, T}, \, \sigma}, \quad t \in [0, \sigma], \, \bbP\text{-a.s.,}
	\end{equation*}
	where, for any $\bbG$-stopping time $\sigma' \leq T$ and any $\xi' \in \cG_{\sigma'}$, we denote by $(Y^{\xi',\sigma'}, Z^{\xi',\sigma'}, U^{\xi',\sigma'})$ the solution of the BSDEJ with terminal time $\sigma'$ and terminal condition $\xi'$. To be precise, this solution is extended to the whole interval $[0,T]$ by setting $Y_t^{\xi',\sigma'} = \xi', \, Z_t^{\xi',\sigma'} = 0, \, U_t^{\xi',\sigma'} = 0$ for $t \in (\sigma,T]$. The interested reader is invited to consult the references provided above to get more details on the proof of the flow property.
	
	Thanks to the definition of $\rho$ given in~\eqref{eq:dynriskmeas}, it is immediate to see that the flow property implies
	\begin{equation*}
		\rho_t(\xi) = Y_t^{-\xi,T} = Y_t^{Y_\sigma^{-\xi,T}, \, \sigma} = \rho_t(-\rho_\sigma(\xi)), \quad t \in [0,\sigma], \, \bbP\text{-a.s.},
	\end{equation*}
	i.e., that the dynamic risk measure $\rho$ is time-consistent.
\end{proof}

\begin{remark}
	It is important to notice that the flow property holds also for the Brownian BSDEs~\eqref{eq:BSDE1} and~\eqref{eq:BSDE0}, since we are assuming existence and uniqueness for their respective solutions. More precisely, while the flow property associated to~\eqref{eq:BSDE0} holds for any $\bbF$-stopping time $\sigma^0 \leq T$, the corresponding one associated to~\eqref{eq:BSDE1} holds for any $\bbH$-stopping time $\sigma^1 \in [\tau, T]$.
	
	This fact has interesting consequences on time-consistency of the $\bbF$- and $\bbH$-dynamic risk measures $\rho^0$ and $\rho^1$, given in~\eqref{eq:rhodecomp}. On the one hand, since any $\bbG$-stopping time is also an $\bbH$-stopping time, we get that, in particular, $\rho^1$ satisfy the time-consistency property for any $\bbG$-stopping time $\sigma \in [\tau, T]$. On the other hand, we know from~\citep[Cor. 2.12]{aksamitjeanblanc2017:filtrenl} that for any $\bbG$-stopping time $\sigma$ there exists an $\bbF$-stopping time $\sigma^0$ such that $\sigma \land \tau = \sigma^0 \land \tau$. Therefore, $\rho^0$ satisfies the time-consistency property also for any $\bbG$-stopping time $\sigma \in [0,\tau \land T)$.
\end{remark}

\section{\texorpdfstring{Example{\color{blue}: the entropic risk measure case}}{Example: the entropic risk measure case}}\label{sec:example}
\color{blue}
In this Section we provide an example of $\bbG$-dynamic risk measure. We will, first, explicit its decomposition, as in~\eqref{eq:rhodecomp} and explain its financial meaning. Then, we will provide the evaluation of the riskiness of a simple financial claim and give a numerical simulation.
\normalcolor

We are going to introduce a particular kind of $\bbG$-dynamic \emph{entropic} risk measure, induced by a BSDEJ of the form~\eqref{eq:BSDEJ}. As is known, entropic risk measures are connected to exponential utility functions, i.e, of the form $u(x) = -\gamma \de^{-\frac x\gamma}$, where $\gamma > 0$ is a parameter modeling risk tolerance of the financial agent (see, e.g., \citep{barrieu2009:pricing}). Here, we use this parameter to introduce a dependence of the risk aversion of the agent (the inverse of $\gamma$) on the possible default event. We will assume, in particular, that whenever a default event occurs, the agent becomes more risk averse,
\color{blue}
and we will quantify \emph{how much} she/he changes her/his preferences through a specific function.

To carry our program, let us simplify our setting, first. We are given a one-dimensional Brownian motion $W = (W_t)_{t \geq 0}$ and a random variable $\tau$ with values in $\R^+$, describing a \emph{default time}. As usual, we denote by $\bbF = (\cF_t)_{t \geq 0}$ the completed natural filtration generated by $W$ and call it the \emph{reference filtration}.
We define for all $t \geq 0$ the following families of $\sigma$-algebras:
\begin{itemize}
	\item $\cD_{t} \coloneqq \sigma(\ind_{\tau \leq s}, \, 0 \leq s \leq t)$;
	\item $\cG_t \coloneqq \bigcap_{s > t} \bigl(\cF_s \lor \cD_s\bigr)$;
\end{itemize}
and we call $\bbG \coloneqq (\cG_t)_{t \geq 0}$ the \emph{progressively enlarged filtration}.
We assume that $\tau$ is the first jump time of a Cox process (see, e.g., \citep{aksamitjeanblanc2017:filtrenl, bieleckirutkowski2002:creditrisk}), with $\bbF$-stochastic intensity $\lambda^0 = (\lambda^0_t)_{t \geq 0}$ of the form $\lambda^0_t = \nu(L_t), \, t \geq 0$, where $\nu(x) = \de^{-x}$ and $L = (L_t)_{t \geq 0}$ satisfies the SDE
\begin{equation*}
\dd L_t = \mu L_t \, \dd t + \sigma L_t \, \dd W_t,
\end{equation*}
with $\mu, \sigma > 0$ fixed parameters.
More precisely, we suppose that the probability space $(\Omega, \cA, \bbP)$ is rich enough to support a random variable $\eta$, independent of $\bbF$, having exponential distribution and mean $1$, and we define:
\begin{equation*}
\tau \coloneqq \inf \biggl\{t \in \R^+ \colon \int_0^t \nu(L_s) \, \dd s \geq \eta \biggr\}.
\end{equation*}
Notice that this construction of the default time $\tau$, sometimes called \emph{canonical}, guarantees that both the density hypothesis (Assumption~\ref{hp:density}) and the immersion hypothesis (Assumption~\ref{hp:martingale}) are satisfied (see \citep[Lemma 2.25 and Lemma 2.28]{aksamitjeanblanc2017:filtrenl}).
This intensity-based way of modeling a default time is called \emph{reduced-form approach} in credit risk modeling.
Moreover, in this way the $\bbG$-\emph{compensator} $\lambda_t \, \dd t$ of the random counting measure $\mu$ has $\bbG$-\emph{stochastic intensity} $\lambda = (\lambda_t)_{t \geq 0}$ given by
\begin{equation*}
\lambda_t = \lambda^0_t \ind_{t < \tau},
\end{equation*}
hence, given our construction, $\lambda$ is a bounded process.

We suppose that the driver $g$ of the BSDEJ~\eqref{eq:BSDEJ} does not depend on $(y,u)$ and is of the form $g(\omega, t, z) = \frac 12 z^2 f(t, \tau(\omega))$, where
\begin{equation*}
f(t, \theta) =
\left\{
\begin{aligned}
&1,	&\text{if } t \leq \theta,\\
&\frac{1}{\gamma(\theta)},	&\text{if } t > \theta,
\end{aligned}
\right.
\end{equation*}
and $\gamma \colon \R^+ \to (0,1)$ is a measurable function, that gives the risk tolerance parameter after a default event has occurred. Notice that its possible values are lower than the value of the risk tolerance prior to default, i.e., one (which is, of course, just a conventional value).

It is clear that this driver is $\cP(\bbG) \otimes \cB(\R)$-measurable, hence it admits the decomposition $g(t,z) = g^0(z) \ind_{t \leq \tau} + g^1(z, \tau) \ind_{t > \tau}$, where
\begin{align*}
g^0(z) &=  \frac 12 z^2, &
g^1(z,\theta) &= \frac{1}{2\gamma(\theta)} z^2.
\end{align*}
Moreover, from Proposition~\ref{prop:BSDEJquadexist} and Theorem~\ref{th:BSDEJquaduniq}, we have existence and uniqueness
of the solution $(Y,Z,U)$ to the BSDEJ~\eqref{eq:BSDEJ}. In particular, this entails that there exist unique solutions $(Y^0, Z^0)$, and, for each $\theta \in \R^+, \, \bigl(Y^1(\theta), Z^1(\theta)\bigr)$, to the Brownian BSDEs:
\begin{gather*}
Y_t^0 = -\xi^0 + \int_t^T g^0(Z_s^0) \, \dd s - \int_t^T Z_s^0 \, \dd W_s, \quad 0 \leq t \leq T,
\\
Y_t^1(\theta) = -\xi^1(\theta) + \int_t^T g^1(Z_s^1(\theta), \theta) \, \dd s
- \int_t^T Z_s^1(\theta) \, \dd W_s, \quad \theta \land T \leq t \leq T.
\end{gather*}
More explicitly, we get
\begin{gather}
Y_t^0 = \log \bbE[\de^{-\xi^0} \mid \cF_t], \quad 0 \leq t \leq T, \label{eq:Y0entr} \\
Y_t^1(\theta) = \gamma(\theta) \log \bbE[\de^{-\frac{\xi^1(\theta)}{\gamma(\theta)}} \mid \cF_t], \quad \theta \land T \leq t \leq T. \label{eq:Y1entr}
\end{gather}
Therefore, if we define the $\bbG$-dynamic risk measure $\rho_t(\xi) \coloneqq Y_t^{- \xi}, \, t \in [0,T], \, \xi \in \dL^\infty(\cG_T)$, from Proposition~\ref{prop:rhodecomp} we have that, for all $t \in [0,T]$
\begin{align*}
	\rho_t(\xi) &= \rho^0_t(\xi^0) \, \ind_{t < \tau} + \rho^1_t\bigl(\xi^1(\tau)\bigr) \, \ind_{t \geq \tau}, \\
	\rho^0_t(\xi^0) &= Y^0_t, \\
	\rho^1_t(\xi^1(\tau)) &= Y^1_t(\tau), \quad \text{on } \{t \geq \tau\},
\end{align*}

\normalcolor
Notice that $\rho$ is a filtration consistent (i.e., the zero-one law holds), translation invariant, monotone, convex and time-consistent $\bbG$-dynamic risk measure. It is immediate to see that the same properties hold for the $\bbF$-dynamic risk measure $\rho^0$ and, if properly defined on the event $\{t < \tau\}$, also for the $\bbH$-dynamic risk measure $\rho^1$.
Thanks to the results contained in~\citep{kobylanski2000:BSDEs} on quadratic BSDEs, one can prove that $\rho$ satisfies also the Fatou property.

The $\bbG$-dynamic risk measure that we considered so far has an intuitive financial meaning: if no default event occurs, the financial agent uses a reference entropic risk measure, in this case $\rho^0$. If a default occurs, she/he uses the entropic risk measure $\rho^1$ that reflects her/his changes in preferences, due to an increased risk aversion. This updating feature has interesting connections with backward and forward utilities (see \citep{musiela2007investment}) and dynamic stochastic utilities (or, better, with the corresponding conditional certainty equivalent - see \citep{frittelli-maggis2011}) where utility may change with state and time.

\color{blue}
To illustrate better this financial interpretation, suppose that a financial market is modeled under the framework described above and that a financial agent stipulates the following contract expiring at time $T$:
\begin{itemize}
\item The price at time $t=0$ of the contract is one monetary unit.
\item If there is no default during the contract, at the terminal time $T$ the agent receives the amount of money initially invested plus the value $W_T$ of the Brownian motion.
\item If there is default during the contract, at the terminal time $T$ the agent receives only the fraction $\displaystyle W_T \frac{T-\tau}{T}$ of the value $W_T$ depending on the time to maturity $T-\tau$ when the default $\tau$ occurs.
\end{itemize}
The financial position representing the profit and loss of the contract is, thus, described by the following random variable:
\begin{equation*}
\xi = W_T \, \ind_{T < \tau} + \Bigl(W_T\frac{T-\tau}{T} - 1\Bigr) \, \ind_{T \geq \tau}.
\end{equation*}
The proposed structure of the claim permits us, as we will see shortly, to get an explicit formula to evaluate its riskiness through the entropic risk measure introduced earlier.

It is clear that $\xi$ is $\cG_T$-measurable and that $\xi = \xi^0 \ind_{T < \tau} + \xi^1(\tau) \ind_{T \geq \tau}$, where $\xi^0 = W_T$ and $\xi^1(\theta) = W_T\frac{T-\theta}{T} - 1$, for all $0 \leq \theta \leq T$. From~\eqref{eq:Y0entr}-\eqref{eq:Y1entr} we get, after routine computations,
\begin{gather*}
\rho^0_t(\xi^0) = \frac{T-t}{2} - W_t, \\
\rho^1_t(\xi^1(\tau)) = 1 + \frac{(T-\tau)^2}{2\gamma(\tau)T^2}(T-t) - \frac{T-\tau}{T} W_t.
\end{gather*}
These explicit formulas allow us to simulate numerically the evaluation of the riskiness of this financial position and to highlight the relevance of our result in the context of financial markets where defaults may occur.

\begin{remark}
Before giving our numerical results, let us stress the fact that under the frameworks studied so far in the literature for dynamic risk measures induced by BSDEs, a claim such as the one proposed in this example cannot be evaluated. In fact, the random variable $\xi$ is expressed as a measurable function of an $\cF_T$-measurable random variable, $W_T$ in this case, and of the random time $\tau$, which is not independent of $W_T$. The last fact is crucial and prevents to use any known result, to the best of our knowledge. At the same time, it is clear that an investigation of dynamic risk measures in this context or, more generally, under the theory of progressive enlargement of filtration, should be considered with attention since in this framework many interesting results can be deduced and more refined financial market models can be studied.
\end{remark}

We provide below five simulated trajectories of the dynamic risk measure $\rho$ induced by the BSDEJ~\eqref{eq:BSDEJ} and of its components $\rho^0$ and $\rho^1$ (before the default time, the value of $\rho^1$ is conventionally fixed at zero). The data used are the following: $T=1$, $\mu=1$, $\sigma=0.1$, $\gamma(\theta) = 1-0.9 \, \de^{-\theta}$, with $1\,000$ time grid discretization steps. Notice, in particular, that the proposed $\gamma$ function describes a situation where, in case of default, the risk-tolerance parameter gets closer to the default-free one (equal to $1$, in this example), as the default time moves further away. The simulated default times are given under the picture.

\begin{figure}[H]
\includegraphics[scale=1]{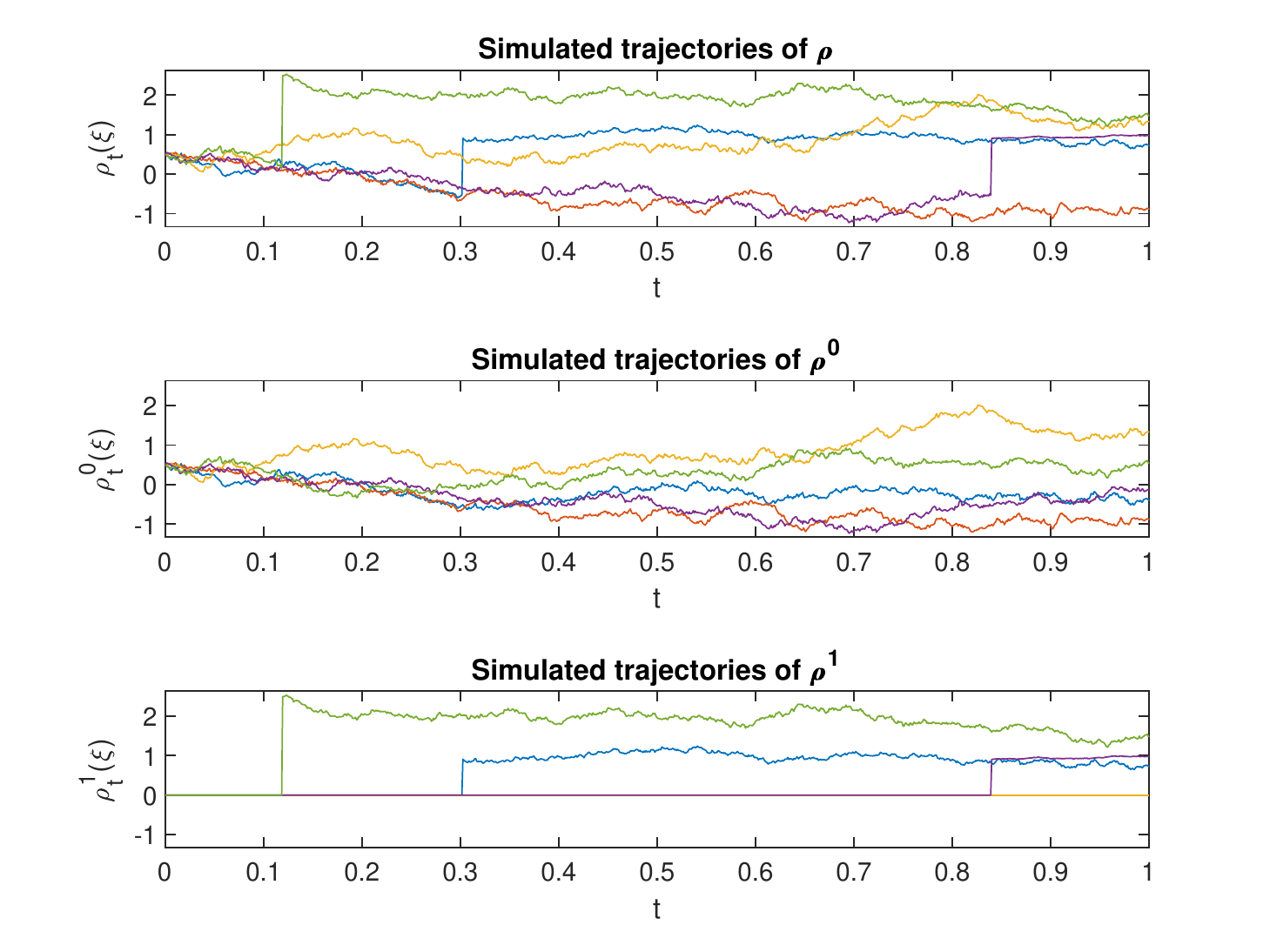}
\caption{Simulated trajectories of $\rho, \, \rho^0$ and $\rho^1$ corresponding to the evaluation of the financial claim $\xi = W_T \, \ind_{T < \tau} + \Bigl(W_T\frac{T-\tau}{T} - 1\Bigr) \, \ind_{T \geq \tau}$. The simulated default times are: $\tau = 0.1191$ (green trajectory), $\tau = 0.3023$ (blue trajectory), $\tau = 0.8398$ (purple trajectory), $\tau = 1.4381$, $\tau = 2.0087$.}
\end{figure}

It is easily seen that to have a tool to change, in a time-consistent way, the evaluation of riskiness in case of a default event is a crucial feature if risk measures are to be applied, for instance, in credit markets: there are scenarios under which the financial position $\xi$ is acceptable at least for a period of time before default and then becomes not acceptable afterwards. This behavior is due not only to the dependence of $\xi$ on the default time but also on the updating of the risk-tolerance parameter of the financial agent.

A second and final example provides an evaluation of the effect of the sole updating feature of the risk measure $\rho$. Suppose that the net financial position under evaluation is $\xi = W_T$. This is clearly $\cF_T$-measurable, hence the effect of the random event determined by $\tau$ (be it a default or a sudden new information, for instance) is tangible only through the risk-tolerance parameter of the financial agent.
In fact, in this case the evolution of the risk evaluations $\rho^0(\xi^0)$ and $\rho^1(\xi^1(\tau))$ is given by
\begin{gather*}
\rho^0_t(\xi^0) = \frac{T-t}{2} - W_t, \\
\rho^1_t(\xi^1(\tau)) = \frac{T-t}{2\gamma(\tau)} - W_t.
\end{gather*}

As before, we provide five simulated trajectories of the dynamic risk measure $\rho$ induced by the BSDEJ~\eqref{eq:BSDEJ} and of its components $\rho^0$ and $\rho^1$ (where, as in the previous example, the value of $\rho^1$ before the default time is conventionally fixed at zero). Data are equal to those of the previous simulations and the simulated default times are given under the picture.

\begin{figure}[H]
\includegraphics[scale=1]{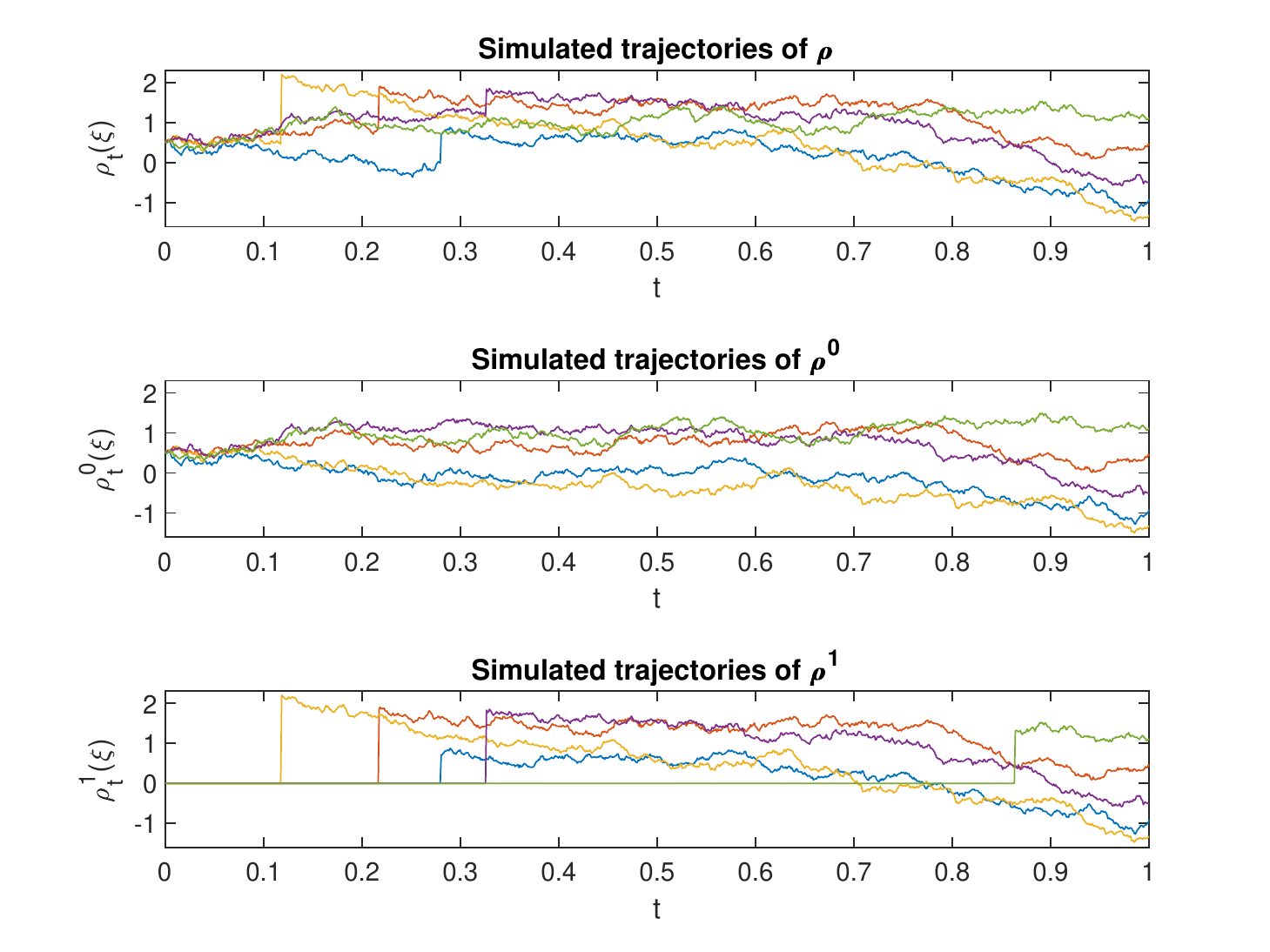}
\caption{Simulated trajectories of $\rho, \, \rho^0$ and $\rho^1$ corresponding to the evaluation of the financial claim $\xi = W_T$. The simulated default times are: $\tau = 0.1181$ (yellow trajectory), $\tau = 0.2172$ (red trajectory), $\tau = 0.2803$ (blue trajectory), $\tau = 0.3263$ (purple trajectory), $\tau = 0.8639$ (green trajectory).}
\end{figure}
Also in this case, it is plain to see that allowing the possibility of changing the risk-tolerance parameter upon the arrival of a default event or new information radically changes (as one expects) the evaluation of financial claims, even if they do not depend on the event itself. It can be also seen that this change fades away with time since, as we expect from theory, $\rho^0_T(\xi) = \rho^1_T(\xi)$ if $\xi$ is $\cF_T$-measurable.

\begin{remark}
The choice of $g^0, g^1$ in this example (and, correspondingly, the choice of $\rho^0, \rho^1$) and their financial interpretation and motivation, allows us to point out an interesting comparison between agents that react differently to the arrival of new information brought by the random time $\tau$.
\normalcolor
Let us consider a terminal condition $\xi \in \dL^\infty(\cF_T)$ and fix a time $t \in [0,T]$. For each $(\theta,e) \in [0,t) \times E$, we have that $g^0(z) \leq g^1(z, \theta, e)$ for any $z$ and, consequently, by the Comparison Theorem on Brownian BSDEs, $Y_t^{0 , - \xi} \leq Y_t^{1, - \xi}(\theta,e)$.
Reasoning as in the proof of Proposition~\ref{prop:BSDEJcont}, this implies that $Y_t^{0 , - \xi} \leq Y_t^{1, - \xi}(\tau,\zeta)$, on the event $t > \tau$, i.e., after the default. In turn, this entails that $\rho^0_t(\xi) \leq \rho^1_t(\xi)$, on the event $t > \tau$.

Suppose, now, to compare two distinct financial agents, A and B: both have access to default-related information but, while A updates her/his preferences according to the change of risk tolerance parameter described in this example, B sticks with the initial value of $\gamma$, thereby not modifying her/his preferences.
Following the above reasoning, at any time $t$ after the default A will adopt a more conservative risk measure than B, i.e., $\rho^1$ versus $\rho^0$, thus considering the default event as something adding riskiness to her/his financial position $\xi$. Notice that to perform this comparison, one must assume that $\xi$ is a financial claim that is not influenced by the default event, i.e., $\xi$ is $\cF_T$-measurable.

A different and opposite argument can be also provided by considering an alternative point of view. In fact, a more conservative \emph{reference} risk measure $\rho^0$ can be chosen in order to stress and penalize the uncertainty before default. In this case, the default event is interpreted as something adding information to the knowledge of the financial agent and, thus, reducing risk.
\end{remark}

\normalcolor

\normalcolor
\section{Some results on the dual representation of the induced risk measure} \label{sec: representation-rm}
The purpose of the last Section of the paper is to study the robust representation of the $\bbG$-dynamic risk measure $\rho$ introduced in~\eqref{eq:dynriskmeas}. We are able to give some results in this direction that only partially answer to the following question: is it possible to prove that the penalty term appearing in the dual representation of $\rho$ admits a decomposition similar to~\eqref{eq:rhodecomp}?

Without any additional assumption, we are not able to answer definitively to that question. However, with further hypotheses, we can provide a partial affirmative answer in the form of an inequality, as proved in Proposition~\ref{prop:alphadecomp}.

Throughout this Section, we assume that the hypotheses of Theorems~\ref{th:BSDEJexist} and~\ref{th:BSDEJcomp} are satisfied, together with properties~(\ref{property:01}), (\ref{property:cashadd}), (\ref{property:conv}) and~(\ref{property:Fatou}). We recall, also, that thanks to Theorem~\ref{th:BSDEJcomp}, the risk measure $\rho$ satisfies~(\ref{property:mon}), i.e., is monotone.

To begin with, we notice that the Fatou property entails the following robust representation for any $t \in [0,T]$ (see, e.g., \citep[Th. 11.2]{follmerschied2011:stochfin}):
\begin{gather}\label{eq:rhorobustrepr}
	\rho_t(\xi) = \esssup_{\bbQ \in \cQ_t} \bigl\{\bbE_\bbQ[-\xi \, | \, \cG_t] - \alpha_t(\bbQ) \bigr\}, \quad \xi \in \dL^{\infty}(\cG_T),
\end{gather}
where $\cQ_t = \{\bbQ \in \cM_1(\cG_T) \colon \bbQ \ll \bbP_{\vert \cG_T}, \, \bbQ_{\vert \cG_t} = \bbP\}$ and $\alpha_t$ is the \emph{penalty term}:
\begin{equation}\label{eq:penalty}
	\alpha_t(\bbQ) = \esssup_{\xi \in \dL^\infty(\cG_T)} \bigl\{\bbE_\bbQ[-\xi \, | \, \cG_t] - \rho_t(\xi)\bigr\} = \esssup_{\xi \in \dL^\infty(\cG_T), \, \rho_t(\xi) \leq 0} \bbE_\bbQ[-\xi \, | \, \cG_t], \quad \bbQ \in \cQ_t.
\end{equation}

We can prove, first, that $\rho$ satisfies another property, guaranteeing that the set of probability measures considered in~\eqref{eq:rhorobustrepr} does not change in time. Since $\rho$ is time consistent, this is true (see, e.g., \citep[Prop. 4.4]{follmer2006:convexrisk}) if
\begin{equation}\label{eq:relevance}
	\forall \xi \in \dL^\infty(\cG_T), \, \xi \geq 0, \, \exists \lambda > 0 \, \text{ s.t. } \rho_0(-\lambda \xi) > \rho_0(0).
\end{equation}

\begin{proposition}\label{prop:relevance}
	The risk measure $\rho_0 \colon \dL^\infty(\cG_T) \to \R$ defined in~\eqref{eq:dynriskmeas} satisfies~\eqref{eq:relevance}, i.e., the relevance property.
\end{proposition}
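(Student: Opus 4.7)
The plan is to reduce the relevance property to a strict positivity statement for the initial value of the BSDEJ. The zero-one law (Proposition~\ref{prop:01}(a)) gives $\rho_0(0) = 0$, so it suffices to prove that for every $\xi \in \dL^\infty(\cG_T)$ with $\xi \geq 0$ and $\bbP(\xi > 0) > 0$ there exists $\lambda > 0$ with $\rho_0(-\lambda \xi) > 0$. Since $\rho_0(-\lambda \xi) = Y_0^{\lambda \xi}$ and since $g(t,0,0,0) = 0$ makes $(0,0,0)$ a solution of~\eqref{eq:BSDEJ} with zero terminal datum, Theorem~\ref{th:BSDEJcomp} already yields $Y_0^{\lambda \xi} \geq 0$; the real task is to upgrade this to a strict inequality.

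I would obtain the strict inequality through the decomposition machinery of Section~\ref{sec: decomposition-risk-measures}. Since $\tau > 0$ almost surely, the decomposition~\eqref{eq:rhodecomp} at $t = 0$ reads $\rho_0(-\lambda \xi) = \rho^0_0(-\lambda \xi^0)$, so what must be shown strictly positive is the initial value of the coupled pair of Brownian BSDEs~\eqref{eq:BSDE0}--\eqref{eq:BSDE1}. The density hypothesis forces $\bbP(T < \tau \mid \cF_T) > 0$ almost surely, so $\bbP(\xi > 0) > 0$ splits into two cases: either $\bbP(\xi^0 > 0) > 0$, or $\xi^0 = 0$ a.s.\ but $\xi^1(\theta,e) > 0$ on a set of positive $\bbP \otimes \gamma_T(\theta,e)\, \dd \theta \, \dd e$-measure. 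In the first case, a strict comparison theorem for the Brownian BSDE~\eqref{eq:BSDE0}, with common driver against the null solution, gives $Y_0^{0, \lambda \xi^0} > 0$ for every $\lambda > 0$. In the second case, strict comparison applied first to~\eqref{eq:BSDE1} gives $Y^1(\theta, e) > 0$ for $(\theta,e)$ in a non-null set; the dependence of the driver in~\eqref{eq:BSDE0} on $Y_s^1(s,\cdot) - Y_s^0$ then turns the zero-terminal BSDE for $Y^0$ into a strict super-solution of the null BSDE, and another application of strict comparison gives $Y_0^0 > 0$.

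A more uniform alternative I would pursue in parallel is a linearization argument: since $g(t,0,0,0) = 0$ and the solutions $Y, Z$ are bounded, one can rewrite the BSDEJ as a linear equation with bounded coefficients and produce, via a Girsanov-type transformation, an equivalent measure $\bbQ^\lambda \sim \bbP$ such that $Y_0^{\lambda \xi} = \lambda\, \bbE^{\bbQ^\lambda}[\xi]$; equivalence of $\bbQ^\lambda$ and $\bbP$ together with $\bbP(\xi > 0) > 0$ then yields strict positivity for any $\lambda > 0$. The main obstacle in either route is verifying the validity of a strict comparison principle (respectively, of the linearization) in the quadratic setting of Proposition~\ref{prop:BSDEJquadexist} and Theorem~\ref{th:BSDEJquaduniq}: one needs appropriate monotonicity of $g$ in the jump argument (in the spirit of Royer's $\gamma$-condition) together with a local Lipschitz estimate along the bounded solution paths, which must be extracted carefully from the standing hypotheses. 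The case $\xi^0 \equiv 0$ is the subtlest, since strict positivity must propagate from~\eqref{eq:BSDE1} through the coupling in~\eqref{eq:BSDE0}, and relies crucially on $g^0$ genuinely depending on its last argument.
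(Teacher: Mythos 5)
There is a genuine gap, and it sits exactly where you park it: both of your routes hinge on tools that are not available under the standing hypotheses of Section~\ref{sec: properties-rm} and that you leave unverified. The paper only assumes the (non-strict) comparison Theorem~\ref{th:BSDEJcomp}; a \emph{strict} comparison principle for the quadratic Brownian BSDEs~\eqref{eq:BSDE1}--\eqref{eq:BSDE0}, or a Girsanov-type linearization yielding $Y_0^{\lambda\xi}=\lambda\,\bbE^{\bbQ^\lambda}[\xi]$, requires additional structure (local Lipschitz/BMO estimates along the solution, a Royer-type monotonicity in the jump argument, $g(t,0,0,0)=0$) that is nowhere imposed in Proposition~\ref{prop:BSDEJquadexist} or Theorem~\ref{th:BSDEJquaduniq}; even your preliminary claim that $(0,0,0)$ solves the zero-terminal BSDEJ uses $g(t,0,0,0)=0$, whereas Section~\ref{sec: properties-rm} only assumes the zero-one law for $\rho$. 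Deferring these verifications is deferring the entire content of the statement. Moreover, you are trying to prove strict positivity of $\rho_0(-\lambda\xi)$ for \emph{every} $\lambda>0$, which is more than~\eqref{eq:relevance} asks: relevance only requires \emph{some} $\lambda$, and this is precisely the leeway the paper exploits. Its proof is elementary: since $\tau>0$ a.s., $\rho_0(-\lambda\xi)=Y_0^0$ with terminal datum $\lambda\xi^0$; $Y_0^0$ is deterministic, so taking expectations in~\eqref{eq:BSDE0} annihilates the stochastic integral and gives
\begin{equation*}
Y_0^0=\lambda\,\bbE[\xi^0]+\bbE\Bigl[\int_0^T g^0\bigl(s,Y^0_s,Z^0_s,Y^1_s(s,\cdot)-Y^0_s\bigr)\,\dd s\Bigr],
\end{equation*}
and one simply chooses $\lambda$ large enough to dominate the finite drift term; no strict comparison and no measure change are needed.

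The case you flag as subtlest ($\xi^0\equiv 0$ with the mass of $\xi$ carried by $\{T\geq\tau\}$) cannot be closed by your argument at all without an extra hypothesis: you correctly observe that propagating positivity from $Y^1$ into $Y^0$ requires $g^0$ to genuinely depend on its last argument, but nothing in the paper guarantees this. In the entropic example of Section~\ref{sec:example} the driver does not depend on $u$, so $\xi^0=0$ forces $Y^0\equiv 0$ and $\rho_0(-\lambda\xi)=0$ for all $\lambda$ --- your mechanism has nothing to act on. (The paper's own proof sidesteps this configuration by asserting $\bbE[\xi^0]>0$, i.e., it treats the situation where $\xi^0$ is not a.s.\ null; but that does not rescue your route, which needs the $u$-dependence as an assumption.) In short: the approach is structurally different from the paper's expectation-plus-large-$\lambda$ argument, and as written it is incomplete on the two points that carry the proof --- the strict comparison/linearization machinery and the $\xi^0\equiv 0$ case.
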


\begin{proof}
Let us fix $\xi \in \dL^\infty(\cG_T)$, with $\xi \geq 0$ (excluding, clearly, the case $\xi = 0, \, \bbP$-a.s., for which~\eqref{eq:relevance} is never satisfied).
First of all, let us notice that $\rho_0(0) = 0$, thanks to the zero-one law. What we need to show, then, is that there exists $\lambda >0 $ such that:
\begin{equation*}
	\rho_0(-\lambda \xi) = Y_0 > 0,
\end{equation*}
where $Y$ is the first component of the solution to the BSDEJ~\eqref{eq:BSDEJ} with terminal condition $\lambda \xi$.
If we denote with $(Y^0, Z^0)$ the solution to the BSDE~\eqref{eq:BSDE0} with terminal condition $\lambda \xi^0$ (where $\xi^0$ is the random variable appearing in the decomposition~\eqref{eq:xidecomp}), since $\rho_0 = \rho^0_0$, we have to prove that $Y_0^0 > 0$ for some $\lambda > 0$,  We have that:
\begin{equation*}
	Y_0^0 = \lambda \xi^0 + \int_0^T g^0(s, Y^0_s, Z^0_s, Y^1_s(s, \cdot) - Y^0_s) \, \dd s - \int_0^T Z^0_s \, \dd W_s.
\end{equation*}
Taking expectations on both sides yields:
\begin{equation*}
	Y_0^0 = \lambda \bbE[\xi^0] + \bbE\biggl[\int_0^T g^0(s, Y^0_s, Z^0_s, Y^1_s(s, \cdot) - Y^0_s) \, \dd s\biggr].
\end{equation*}
Since $\xi^0$ is not zero $\bbP$-a.s., we have that $\bbE[\xi^0] > 0$. Moreover, since we are assuming existence and uniqueness of the solution to the Brownian BSDE, we have that the second expectation exists and is finite. Hence, if $\bbE\bigl[\int_0^T g^0(s, Y^0_s, Z^0_s, Y^1_s(s, \cdot) - Y^0_s) \, \dd s\bigr] \geq 0$, any $\lambda > 0$ guarantees that $Y^0_0 > 0$; otherwise, we can choose:
$\lambda > - \dfrac{1}{\bbE[\xi^0]} \bbE\bigl[\int_0^T g^0(s, Y^0_s, Z^0_s, Y^1_s(s, \cdot) - Y^0_s) \, \dd s\bigr]$. \qedhere
\end{proof}

Thanks to Proposition~\ref{prop:relevance} we obtain the following robust representation for $\rho$:
\begin{equation}\label{eq:rhorobustreprrel}
	\rho_t(\xi) = \esssup_{\bbQ \in \cQ} \bigl\{\bbE_\bbQ[-\xi \, | \, \cG_t] - \alpha_t(\bbQ) \bigr\}, \quad \xi \in \dL^{\infty}(\cG_T), \, t \in [0,T],
\end{equation}
where $\cQ \coloneqq \{\bbQ \in \cM_1(\cG_T) \colon \bbQ \sim \bbP_{\vert \cG_T}\}$.
We proceed, now, giving a decomposition of the conditional expectation appearing in~\eqref{eq:rhorobustreprrel}.


\begin{proposition}\label{prop:expectdecomp}
{\color{blue}
Let $\bbQ \in \cQ$ and define the $\cG_T$-measurable random variable $L \coloneqq \frac{\dd \bbQ}{\dd \bbP_{\vert \cG_T}}$, admitting decomposition $L = L^0 \ind_{T < \tau} + L^1(\tau,\zeta) \ind_{T \geq \tau}$. Then, for any $\xi \in \dL^\infty(\cG_T)$}
\begin{equation}\label{eq:Qexpectdecomp}
\bbE_\bbQ[-\xi \mid \cG_t] = \Phi^0 \ind_{t < \tau} + \Phi^1(\tau,\zeta) \ind_{t \geq \tau},
\end{equation}
where
\begin{equation}\label{eq:Phi01}
\begin{gathered}
\Phi^0 \coloneqq \frac{1}{Z_t^\bbP} \bbE\biggl[\int_0^{+\infty} \int_E \bigl\{ -\xi^0 L^0 \ind_{(T, +\infty)}(\theta) -\xi^1(\theta,e) L^1(\theta,e) \ind_{(t,T]}(\theta)\bigr\} \gamma_T(\theta, e) \, \dd \theta \, \dd e \mid \cF_t\biggr],
\\
\Phi^1(\theta,e) \coloneqq \frac{\bbE[-\xi^1(\theta,e) L^1(\theta,e) \gamma_T(\theta,e)\mid \cF_t]}{\gamma_t(\theta,e)},
\end{gathered}
\end{equation}
and $Z_t^\bbP \coloneqq \bbP(t < \tau \mid \cF_t)$ is the Az\'ema supermartingale associated with $\tau$.
\end{proposition}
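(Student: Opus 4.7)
The plan is to start by noting that $\bbE_\bbQ[-\xi \mid \cG_t]$ is $\cG_t$-measurable, so that Lemma~\ref{lemma:procdecomp}(a) immediately yields a decomposition of the form $\Phi^0 \ind_{t<\tau} + \Phi^1(\tau,\zeta)\ind_{t \geq \tau}$ for some $\cF_t$-measurable $\Phi^0$ and some $\cF_t \otimes \cB(\R^+)\otimes\cB(E)$-measurable $\Phi^1$. The work then reduces to identifying these two components separately on the two complementary events $\{t<\tau\}$ and $\{t \geq \tau\}$. By Bayes' formula I will reduce the computation to $\bbP$-conditional expectations of the $\cG_T$-measurable random variable $M \coloneqq -\xi L$, which itself admits the decomposition $M = M^0 \ind_{T<\tau} + M^1(\tau,\zeta)\ind_{T\geq \tau}$ with $M^0 = -\xi^0 L^0$ and $M^1(\theta,e) = -\xi^1(\theta,e)L^1(\theta,e)$.

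To identify $\Phi^1$, I will exploit Remark~\ref{rem:filtrations}, which gives $\bbG = \bbH$ on $\{t \geq \tau\}$. On this event $M = M^1(\tau,\zeta)$ since $\{t \geq \tau\} \subset \{T \geq \tau\}$, and $M^1$ is jointly $\cF_T \otimes \cB(\R^+)\otimes\cB(E)$-measurable. Assumption~\ref{hp:density} allows me to invoke Jacod's formula for initial enlargement, namely
\begin{equation*}
\bbE_\bbP[h(\tau,\zeta) \mid \cH_t] = \frac{\bbE_\bbP[h(\theta,e)\gamma_T(\theta,e) \mid \cF_t]}{\gamma_t(\theta,e)}\bigg|_{(\theta,e)=(\tau,\zeta)},
\end{equation*}
for any suitably measurable $h$. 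Applying this to $M^1$ produces the stated formula for $\Phi^1(\theta,e)$.

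To identify $\Phi^0$, I will apply the classical key lemma of progressive enlargement,
\begin{equation*}
\bbE_\bbP[Y \mid \cG_t]\,\ind_{t<\tau} = \frac{\bbE_\bbP[Y \ind_{t<\tau} \mid \cF_t]}{Z_t^\bbP}\,\ind_{t<\tau},
\end{equation*}
to $Y = M$. The product $M \ind_{t<\tau}$ splits as $M^0 \ind_{T<\tau} + M^1(\tau,\zeta)\ind_{t<\tau\leq T}$; conditioning first on $\cF_T$ via the tower property and then using Assumption~\ref{hp:density} in the form $\bbE_\bbP[f(\omega,\tau,\zeta)\mid\cF_T] = \int_{\R^+\times E} f(\omega,\theta,e)\gamma_T(\theta,e)\,\dd\theta\,\dd e$ for $\cF_T \otimes \cB(\R^+)\otimes\cB(E)$-measurable $f$, converts each piece into the integral form appearing in $\Phi^0$, with the indicators $\ind_{(T,+\infty)}(\theta)$ and $\ind_{(t,T]}(\theta)$ arising naturally. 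Collecting both identifications yields the claimed formula.

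The main obstacle I anticipate is bookkeeping: carefully justifying the measurability of $M^0$ and $M^1$ so that the density-hypothesis formulas apply, and verifying that the Bayes normalization is consistent with the denominators $Z_t^\bbP$ and $\gamma_t(\theta,e)$ appearing in the statement (this is where the interplay between $L$ and the $\bbF$-martingale property of $\gamma_\cdot(\theta,e)$ from \citep[Lemme (1.8)]{jacod1985:grossissement} must be used). Once this normalization is handled, the rest of the argument is a direct combination of the progressive-enlargement key lemma, Jacod's formula, and the tower property.
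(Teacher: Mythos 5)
Your plan is essentially the paper's own proof: decompose $\bbE_\bbQ[-\xi \mid \cG_t]$ via Lemma~\ref{lemma:procdecomp}, reduce to $\bbP$-expectations of $-\xi L$, then on $\{t<\tau\}$ apply the progressive-enlargement key lemma with the Az\'ema supermartingale $Z_t^\bbP$ followed by the tower property and Assumption~\ref{hp:density}, and on $\{t \geq \tau\}$ use the Jacod-type formula (equivalently, $\cG_t = \cH_t$ after $\tau$, which is exactly how the paper's cited lemma and its Remark~\ref{rem:Phi1} proceed). Only note that the denominators $Z_t^\bbP$ and $\gamma_t(\theta,e)$ arise solely from these two $\bbP$-enlargement formulas and not from any Bayes normalization: the paper, exactly like the statement, passes from $\bbE_\bbQ[\,\cdot\,\mid\cG_t]$ to $\bbE[\,\cdot\,L\mid\cG_t]$ with no extra factor $\bbE[L\mid\cG_t]$, so the "consistency check" you anticipate reduces to this identification and introduces nothing beyond what you already have.
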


\begin{proof}
Fix $\bbQ \in \cQ$ and $\xi \in \dL^\infty(\cG_T)$ and denote $L \coloneqq \frac{\dd \bbQ}{\dd \bbP_{\vert \cG_T}}$. Since the random variables $\xi$ and $L$ are $\cG_T$-measurable and $\bbE_\bbQ[-\xi \mid \cG_t]$ is $\cG_t$-measurable, we obtain from Lemma~\ref{lemma:procdecomp} the following decompositions
\begin{gather*}
\xi = \xi^0 \ind_{T < \tau} + \xi^1(\tau,\zeta) \ind_{T \geq \tau},\\
L = L^0 \ind_{T < \tau} + L^1(\tau,\zeta) \ind_{T \geq \tau},\\
\bbE_\bbQ[-\xi \mid \cG_t] = \Phi^0 \ind_{t < \tau} + \Phi^1(\tau,\zeta) \ind_{t \geq \tau},
\end{gather*}
where $\xi^0, \xi^1, L^0, L^1, \Phi^0, \Phi^1$ are appropriately defined as in~\eqref{eq:xidecomp}. We aim at identifying the last two objects.

On $\{t < \tau\}$ we have that
\begin{equation*}
\ind_{t < \tau} \bbE_\bbQ[-\xi \mid \cG_t] = \bbE_\bbQ[-\xi \ind_{t < \tau} \mid \cG_t] = \bbE[-(\xi^0 L^0 \ind_{T < \tau} + \xi^1(\tau,\zeta) L^1(\tau,\zeta) \ind_{T \geq \tau}) \ind_{t < \tau} \mid \cG_t].
\end{equation*}
Since the random variable inside the brackets is integrable, we get from~\citep[Lemma 2.9]{aksamitjeanblanc2017:filtrenl}
\begin{equation*}
\ind_{t < \tau} \bbE_\bbQ[-\xi \mid \cG_t] = \ind_{t < \tau} \frac{1}{Z_t^\bbP} \bbE[-(\xi^0 L^0 \ind_{T < \tau} + \xi^1(\tau,\zeta) L^1(\tau,\zeta) \ind_{T \geq \tau}) \ind_{t < \tau} \mid \cF_t].
\end{equation*}
Noticing that $\{T < \tau\} \subset \{t < \tau\}$ and using Assumption~\ref{hp:density} we get
\begin{align*}
&\mathrel{\phantom{=}} \bbE[-(\xi^0 L^0 \ind_{T < \tau} + \xi^1(\tau,\zeta) L^1(\tau,\zeta) \ind_{T \geq \tau}) \ind_{t < \tau} \mid \cF_t] \\
&= \bbE\bigl[\bbE[-(\xi^0 L^0 \ind_{T < \tau} + \xi^1(\tau,\zeta) L^1(\tau,\zeta) \ind_{t < \tau \leq T}) \mid \cF_T ] \mid \cF_t\bigr] \\
&= \bbE\biggl[\int_0^{+\infty} \int_E \bigl\{ -\xi^0 L^0 \ind_{(T, +\infty)}(\theta) -\xi^1(\theta,e) L^1(\theta,e) \ind_{(t,T]}(\theta)\bigr\} \gamma_T(\theta, e) \, \dd \theta \, \dd e \mid \cF_t\biggr]
\end{align*}
whence the definition of $\Phi^0$.

On $\{t \geq \tau\}$, since $\{t \geq \tau\}$ implies $\{T \geq \tau\}$ and is incompatible with $\{T < \tau\}$, we obtain
\begin{equation*}
\ind_{t < \tau} \bbE_\bbQ[-\xi \mid \cG_t] = \bbE_\bbQ[-\xi \ind_{t < \tau} \mid \cG_t] = \bbE[-\xi^1(\tau,\zeta) L^1(\tau,\zeta) \mid \cG_t] \, \ind_{t \geq \tau}.
\end{equation*}
Since $-\xi^1(\tau,\zeta) L^1(\tau,\zeta)$ is an $\cH_T$-measurable and $\bbP$-integrable random variable, from \citep[Lemma 2.10]{callegaro2013:carthaginian} we get
\begin{equation*}
\bbE[-\xi^1(\tau,\zeta) L^1(\tau,\zeta) \mid \cG_t] \, \ind_{t \geq \tau} =
\frac{\bbE[-\xi^1(\theta,e) L^1(\theta,e) \gamma_T(\theta,e)\mid \cF_t]}{\gamma_t(\theta,e)} \, \ind_{t \geq \tau}
\end{equation*}
whence the definition of $\Phi^1$.
\end{proof}

\begin{remark}\label{rem:Phi1}
Thanks to \citep[Eq. (2.2)]{callegaro2013:carthaginian}, we can write the second summand of~\eqref{eq:Qexpectdecomp} in a different way. In fact:
\begin{equation*}
\Phi^1(\tau,\zeta) = \frac{\bbE[-\xi^1(\theta,e) L^1(\theta,e) \gamma_T(\theta,e)\mid \cF_t]_{\vert (\theta,e) = (\tau,\zeta)}}{\gamma_t(\tau,\zeta)} = \bbE[-\xi^1(\tau,\zeta) L^1(\tau,\zeta) \mid \cH_t].
\end{equation*}
Therefore
\begin{align*}
&\mathrel{\phantom{=}} \ind_{t \geq \tau} \bbE_\bbQ[-\xi \mid \cG_t] = \ind_{t \geq \tau} \bbE[-\xi^1(\tau,\zeta) L^1(\tau,\zeta) \mid \cH_t]
\\
&= \bbE[ -\xi^1(\tau,\zeta) \ind_{t \geq \tau} \bigl(L^0 \ind_{T < \tau} + L^1(\tau,\zeta) \ind_{T \geq \tau} \bigr) \mid \cH_t] = \bbE[ -\xi^1(\tau,\zeta) L \ind_{t \geq \tau}\mid \cH_t]
\\
&= \bbE_{\bbQ^1}[-\xi^1(\tau,\zeta) \mid \cH_t] \, \ind_{t \geq \tau},
\end{align*}
where $\bbQ^1$ is the probability measure on $(\Omega, \cH_T)$ defined by $\dd \bbQ^1 = L \dd \bbP_{\vert \cH_T}$.
\end{remark}

To prove the main (and final) result of this Section, i.e., an inequality satisfied by the penalty term appearing in~\eqref{eq:penalty}, we need to properly restrict the class of probability measures $\cQ$.
Let us define $\cQ^{\hookrightarrow}$ the subset of $\cQ$ containing all probability measures $\bbQ \in \cQ$ satisfying the immersion property, i.e., such that $\bbF$ is immersed in $\bbG$ under $\bbQ$.

\begin{theorem}\label{prop:alphadecomp}
For any $t \in [0,T]$ and any $\bbQ \in \cQ^{\hookrightarrow}$ the following inequality holds for the penalty term given by~\eqref{eq:penalty}
\begin{equation}\label{eq:alphadisug}
	\alpha_t(\bbQ) \geq k_t(\bbQ) \alpha^0_t(\bbQ^0) \ind_{t < \tau} + \alpha^1_t(\bbQ^1) \ind_{t \geq \tau},
\end{equation}
where $\bbQ^0$ and $\bbQ^1$ are probability measures on $(\Omega, \cF_T)$ and $(\Omega,\cH_T)$, respectively, such that
$$\dd \bbQ^0 = \bbE[L \mid \cF_T] \, \dd \bbP_{\vert \cF_T}, \quad \dd \bbQ^1 = L \dd \bbP_{\vert \cH_T}, \quad L \coloneqq \frac{\dd \bbQ}{\dd \bbP_{\vert \cG_T}},$$
$k_t(\bbQ)$ is an $\cF_t$-measurable random variable, depending on $\bbQ$ and satisfying $k_t(\bbQ) \geq 1$ $\bbP$-a.s., and
\begin{gather}
\alpha^0_t(\bbQ^0) \coloneqq \esssup_{\xi^0 \in \dL^\infty(\cF_T)} \bigl\{\bbE_{\bbQ^0}[-\xi^0 \mid \cF_t] - \rho^0_t(\xi^0) \bigr\}, \label{eq:alpha0}
\\
\alpha^1_t(\bbQ^1) \coloneqq \esssup_{\xi^1 \in \dL^\infty(\cH_T)} \bigl\{\bbE_{\bbQ^1}[-\xi^1(\tau, \zeta) \mid \cH_t] - \rho^1_t(\xi^1(\tau, \zeta)) \bigr\}. \label{eq:alpha1}
\end{gather}
Moreover, $\alpha_t(\bbQ) \ind_{t \geq \tau}= \alpha^1_t(\bbQ^1) \ind_{t \geq \tau}$.
\end{theorem}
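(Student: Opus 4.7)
The plan is to derive both parts of the statement by splitting the essential supremum defining $\alpha_t(\bbQ)$ according to the events $\{t\geq\tau\}$ and $\{t<\tau\}$, exploiting the decompositions of $\rho_t(\xi)$ provided by Proposition~\ref{prop:rhodecomp} and of $\bbE_\bbQ[-\xi\mid\cG_t]$ given by Proposition~\ref{prop:expectdecomp}. On the event $\{t\geq\tau\}$, Remark~\ref{rem:Phi1} yields the clean identity $\Phi^1(\tau,\zeta)=\bbE_{\bbQ^1}[-\xi^1(\tau,\zeta)\mid\cH_t]$, so that
\[
\bigl(\bbE_\bbQ[-\xi\mid\cG_t]-\rho_t(\xi)\bigr)\ind_{t\geq\tau}
=\bigl(\bbE_{\bbQ^1}[-\xi^1(\tau,\zeta)\mid\cH_t]-\rho^1_t(\xi^1(\tau,\zeta))\bigr)\ind_{t\geq\tau}.
\]
Since by~\citep[Prop. 2.7]{callegaro2013:carthaginian} every $X\in\dL^\infty(\cH_T)$ is of the form $x(\tau,\zeta)$ for some $\cF_T\otimes\cB(\R^+)\otimes\cB(E)$-measurable kernel $x$, and since such a kernel can always be glued to an arbitrary $\xi^0\in\dL^\infty(\cF_T)$ to build a $\cG_T$-measurable test function, the essential supremum over $\dL^\infty(\cG_T)$ on this event coincides with the one over $\dL^\infty(\cH_T)$, delivering $\alpha_t(\bbQ)\ind_{t\geq\tau}=\alpha^1_t(\bbQ^1)\ind_{t\geq\tau}$. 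This simultaneously proves the final equality of the theorem and the $\ind_{t\geq\tau}$-summand of~\eqref{eq:alphadisug}.

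For the inequality on $\{t<\tau\}$, I would lower-bound $\alpha_t(\bbQ)\ind_{t<\tau}$ by restricting the essential supremum to the smaller class $\dL^\infty(\cF_T)\subset\dL^\infty(\cG_T)$. For $\xi\in\dL^\infty(\cF_T)$ the natural representation in the sense of Lemma~\ref{lemma:procdecomp} is $\xi^0=\xi$ (with $\xi^1(\theta,e)\equiv\xi$), so $\rho_t(\xi)\ind_{t<\tau}=\rho^0_t(\xi)\ind_{t<\tau}$. Because $\bbQ\in\cQ^{\hookrightarrow}$, the immersion of $\bbF$ in $\bbG$ under $\bbQ$ gives $\bbE_\bbQ[-\xi\mid\cG_t]=\bbE_\bbQ[-\xi\mid\cF_t]$, and because $\bbQ|_{\cF_T}=\bbQ^0$ the latter equals $\bbE_{\bbQ^0}[-\xi\mid\cF_t]$. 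Thus, on $\{t<\tau\}$,
\[
\bbE_\bbQ[-\xi\mid\cG_t]-\rho_t(\xi)=\bbE_{\bbQ^0}[-\xi\mid\cF_t]-\rho^0_t(\xi),
\]
and taking the essential supremum of the right-hand side over $\xi\in\dL^\infty(\cF_T)$ produces exactly $\alpha^0_t(\bbQ^0)\ind_{t<\tau}$. This already yields~\eqref{eq:alphadisug} with the admissible but trivial choice $k_t(\bbQ)\equiv 1$.

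To extract a genuinely non-trivial $\cF_t$-measurable constant $k_t(\bbQ)\geq 1$ I would enlarge the test class back to all of $\dL^\infty(\cG_T)$ and inspect formula~\eqref{eq:Phi01}: the component $\xi^0$ is paired with the pre-default density $L^0 Z_T^\bbP$ and $\xi^1$ with the post-default density $L^1\gamma_T$, while $\rho^0_t(\xi^0)$ depends only on $\xi^0$. The free parameter $\xi^1$ can therefore be tuned so that $\Phi^0-\rho^0_t(\xi^0)$ dominates a scalar multiple of $\bbE_{\bbQ^0}[-\xi^0\mid\cF_t]-\rho^0_t(\xi^0)$, with pre-factor of the form $k_t(\bbQ)=\bbE[L\mid\cF_t]/Z_t^\bbP$ or a closely related ratio of densities. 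The main obstacle is precisely to verify that this pre-factor satisfies $k_t(\bbQ)\geq 1$ $\bbP$-a.s.: this is where the immersion property of $\bbQ$, as opposed to mere equivalence with $\bbP$, enters in an essential way, since it forces the Az\'ema supermartingale of $\tau$ under $\bbQ$ to be controlled by the $\bbF$-conditional expectation of $L$, upgrading the qualitative inequality produced in the previous step into the sharper multiplicative estimate~\eqref{eq:alphadisug}.
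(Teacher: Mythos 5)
Your argument splits into two halves, and they fare differently against the paper. For the post-default part you do essentially what the paper does: test the penalty with gluings $\xi^0\ind_{T<\tau}+x(\tau,\zeta)\ind_{T\geq\tau}$, invoke Remark~\ref{rem:Phi1} to identify the conditional expectation, and match the two essential suprema in both directions (the paper glues with $\xi^0=0$ and uses the zero-one law to absorb the indicators), yielding $\alpha_t(\bbQ)\ind_{t\geq\tau}=\alpha^1_t(\bbQ^1)\ind_{t\geq\tau}$; this is correct and not a different route. For the pre-default part your route is genuinely different and more elementary: restricting the supremum to $\xi\in\dL^\infty(\cF_T)$ and using the characterization of immersion under $\bbQ$ (namely $\bbE_\bbQ[X\mid\cG_t]=\bbE_\bbQ[X\mid\cF_t]$ for $\cF_T$-measurable $X$) together with $\bbQ_{\vert\cF_T}=\bbQ^0$ gives $\alpha_t(\bbQ)\ind_{t<\tau}\geq\alpha^0_t(\bbQ^0)\ind_{t<\tau}$ directly, i.e.\ \eqref{eq:alphadisug} with the admissible choice $k_t(\bbQ)\equiv1$; since the statement only asserts the existence of an $\cF_t$-measurable $k_t(\bbQ)\geq1$, this proves the theorem as written while bypassing Proposition~\ref{prop:expectdecomp}, conditional Bayes and the Az\'ema supermartingales altogether. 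The paper also tests with $\cF_T$-measurable positions, but it pushes the conditional expectation through $\Phi^0$ of Proposition~\ref{prop:expectdecomp}, applies Bayes and the immersion identity $\bbQ(t<\tau\mid\cF_T)=\bbQ(t<\tau\mid\cF_t)$, and then restricts to nonnegative $\xi^0$ (equivalently $\rho_t(\xi^0)\leq0$) so as to extract the explicit dilation factor $k_t(\bbQ)=\bigl(\tfrac{Z_t^\bbQ}{Z_t^\bbP}-1\bigr)\ind_{\{Z_t^\bbQ/Z_t^\bbP\geq1\}}+1=\max\{Z_t^\bbQ/Z_t^\bbP,1\}$. What the paper's longer computation buys is precisely this explicit factor, to which the remark following the theorem attaches a financial interpretation; what your argument buys is brevity and independence from the normalization of $L$ on $\cG_t$ that is implicitly used in the proof of Proposition~\ref{prop:expectdecomp}.

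Be aware, though, that your closing paragraph about upgrading to a ``genuinely non-trivial'' $k_t(\bbQ)$ is only a sketch: the paper's factor is obtained not by tuning $\xi^1$ but by restricting to acceptable positions $\xi^0\geq0$, and it is $\max\{Z_t^\bbQ/Z_t^\bbP,1\}$ rather than a ratio of the form $\bbE[L\mid\cF_t]/Z_t^\bbP$. If one regards the explicit form of $k_t(\bbQ)$ as part of the theorem's content, that sharper multiplicative estimate is not established by your proposal; for the literal statement, however, your argument suffices.
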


\begin{remark}
Recall that, by (i) of~\citep[Prop. 2.7]{callegaro2013:carthaginian}, any $\cH_T$-measurable random variable $\xi^1$ is of the form $\xi^1 = x(\tau, \zeta)$, for some $\cF_T \otimes \cB(\R^+) \otimes \cB(E)$-measurable map $x$. To ease the notation, we will still use the symbol $\xi^1$ to denote the map $x$, as did in~\eqref{eq:alpha1}.
\end{remark}

\begin{proof}
Fix $t \in [0,T]$ and $\bbQ \in \cQ^{\hookrightarrow}$. By definition of $\alpha_t(\bbQ)$ and thanks to Propositions~\ref{prop:rhodecomp} and~\ref{prop:expectdecomp}, we have that
\begin{equation*}
	\alpha_t(\bbQ) \geq \bbE_\bbQ[-\xi \mid \cG_t] - \rho_t(\xi) = \bigl[\Phi^0 - \rho^0_t(\xi^0)\bigr] \ind_{t < \tau} + \bigl[\Phi^1(\tau,\zeta) - \rho^1_t(\xi^1(\tau,\zeta))\bigr] \ind_{t \geq \tau}
\end{equation*}
for any $\xi \in \dL^\infty(\cG_T)$. We concentrate, first, on the second part of the decomposition.

Since for any $\xi^1 \in \dL^\infty(\cH_T)$ we have that $\xi^1 \ind_{T \geq \tau} \in \dL^\infty(\cG_T)$, keeping in mind Remark~\ref{rem:Phi1}, we can write
\begin{eqnarray*}
\alpha_t(\bbQ) \ind_{t \geq \tau} & \geq & \bigl[\bbE_{\bbQ^1}[-\xi^1(\tau, \zeta) \ind_{T \geq \tau} \mid \cH_t] - \rho^1_t(\xi^1(\tau, \zeta) \ind_{T \geq \tau})\bigr] \ind_{t \geq \tau}
\\
&=& \bigl[\bbE_{\bbQ^1}[-\xi^1(\tau, \zeta) \ind_{T \geq \tau} \mid \cH_t] - \rho_t(\xi^1 \ind_{T \geq \tau})\bigr] \ind_{t \geq \tau}.
\end{eqnarray*}
Since $\{t \geq \tau\} \in \cG_t \subset \cH_t$, $\{t \geq \tau\} \subset \{T \geq \tau\}$ and recalling that $\rho$ satisfies the zero-one law by the assumptions listed at the beginning of this Section, we get
\begin{equation*}
\begin{aligned}
&\mathrel{\phantom{=}} \bigl[\bbE_{\bbQ^1}[-\xi^1(\tau, \zeta) \ind_{T \geq \tau} \mid \cH_t] - \rho_t(\xi^1 \ind_{T \geq \tau})\bigr] \ind_{t \geq \tau}
\\
&=
\bbE_{\bbQ^1}[-\xi^1(\tau, \zeta) \ind_{T \geq \tau} \ind_{t \geq \tau} \mid \cH_t] - \rho_t(\xi^1 \ind_{T \geq \tau} \ind_{t \geq \tau})
\\
&=
\bbE_{\bbQ^1}[-\xi^1(\tau, \zeta) \ind_{t \geq \tau} \mid \cH_t] - \rho_t(\xi^1 \ind_{t \geq \tau})
\\
&=
\bigl[\bbE_{\bbQ^1}[-\xi^1(\tau, \zeta) \mid \cH_t] - \rho_t(\xi^1)\bigr] \ind_{t \geq \tau}
=
\bigl[\bbE_{\bbQ^1}[-\xi^1(\tau, \zeta) \mid \cH_t] - \rho^1_t(\xi^1(\tau,\zeta))\bigr] \ind_{t \geq \tau}.
\end{aligned}
\end{equation*}
Therefore, $\alpha_t(\bbQ) \ind_{t \geq \tau} \geq \bigl[\bbE_{\bbQ^1}[-\xi^1(\tau, \zeta) \mid \cH_t] - \rho^1_t(\xi^1(\tau,\zeta))\bigr] \ind_{t \geq \tau}$
and taking the essential supremum on the r.h.s. with respect to all possible $\xi^1 \in \dL^\infty(\cH_T)$ we get
\begin{equation*}
\alpha_t(\bbQ) \ind_{t \geq \tau} \geq \alpha^1_t(\bbQ^1) \ind_{t \geq \tau}.
\end{equation*}
We can also get the reversed inequality, putting together Proposition~\ref{prop:expectdecomp}, the zero-one law property and the fact that $\dL^\infty(\cH_T) \supset \dL^\infty(\cG_T)$, as follows:
\begin{align*}
\alpha_t(\bbQ) \ind_{t \geq \tau}
&=
\esssup_{\xi \in \dL^\infty(\cG_T)} \bigl\{\bbE_\bbQ[-\xi \, | \, \cG_t] - \rho_t(\xi)\bigr\} \ind_{t \geq \tau}
\\
&= \esssup_{\xi \in \dL^\infty(\cG_T)} \bigl\{\bbE_\bbQ[-\xi \, | \, \cG_t] \ind_{t \geq \tau} - \rho_t(\xi) \ind_{t \geq \tau}\bigr\}
\\
&= \esssup_{\xi \in \dL^\infty(\cG_T)} \bigl\{\bbE_{\bbQ^1}[-\xi^1(\tau,\zeta) \, | \, \cH_t] \ind_{t \geq \tau} - \rho^1_t(\xi^1(\tau,\zeta)) \ind_{t \geq \tau}\bigr\}
\\
&= \esssup_{\xi \in \dL^\infty(\cG_T)} \bigl\{\bbE_{\bbQ^1}[-\xi^1(\tau,\zeta)  \ind_{t \geq \tau}\, | \, \cH_t] - \rho^1_t(\xi^1(\tau,\zeta) \ind_{t \geq \tau})\bigr\}
\\
&\leq \esssup_{\xi \in \dL^\infty(\cH_T)} \bigl\{\bbE_{\bbQ^1}[-\xi \ind_{t \geq \tau}\, | \, \cH_t] - \rho^1_t(\xi \ind_{t \geq \tau})\bigr\}
\\
&= \esssup_{\xi \in \dL^\infty(\cH_T)} \bigl\{\bbE_{\bbQ^1}[-\xi \, | \, \cH_t] - \rho^1_t(\xi)\bigr\} \ind_{t \geq \tau} = \alpha^1_t(\bbQ^1) \ind_{t \geq \tau}.
\end{align*}
Hence, we have that $\alpha_t(\bbQ) \ind_{t \geq \tau} = \alpha^1_t(\bbQ^1) \ind_{t \geq \tau}$.

We now turn our attention on the first part of the decomposition. Let us denote by $Z_t^\bbQ \coloneqq \bbQ(t < \tau \mid \cF_t)$ and $Z_t^\bbP \coloneqq \bbP(t < \tau \mid \cF_t)$ the $\bbQ$- and $\bbP$-Az\'ema supermartingales associated with $\tau$, respectively. For any $\xi^0 \in \dL^\infty(\cF_T)$, having in mind the definition of $\Phi^0$ given in~\eqref{eq:Phi01}, we get
\begin{align*}
&\mathrel{\phantom{\geq}} \alpha_t(\bbQ) \ind_{t < \tau}
\\
&\geq \biggl(\frac{1}{Z_t^\bbP} \bbE\biggl[\int_0^{+\infty} \int_E -\xi^0 \bigl\{ L^0 \ind_{(T, +\infty)}(\theta) + L^1(\theta,e) \ind_{(t,T]}(\theta)\bigr\} \gamma_T(\theta, e) \, \dd \theta \, \dd e \mid \cF_t\biggr] \biggr.
\\
&\mathrel{\phantom{\geq}} \qquad \biggl. - \rho^0_t(\xi^0)\biggr) \ind_{t < \tau}
\\
&= \biggl(\frac{1}{Z_t^\bbP} \bbE\bigl[ \bbE[-\xi^0 \{L^0 \ind_{T < \tau} + L^1(\tau,\zeta) \ind_{t < \tau} \ind_{T \geq \tau}\} \mid \cF_T] \mid \cF_t \bigr] - \rho^0_t(\xi^0)\biggr) \ind_{t < \tau}
\\
&= \biggl(\frac{1}{Z_t^\bbP} \bbE\bigl[-\xi^0 \bbE[\ind_{t < \tau} \{L^0 \ind_{T < \tau} + L^1(\tau,\zeta) \ind_{T \geq \tau}\} \mid \cF_T] \mid \cF_t \bigr] - \rho^0_t(\xi^0)\biggr) \ind_{t < \tau}
\\
&= \biggl(\frac{1}{Z_t^\bbP} \bbE\bigl[-\xi^0 \bbE[L \ind_{t < \tau} \mid \cF_T] \mid \cF_t \bigr] - \rho^0_t(\xi^0)\biggr) \ind_{t < \tau}
\\
&= \biggl(\bbE\biggl[-\xi^0 \frac{\bbE[L \ind_{t < \tau} \mid \cF_T]}{Z_t^\bbP} \biggm| \cF_t \biggr] - \rho^0_t(\xi^0)\biggr) \ind_{t < \tau}.
\end{align*}
where we used the fact that $\{T < \tau\} \subset \{t < \tau\}$.

Notice that $\bbE[L \mid \cF_T] > 0$, since $L > 0$, therefore using the conditional Bayes' formula and \citep[Lemma 3.8]{aksamitjeanblanc2017:filtrenl} (where the immersion property is essential), we get
\begin{equation*}
\bbE[L \ind_{t < \tau} \mid \cF_T] = \bbQ(t < \tau \mid \cF_T) \, \bbE[L \mid \cF_T] = \bbQ(t < \tau \mid \cF_t) \, \bbE[L \mid \cF_T],
\end{equation*}
hence
\begin{equation*}
\alpha_t(\bbQ) \ind_{t < \tau} \geq \biggl(\frac{Z_t^\bbQ}{Z_t^\bbP} \bbE\bigl[-\xi^0 \bbE[L \mid \cF_T] \mid \cF_t \bigr] - \rho^0_t(\xi^0)\biggr) \ind_{t < \tau} = \biggl(\frac{Z_t^\bbQ}{Z_t^\bbP} \bbE_{\bbQ^0}\bigl[-\xi^0 \!\mid\! \cF_t \bigr] - \rho^0_t(\xi^0)\biggr) \ind_{t < \tau}.
\end{equation*}

From now on, let us restrict our choice for $\xi^0$ to the positive ones. Otherwise said, since $\rho_t$ is normalized thanks to the zero-one law, we only consider $\xi^0$ such that $\rho_t(\xi^0) \leq 0$. Then, easy computations show that
\begin{equation*}
\alpha_t(\bbQ) \ind_{t < \tau} \geq \biggl[\biggl(\frac{Z_t^\bbQ}{Z_t^\bbP} - 1\biggr) \ind_{\Bigl\{\frac{Z_t^\bbQ}{Z_t^\bbP} \geq 1\Bigr\}} + 1\biggr] \bigl(\bbE_{\bbQ^0}\bigl[-\xi^0 \mid \cF_t \bigr] - \rho^0_t(\xi^0)\bigr) \ind_{t < \tau}.
\end{equation*}
Since the term in square brackets is positive and $\rho_t(\xi^0) \leq 0$, we get
\begin{equation*}
\alpha_t(\bbQ) \ind_{t < \tau} \geq \biggl[\biggl(\frac{Z_t^\bbQ}{Z_t^\bbP} - 1\biggr) \ind_{\Bigl\{\frac{Z_t^\bbQ}{Z_t^\bbP} \geq 1\Bigr\}} + 1\biggr] \bbE_{\bbQ^0}\bigl[-\xi^0 \mid \cF_t \bigr] \ind_{t < \tau}
\end{equation*}
and taking the essential supremum with respect to all $\xi^0 \geq 0$, i.e., w.r.t. all $\xi^0 \in \dL^\infty(\cF_T)$ such that $\rho_t(\xi^0) \leq 0$, we have
\begin{equation*}
\alpha_t(\bbQ) \ind_{t < \tau} \geq \biggl[\biggl(\frac{Z_t^\bbQ}{Z_t^\bbP} - 1\biggr) \ind_{\Bigl\{\frac{Z_t^\bbQ}{Z_t^\bbP} \geq 1\Bigr\}} + 1\biggr] \alpha^0_t(\bbQ^0) \ind_{t < \tau} = k_t(\bbQ) \alpha^0_t(\bbQ^0) \ind_{t < \tau}.
\end{equation*}
\end{proof}

\begin{remark}\label{rem:Lrefmeas}
It is worth noticing that a similar conclusion holds for all $\bbQ \in \cQ$ such that the Radon-Nikod\'ym density $L$ is $\cF_T$-measurable. In this case the immersion property is automatically satisfied, i.e., $\bbQ \in \cQ^{\hookrightarrow}$, and moreover $k_t(\bbQ) = 1$ for all $t \in [0,T]$ (see \citep[Prop. 3.6 (c)]{aksamitjeanblanc2017:filtrenl}. To be precise, a careful inspection of the previous proof reveals that it is not necessary, in this case, to invoke the immersion property at all).
\end{remark}

\begin{remark}
It should be clear from Proposition~\ref{prop:alphadecomp} and Remark~\ref{rem:Lrefmeas} that, without any additional structure on the set $\cQ$ of equivalent probability measures with respect to $\bbP_{\vert \cG_T}$, there is no hope of reaching a decomposition of the penalty term $\alpha_t(\cdot)$ similar to that provided for the dynamic risk measure $\rho$ in~\eqref{eq:rhodecomp}.

In fact, while $\alpha_t(\bbQ)$ reduces to $\alpha_t ^1(\bbQ^1)$ on the event $\{t \geq \tau \}$ since $\bbG$ and $\bbH$ coincide, on $\{t < \tau \}$ it exceeds $\alpha_t ^0(\bbQ^0)$ multiplied by a dilation factor greater than or equal to $1$ and depending on $\bbQ$ and $\tau$. From a mathematical point of view, this is due to the fact that we do not know anything \emph{a priori} about how the equivalent probability measure $\bbQ$ modifies the probability of the default-related event $\{t < \tau\}$, i.e., how it modifies the $(\bbP, \bbF)$-Az\'ema supermartingale $\bbP(t < \tau \mid \cF_t)$.

Also from a financial point of view, this fact (hence also inequality~\eqref{eq:alphadisug}) is quite reasonable and can be interpreted as follows. Since the higher is the penalty associated to a scenario the lower is the confidence one has on that scenario, the inequality on $\alpha_t(\bbQ)$ and, in particular, the presence of the factor $k_t(\bbQ) \geq 1$ can be seen as a further penalization on the part of the penalty term before $\tau$ in order to take into account the lack of information. On the contrary, the penalty term $\alpha_t(\bbQ)$ reduces to $\alpha_t ^1(\bbQ^1)$ once default is occurred.
\end{remark}

\medskip

{\color{blue}\noindent \textbf{Acknowledgments.} The authors wish to thank Claudio Fontana for pointing out~\citep{song2014:optionalsplitting} and for a useful discussion on the optional splitting formula.}

\bibliographystyle{plainnat}
\bibliography{Bibliography}

\end{document}